\newtheorem{rem}{Remark}
\newtheorem{theorem}{Theorem}
\newtheorem{conjecture}{Conjecture}
\newcommand{\Ret}{\text{Ret}}
\newcommand{\DD}{\mathbb{D}}
\newcommand{\MM}{\mathbb{M}}
\title{Second Order Unconditional Positive Preserving Schemes for Non-equilibrium Reacting Flows with Mass and Mole Balance} 
\author{
Jianhua Pan\\
William. G. Lowrie Department \\
of Chemical and Biomolecular Engineering,\\
Koffolt Labs, The Ohio State University\\
 \texttt{pan.796@osu.edu}\\
\And
Yu-yen Chen\\
William. G. Lowrie Department \\
of Chemical and Biomolecular Engineering,\\
Koffolt Labs, The Ohio State University\\
\texttt{chen.5797@buckeyemail.osu.edu}\\
\And 
Liang-Shih Fan\\
William. G. Lowrie Department \\
of Chemical and Biomolecular Engineering,\\
Koffolt Labs, The Ohio State University\\
\texttt{fan.1@osu.edu}\\
}
\date{}
\begin{document}
\maketitle

\begin{abstract}
    In this study, a family of second order process based modified Patankar Runge-Kutta schemes is proposed with both the mass and mole maintained in balance while preserving the positivity of density and pressure with the time step determined by convection terms. The accuracy analysis is conducted to derive the sufficient and necessary conditions for the Runge-Kutta and Patankar coefficients. Coupling with the finite volume method, the proposed schemes are extended to Euler equations with non-equilibrium reacting source terms. Benchmark tests are given to prove the prior order of accuracy and validate the positive-preserving property for both density and pressure.
\end{abstract}

\keywords{
non-equilibrium flow \and positive preserving \and modified Patankar scheme \and mass conservation \and mole balance \and chemical reaction.
}

\section{Introduction}
Reactive non-equilibrium flows are ubiquitous in a broad range of disciplines of science and engineering. Typical examples of such phenomena are given by the reaction engineering problems, where the fluid flow transports with the occurring of chemical reactions , such as the multiphase fluidized bed reactors \cite{zhou2013syngas,zhou2014second} and combustion engines \cite{lieuwen2012unsteady}. Besides reaction engineering, further examples can be found in geochemistry \cite{burchard2005application}, meteorology \cite{kuhn1997one}, etc. The one-dimensional model for such non-equilibrium flows can be expressed by a system of reactive Euler equations, which consists a set of hyperbolic equations with source terms,
\begin{equation}
    U_t + F(U)_x = S(U)\,,
    \label{eq:euler1}
\end{equation}
where
\begin{equation}
    \begin{aligned}
        U & = \left(
        \rho\,, m\,, E\,, \rho Y_1\,,\rho Y_2,\cdots,\rho Y_N
        \right)^T\,,\\
        F(U) &= \left(m, \rho u^2 +p, (E+p) u, \rho u Y_1, \rho u Y_2,\cdots \rho u Y_N\right)^T\,,\\
        S(U) & = \left(0,0,0,\omega_1,\omega_2,\cdots,\omega_N\right)^T\,,
    \end{aligned}
    \label{eq:euler1detailed}
\end{equation}
and
\begin{equation}
    \begin{aligned}
    &\sum_{i=1}^N{Y_i} = 1,\,\,m = \rho u,\\
    &E = \frac{1}{2} \rho u^2 + {\left[\rho Y_i e_i(T) + \rho Y_i h^0_i\right]}\,,\\
    \end{aligned}
    \label{eq:previousE}
\end{equation}
where $\rho$ is the density for the mixture of the gas; $Y_i$ is the mass fraction for $i$-th specie; $p$ is the pressure, and $T$ is the temperature. Additionally, $h_i^0$ is the energy corresponding to the chemical bonds and is constant. The internal energy $e_i$ and the pressure $p$ are correlated by the state of equation,
\[
    \rho Y_i e_i = \rho e = \frac{p}{\gamma-1}\,,
\]
where $e$ and $\gamma$ are the specific internal energy and specific heat ratio denoted for the mixture, respectively. The source terms satisfy the property of conservation, i.e., $\sum_{i=1}^N{\omega_i} = 0$. Note that due to the conservation property in Eq. (\ref{eq:euler1}), one relation among the $N+1$ mass equations for both the mixture and the $N$ species is not independent, which is presented here just for completeness. Also note that the Einstein summation rule is applied to $\rho Y_ie_i$ in Eq. (\ref{eq:previousE}). In the remainder of this paper, unless otherwise noted, the Einstein notation is applied to all subscripts which appear more than once in the same term and such an index is noted as the dummy index. When a dummy index is enclosed by a bracket, it is not applied by the Einstein summation rule, e.g., $\rho Y_{(i)} e_{(i)}$ which simply refers to the internal energy per volume for specie $i$.

One of the main difficulties in solving systems of Eq. (\ref{eq:euler1}) is how to preserve the density, pressure and temperature in a physically bounded space. This difficult is especially significant with the presence of source terms, which greatly increase the stiffness of Eq. (\ref{eq:euler1}). For reactions that are constituted by a complex reaction network wherein multiple species and reaction steps are involved, the disparate timescale of the reactions could greatly increase the stiffness of the system.

A general framework which is with high order of accuracy and is able to keep the positivity of hyperbolic systems has been proposed by the series work of Shu, Zhang, Xing and their collaborators \cite{zhang2010positivity,zhang2010maximum,xing2010positivity,zhang2012positivity,zhang2012maximum,zhang2013maximum}. This technique has been successfully applied to shallow water equations \cite{xing2010positivity}, convection-diffusion equations \cite{zhang2013maximum}, Navier-Stokes equations \cite{zhang2012positivity}, etc., in the context of Discontinuous Galerkin (DG) method, finite volume method, and finite difference method. The basic idea of this framework is utilizing the convex combination in both spatial and temporal domains. Firstly, one need to prove the positive-preserving property of a basic forward Euler scheme combing with a well-designed slope limiter which is applied to keep both the accuracy and boundedness of the space distribution of variables. Secondly, the strong-stability-preserving (SSP) Runge-Kutta (RK) scheme which can be expressed by a convex combination of the forward Euler scheme is utilized to realize the high-order of accuracy in time. In \cite{zhang2010positivity}, Zhang and Shu extended the idea to hyperbolic systems with reactions under the framework of DG method, which is however suffered from the limited timestep when the source terms become extremely stiff.

Several schemes have already been proposed to relax the constraints on the size of time step and preserve the positivity of solutions when solving stiff equations, including the implicit-explicit (IMEX) \cite{chertock2015steady,chertock2015well,huang2017second,hu2018asymptotic, lee2020stability} RK method, the exponential integration method \cite{pope1963exponential,huang2018bound,Du2019}, and the Patankar method \cite{patankar2018numerical, burchard2003high,Bruggeman2007,Broekhuizen2008,Huang2019a,Huang2019}. In IMEX-RK method, the convection part and the stiff source term are discretized in an explicit and implicit manner, respectively. The implicit treatment of stiff source terms eliminates the constraints on time step and preserves the positivity of desired variables. Chertock, Cui, Kurganov and Wu \cite{chertock2015steady} developed a family of second order sign-preserving IMEX-RK methods for systems of ordinary differential equations (ODEs) with a stiff damping term. Following this approach, the IMEX-RK method was applied to shallow water equations \cite{chertock2015well}, the Kerr-Debye relaxation system \cite{huang2017second}, the stiff BGK equations \cite{hu2018asymptotic}, the Cahn-Hilliard equation \cite{lee2020stability}, etc. However, the main drawback of the IMEX-RK method is that it is difficult to be extended to systems with multiple destructions and productions while maintaining the positivity of the desired variables in a conserved and consistent way. The idea of exponential integration method can be dated back to 1960s \cite{pope1963exponential}. In this method, the stiff source term is integrated accurately with the aid of exponential function, which also retains the positivity of desired variables. Recent work regarding the hyperbolic systems comprises those of Huang and Shu \cite{huang2018bound} and Du and Yang \cite{Du2019}. In Huang and Shu \cite{huang2018bound}, a family of bound-preserving modified exponential RK-DG schemes was developed to solve scalar hyperbolic equations with stiff source terms. In Du and Yang \cite{Du2019}, a third-order conservative positive-preserving and steady-state-preserving exponential integration scheme was developed which was applied to multi-species and multi-reaction chemical reactive flows.

Similar to the IMEX method, the Patankar method introduces an implicit modification to the stiff source terms to achieve the unconditional positivity \cite{patankar2018numerical}; note that the "unconditional" implying that the stiff source terms do not impose constraints to the time step. Several modified Patankar methods have been further proposed to ensure the conservation \cite{burchard2003high,Bruggeman2007,Broekhuizen2008}, to improve the order of accuracy \cite{kopecz2018unconditionally,offner2020arbitrary}, and to integrate with SSP-RK schemes \cite{Huang2019,Huang2019a} which are popular in the simulation of hyperbolic systems. In Euler equations with multiple chemical reactions, both the mass conservation of the species and the mole balance of each involved element must be preserved to obtain physically reasonable solutions. The "mole balance of each involved element" denotes that the ratio of the destruction or production rate (with unit $\mathrm{mol/(m^3\cdot s)}$) of different species in a reaction must equal to the ratio of the corresponding stoichiometry coefficients. However, for some versions of the Patankar-type schemes \cite{Huang2019,Huang2019a}, only the mass conservation is kept while the mole balance of the involved elements is violated, since the systems are expressed in a pair-wise production-destruction (PD) way. For some versions of Patankar schemes, e.g., \cite{Bruggeman2007, Broekhuizen2008}, to keep both the mass and mole balance, a common Patankar modification is applied to all the reactions. However, with such a common Patankar coefficient, a very stiff source term would cause all other reactions being reduced to almost zero even they are totally irrelevant. Additionally, in the Euler system with non-equilibrium reactions, the positivity of pressure cannot be guaranteed \cite{Huang2019a,Huang2019} when the reaction is endothermic for all the available Patankar-type schemes.

In this study, we construct a second order Patankar RK scheme which is able to keep both the mass and mole balance, while preserving the positivity of pressure and is applicable to the hyperbolic systems. To achieve the mass and mole balance simultaneously, a process based modified Patankar RK scheme (PMPRK) is proposed herein. Different from the schemes proposed by Huang, Zhao and Shu \cite{Huang2019,Huang2019a} for PD systems, the Patankar modifications in this study are applied to reaction by reaction. Thus, the mass conservation and mole balance are maintained at the same time. An accuracy analysis is conducted to determine the constraints for the coefficients of the SSP-RK schemes and the detailed form of the Patankar coefficients. The positivity of the proposed scheme is validated through a large number of numerical samples. Several theoretical proofs are given under simplified conditions. To keep the positivity of the pressure in the Euler system, a simple but an efficient technique is introduced. At the end, several benchmark tests are given to validate the proposed Patankar scheme in Sec. \ref{sec:numerical_examples} and conclusions are provided in Sec. \ref{sec:conclusion}.

\section{Second Order Process Based Modified Patankar RK Scheme}
In order to illustrate the idea of PMPRK scheme, consider the following ordinary differential equations with only source terms,
\begin{equation}
    \begin{aligned}
    &\frac{\mathrm{d} c_i}{\mathrm{d} t} = \frac{\omega_{(i)}}{M_{(i)}} = \sum_{k=1}^K{\frac{\omega_{k(i)}}{M_{(i)}}} = R_k \lambda_{ki}\,,\\
    &\lambda_{ki}M_i = 0\,,
    \end{aligned}\label{eq:ode}
\end{equation}
where $c_i$ is the mole concentration for the $i$-th species and $c_i = \frac{\rho Y_{(i)}}{M_{(i)}}$, $\omega_{ki}$ is the mass variance (production or destruction) for the $i$-the specie due to reaction $k$, $R_k$ is the reaction rate for the $k$-th chemical reaction, $\lambda_{ki}$ is the stoichiometry coefficient for specie $i$ in reaction $k$, and $M_i$ is the mass per mole for specie $i$, with $i \in \{1,\cdots, N\}$ and $k \in \{1,2,\cdots, K\}$.  The mass conservation and mole balance can be expressed as
\begin{align}
&\sum_{i=1}^{N}{\omega_{ki}} = \lambda_{(k)i}M_i R_{(k)} = 0 \,,\label{eq:ode_mass_conservation}\\
&\frac{\omega_{(k)i}}{\omega_{(k)j}} = \frac{\lambda_{(k)(i)}M_{(i)}}{\lambda_{(k)(j)}M_{(j)}}\,,\forall \, i,j \in \left\{1,2,\cdots,N\right\}\,,\label{eq:ode_mole_balance}
\end{align}
$\forall\, k \in \left\{1,2,\cdots,K\right\}$. 

The explicit second order SSP-RK scheme in the Shu-Osher form for Eq. (\ref{eq:ode}) is
    \begin{align}
        c_i^{(0)} & = c_i^n\,,\\
        c_i^{(1)} & = \alpha_{10}c_i^{(0)} + \Delta t \beta_{10}R_k^{(0)} \lambda_{ki}\,,\\
        c_i^{(2)} & = \alpha_{20}c_i^{(0)} +\alpha_{21}c_i^{(1)} + \Delta t \left(\beta_{20}R_k^{(0)} +\beta_{21}R_k^{(1)}\right)\lambda_{ki}\,,\\
        c_i^{n+1} & = c_i^{(2)}\,,
    \end{align}
where $\alpha_{10} = 1$ should be firstly satisfied, see e.g. Hairer et al. \cite{hairer1991solving}. In the following discussion, to make the equations more compact, $\alpha_{10} = 1$ is applied implicitly. To apply the positive-preserving technique proposed by Gottlieb and Shu \cite{gottlieb2011strong}, all th e coefficients of $\alpha$ and $\beta$ should be positive such that $c_i^{(k)}\,,k=1,2$ can be written as a convex combination of forward Euler schemes.

In order to keep the positivity for concentrations $c_i^{(k)},\,k=0,1,2$, the Patankar modification is applied to the source terms. Moreover, to keep the mole balance, the Patankar modification to the reaction rate should be process based, i.e., each $R_k$ adopts a same Patankar weight in all specie equations as follows,
    \begin{align}
        c_i^{(0)} & = c_i^n\,,\\
        c_i^{(1)} & = c_i^{(0)} + \Delta t \beta_{10}R_k^{(0)} \lambda_{ki}\chi_k^{(1)}\,,\label{eq:MPRK_s1}\\
        c_i^{(2)} & = \alpha_{20}c_i^{(0)} +\alpha_{21}c_i^{(1)} + \Delta t \left(\beta_{20}R_k^{(0)} +\beta_{21}R_k^{(1)}\right)\lambda_{ki}\chi_k^{(2)}\,,\label{eq:MPRK_s2}\\
        c_i^{n+1} & = c_i^{(2)}\,,
    \end{align}
where $\chi_k^{(1)}$ and  $\chi_k^{(2)}$ are functions defined by
\[
    \begin{aligned}
    \chi_k^{(1)}& = \chi_k^{(1)}\left(c_i^{(0)}, c_i^{(1)}| i = 1,2,\cdots N\right)\,,\\
\chi_k^{(2)} & = \chi_k^{(2)}\left(c_i^{(0)}, c_i^{(1)}, c_i^{(2)}| i = 1,2,\cdots N\right)\,.
    \end{aligned}
\] 
The introduction of $\chi_k^{(j)}\,,j=1,2$ makes Eqs. (\ref{eq:MPRK_s1}-\ref{eq:MPRK_s2}) implicit, and in turn makes it possible for the positive preserving. Additionally, since the Patankar coefficients are process based, for the $k$-th reaction, the numerical source terms satisfy
\begin{align} 
    &\sum_{i=1}^{N}{\omega_{ki}} = \lambda_{(k)i}M_i R_{(k)}\chi_{(k)}^{(s)} = \left(\lambda_{(k)i}M_i\right) R_{(k)}\chi_{(k)}^{(s)}  = 0\,,s=1,2\,,\\
    &\frac{\omega_{(k)i}}{\omega_{(k)j}} = \frac{\lambda_{(k)(i)}M_{(i)}R_{(k)}\chi_{(k)}^{(s)}}{\lambda_{(k)(j)}M_{(j)}R_{(k)}\chi_{(k)}^{(s)}} = \frac{\lambda_{(k)(i)}M_{(i)}}{\lambda_{(k)(j)}M_{(j)}}\,,s=1,2\,,\forall \,i,j\in\left\{1,2,\cdots,N\right\}\,,
\end{align}
which is in accordance with Eqs. (\ref{eq:ode_mass_conservation}-\ref{eq:ode_mole_balance}), i.e., the mass conservation and mole balance are kept. 

Let us first discuss the sufficient and necessary conditions for the PMPRK scheme to achieve the prior order of accuracy, and the detailed expressions for the Patankar coefficients $\chi_k^{(j)}\,,j=1,2$ will be provided later. Since the proposed PMPRK is just a minor adjustment of the original scheme, the truncation error of the modification coefficient $\chi_k^{(1)}$ and $\chi_k^{(2)}$ is assumed to have the following form,
\begin{eqnarray}
\chi_k^{(1)}  & = 1 + X_k \Delta t + O({\Delta t}^2)\,,\label{eq:chi1}\\
\chi_k^{(2)}  & = 1 + Y_k \Delta t + O({\Delta t}^2)\,.\label{eq:chi2}
\end{eqnarray}
To fulfill the requirements for the order of accuracy, there should exist certain constraints on $X_k$ and $Y_k$.

Substituting Eq. (\ref{eq:chi1}) into Eq. (\ref{eq:MPRK_s1}), we obtain
\begin{equation}
        c_i^{(1)} = c_i^{(0)} + \Delta t \beta_{10}R_k^{(0)} \lambda_{ki}(1 + X_k \Delta t )+ O({\Delta t}^3)\,.
        \label{eq:expand_c1}
\end{equation}
Meanwhile, expand the expression of $c_i^{(2)}$ near time $t^{(0)}$ and substitute Eqs. (\ref{eq:chi2}) into it,
\begin{equation}
    \begin{aligned}
        c_i^{(2)} 
        & = \,\alpha_{20}c_i^{(0)} +\alpha_{21}c_i^{(1)} \\
        & + \Delta t \left\{\beta_{20}R_k^{(0)}
         +\beta_{21}\left[R_k^{(0)}
        + R_{k,j}^{(0)}(c_j^{(1)} - c_j^{(0)})
        \right]\right\}\lambda_{ki}
        \cdot(1 + Y_k \Delta t ) + O({\Delta t}^3)\,,
    \end{aligned}
        \label{eq:expand_c2}
\end{equation}
where the subscript after the comma in $R_{k,j}$ denotes the index of the specie to which the operation of partial derivative is applied to, i.e.,
\begin{equation}
\begin{aligned}
R_{k,j} = \frac{\partial R_k}{\partial c_j}\,.
\end{aligned}
\end{equation}
Substitute Eq. (\ref{eq:expand_c1}) into Eq. (\ref{eq:expand_c2}), the final expression of $c_i^{(2)}$ can be obtained.

On the other hand, the Taylor expansion of exact solution $\widetilde{c}_i$ near $c_i^{(0)}$ is
\begin{equation}
    \begin{aligned}
    \widetilde{c}_i & = c_i^{(0)} +R_k \lambda _{ki}\Delta t + \frac{1}{2} R_{k,j}R_m\lambda _{mj}\lambda _{ki}\Delta t^2 
    + O(\Delta t^3)\,.
    \end{aligned}
\end{equation}
Subtracting $\widetilde{c}_i$ from $c_i^{(2)}$, the following equation is obtained,
\begin{equation}
    \begin{aligned}
    c_i^{(2)} - \widetilde{c}_i & = \theta_1 c_i^{(0)} + \theta_2 \Delta t + \theta_3 \Delta t^2  + O(\Delta t^3)\,,
    \end{aligned}
\end{equation}
where
\begin{equation}
    \begin{aligned}
    \theta_1 & = \alpha _{20}+\alpha _{21} - 1\,,\\
\theta_2 & = \left[ (\alpha _{21}  \beta _{10}+\beta _{20}+\beta _{21})-1\right]R_k \lambda_{ki}\,,\\
    \theta_3 &  = \left[\alpha _{21} \beta _{10} X_k+\left(\beta _{20}+\beta _{21}\right) Y_k\right]R_k \lambda _{ki}+ \left(\beta _{10} \beta _{21}-\frac{1}{2}\right)R_m \lambda _{m,j} R_{k,j} \lambda _{k,i}\,.
    \end{aligned}
\end{equation}

In order to achieve a prior second order of accuracy, $\theta_k,\,k=1,2,3$ should all be $0$. Thus, the sufficient and necessary conditions for the second order PMPRK scheme is
\begin{equation}
    \begin{aligned}
        & \alpha_{10} = 1,\, \alpha_{20} + \alpha_{21} = 1,\,\\
        & \alpha _{21}  \beta _{10}+\beta _{20}+\beta _{21} = 1,\,\beta_{20}\beta_{21} = \frac{1}{2}\,,\\
    \end{aligned}
    \label{eq:rk_order_requirements0}
\end{equation}
\begin{align}
    &\alpha _{21} \beta _{10} X_k+\left(\beta _{20}+\beta _{21}\right) Y_k =0\,.
    \label{eq:rk_order_requirements1}
\end{align}

Equation (\ref{eq:rk_order_requirements0}) is the order requirements of the original second order RK scheme. Equation (\ref{eq:rk_order_requirements1}) is for the Patankar modification and is consistent with the form derived in Huang and Shu \cite{Huang2019a} for the PD systems.

\section{Construction of $\chi_k^{(1)}$ and $\chi_k^{(2)}$}
The key novelty of this work is that the Patankar modification in the proposed PMPRK scheme is process based and applied to the source terms reaction by reaction, and thus should be the same for all the species involved in the same reaction.

An intuitive construction of $\chi_k^{(1)}$ and $\chi_k^{(2)}$ is
    \begin{align}
        \chi_k^{(1)} & = \left(\prod_{\varepsilon\in \Ret(k)}\left.\frac{c_\varepsilon^{(1)}}{\pi_\varepsilon}\right.\right)^{q_k}\,,\label{eq:chi1_detailed}\\
        \chi_k^{(2)} & = \left(\prod_{\varepsilon\in \Ret(k)}\left.\frac{c_\varepsilon^{(2)}}{\tau_\varepsilon}\right.\right)^{q_k}\,,\label{eq:chi2_detailed}
    \end{align}
where $\Ret(k)$ denotes the set of reactants (the species which are destructed) in the $k$-th reaction; $q_k$ is a real number locates in $(0,+\infty)$. In this study, it is assumed that a specie cannot be both a reactant and a production. If a specie does show up in both side of the reaction, the side with smaller stoichiometry coefficient is to be cancelled. $\pi_\varepsilon$ is a function of $c_\varepsilon^{(0)}$; $\tau_\varepsilon$ is a function of $c_\varepsilon^{(0)}$, $ \pi_\varepsilon$ and $c_\varepsilon^{(1)}$. 

The form of $\chi_k^{(j)},j=1,2$ only involves the reactants of reaction $k$; therefore, if the destructed reactant approaches $0$, $\chi_k^{(j)}, j=1,2$ approaches zero, thereby playing as a scaling factor for reaction $k$. As will be shown later, numerical investigation of both randomly generated samples and the benchmark tests both demonstrate the effectiveness of the proposed form of $\chi_k^{(j)},j=1,2$.

Before discussing the positivity of the scheme, let us first determine the expression of $\pi_\varepsilon$ and $\tau_\varepsilon$ which can preserve the prior order of accuracy. The most straightforward design of $\pi_\varepsilon$ is like that of the work in Huang and Shu \cite{Huang2019a} and Huang et al.\cite{Huang2019},
\begin{equation}
    \pi_\varepsilon = c_\varepsilon^{(0)}\,.
\end{equation}
Thus, we have
\begin{equation}
    \chi_k^{(1)} = 1 + O(\Delta t)\,.\label{eq:zero_order_truncation}
\end{equation}
The substitution of Eq. (\ref{eq:zero_order_truncation}) into Eq. (\ref{eq:MPRK_s1}) yields
\begin{equation}
    c_i^{(1)} = c_i^{(0)} + \Delta t \beta_{10} R_k \lambda_{ki} + O(\Delta t^2)\,.
    \label{eq:ci1_dt}
\end{equation}
Then, substituting Eq. (\ref{eq:ci1_dt}) into Eq. (\ref{eq:chi1_detailed}), the first order expansion of $\chi_k^{(1)}$ becomes
\begin{equation}
    \chi_k^{(1)} = \left(\prod_{\varepsilon \in \Ret(k)}\frac{c_\varepsilon^{(1)}}{c_\varepsilon^{(0)}}\right)^{q_k} = 1 + \Delta t q_k \beta_{10} \sum_{\varepsilon \in \Ret(k)}{\frac{R_m \lambda_{m\varepsilon}}{c_\varepsilon^{(0)}}} + O(\Delta t^2) \,.
    \label{eq:xi1_dt}
\end{equation}
Thus, $X_k$ is expressed as
\begin{equation}
    \begin{aligned}
    X_k & = q_k \beta_{10} \sum_{\varepsilon \in \Ret(k)}{\frac{R_m \lambda_{m\varepsilon}}{c_\varepsilon^{(0)}}}\,.
    \end{aligned}
    \label{eq:xkuk}
\end{equation}
From Eq. (\ref{eq:rk_order_requirements1}), $Y_k$ is obtained as
\begin{equation}
    Y_k = -\frac{\alpha_{21} \beta_{10}}{\beta_{20} + \beta_{21}} q_k \beta_{10} \sum_{\varepsilon \in \Ret(k)}{\frac{R_m \lambda_{m\varepsilon}}{c_\varepsilon^{(0)}}}\,.
\end{equation}
Since $\Ret(k)$ is an arbitrary set containing the indices of the reactants, we have
\begin{align}
    & \frac{c_i^{(2)}}{\tau_i} = 1-\frac{\alpha_{21} \beta_{10}}{\beta_{20} + \beta_{21}} \beta_{10} {\frac{R_k \lambda_{ki}}{c_i^{(0)}}}\Delta t + O(\Delta t^2)\,.
\end{align}
Thus, the first order expansion of $\frac{\tau_i}{c_i^{(0)}}$ is
\begin{align}
     \frac{\tau_i}{c_i^{(0)}} & = \frac{c_i^{(2)}/c_i^{0}}{c_i^{(2)}/\tau_i} = \frac{1+ \frac{R_k \lambda_{ki}}{c_i^{(0)}} \Delta t +O(\Delta t^2)}{1-\frac{\alpha_{21} \beta_{10}}{\beta_{20} + \beta_{21}} \beta_{10} {\frac{R_k \lambda_{ki}}{c_i^{(0)}}}\Delta t + O(\Delta t^2)}\nonumber\\
     & = 1 + \left(1+\frac{\alpha_{21} \beta_{10}}{\beta_{20} + \beta_{21}} \beta_{10}\right) {\frac{R_k \lambda_{ki}}{c_i^{(0)}}}\Delta t + O(\Delta t^2)\,.
\end{align}
Like the work in Huang and Shu \cite{Huang2019a}, we can construct the $\tau_i$ in the form as the following equation,
\begin{equation}
    \frac{\tau_i}{c_i^{(0)}} = \left(\frac{c_i^{(1)}}{c_i^{(0)}}\right)^s\,.
\end{equation}
Then the parameter $s$ is determined as
\begin{equation}
    s = \left(\frac{1}{\beta_{10}}+\frac{\alpha_{21} \beta_{10}}{\beta_{20} + \beta_{21}} \right)\,.\label{eq:s}
\end{equation}
In fact, $\tau_i$ can be constructed in any form which can recover its first order expansion and can be expressed as a convex combination of several positive terms. Equation (\ref{eq:s}) is consistent with the form in Huang and Shu \cite{Huang2019}. But the analysis shown in this study reveals that Eq. (\ref{eq:s}) does not only apply for the linear Patankar coefficients as in the PD systems, but also suitable for the nonlinear Patankar coefficients as proposed in Eqs. (\ref{eq:chi1_detailed}-\ref{eq:chi2_detailed}).


\begin{rem}
    The analysis shows that the real number $q_k$ in Eqs. (\ref{eq:chi1_detailed}) and (\ref{eq:chi2_detailed}) does not affect the prior order of accuracy and is thus a free parameter. However, according to the Taylor expansion of $\chi_k^{(1)}$ and $\chi_k^{(2)}$, a larger $q_k$ usually corresponds to a larger truncation error. Additionally, through numerical investigations in Sec. \ref{sec:general_positivity} with randomly generated samples, with the form of constant $q$ for all $\chi_k$, e.g., $q=1$, if the number of reactants increases, the cut-off error due to the limited bytes of floating numbers makes the calculations of Patankar coefficients difficult and makes it hard to converge for the Newton iterations, which is utilized to solve the implicit function of Eqs. (\ref{eq:MPRK_s1}-\ref{eq:MPRK_s2}). Thus, in the current work, we choose $q_k = \frac{1}{\#\Ret(k)}$, where $\#\Ret(k)$ denotes the number of reactants for the $k$-th reaction.
\end{rem}

\section{Positivity of the PMPRK Scheme}
\label{sec:positivity_discussion}
For the PMPRK scheme, a system of implicit equations needs to be solved. Given only the destructed reactants are involved in the expression of $\chi_k^{(j)},j=1,2$; $c_i^{\left(j\right)},j=0,1$ and $\Delta tR_{\left(k\right)}\lambda_{\left(k\right)i}$ in Eqs. (\ref{eq:MPRK_s1}-\ref{eq:MPRK_s2}), $\pi_i$ and $\tau_i,\,i=1,2,\cdots,N$ in Eqs. (\ref{eq:chi1_detailed}-\ref{eq:chi2_detailed}) are all non-negative numbers, the solving of implicit Eqs. (\ref{eq:MPRK_s1}-\ref{eq:MPRK_s2}) is equivalent to solving Eq. (\ref{eq:conjecture}).
\begin{conjecture}
    \label{conjecture}
    Given $a_i$, $a_i^*$, $b_{ki}$ and $q_k$ as constants satisfying $a_i >0\,, a_i^* > 0\,, b_{ki} \geq 0$,  $q_k \in (0, +\infty)$, $c_i$ as unknowns and $i=1,2,\cdots,N;k=1,2,\cdots,K$, additionally,
\begin{equation}
    \sum_{i \in \Ret(k)}{b_{ki}} \geq \sum_{i \not \in \Ret(k)}{b_{ki}}\,.
    \label{eq:dominant}
\end{equation} 
    If a system of equations can be expressed in the form
    \begin{equation}
        c_i - a_i + \sum_{k, \Ret(k) \ni i} \left[b_{ki} \left(\prod_{j\in \Ret(k)}{\frac{c_j}{a_j^*}} \right)^{q_k}\right]- \sum_{k, \Ret(k) \not\ni i} \left[b_{ki} \left(\prod_{j\in \Ret(k)}{\frac{c_j}{a_j^*}}\right)^{q_k}\right] = 0\,,
        \label{eq:conjecture}
    \end{equation}
it has at least one solution. And among these solutions, there exists a unique solution satisfying $c_i > 0, i = 1,2,\cdots, N$.
\end{conjecture}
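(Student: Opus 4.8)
\textit{Overview.} The plan is to prove existence of a (necessarily positive) solution by a topological-degree argument built on two a priori estimates, and then to attack uniqueness of the positive solution by ruling out the comparable case with the dominance hypothesis~(\ref{eq:dominant}) and reducing the incomparable case to a $K\times K$ determinant condition.

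\textit{Existence.} Throughout I would work where the maps $\Phi_k(c):=\left(\prod_{j\in\Ret(k)}c_j/a_j^*\right)^{q_k}$ are well defined, continuous and nonnegative, i.e. for $c$ in the closed positive orthant, and I may assume each $\Ret(k)\neq\emptyset$ (by (\ref{eq:dominant}) a reaction with no reactants has all coefficients zero). Rewrite (\ref{eq:conjecture}) as the fixed-point equation $c=G(c)$ with $G_i(c)=a_i-\sum_{k:\Ret(k)\ni i}b_{ki}\Phi_k(c)+\sum_{k:\Ret(k)\not\ni i}b_{ki}\Phi_k(c)$. Two a priori facts drive the argument. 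First, summing the $N$ equations and using (\ref{eq:dominant}) gives, for any nonnegative solution, $\sum_i c_i=\sum_i a_i-\sum_k\Phi_k(c)\left(\sum_{i\in\Ret(k)}b_{ki}-\sum_{i\notin\Ret(k)}b_{ki}\right)\le\sum_i a_i$. Second, if some $c_i=0$ then $\Phi_k(c)=0$ for every $k$ with $i\in\Ret(k)$ (since $q_k>0$), so the $i$-th equation forces $c_i=a_i+\sum_{k:\Ret(k)\not\ni i}b_{ki}\Phi_k(c)\ge a_i>0$, a contradiction; hence every nonnegative solution is strictly positive. I would then run degree theory on $\Omega:=\{c:c_i>0\ \forall i,\ \sum_i c_i<\sum_j a_j+1\}$ with the homotopy $\Psi(c,t)=c-a-t(G(c)-a)$ joining $\mathrm{id}-a$ to $\mathrm{id}-G$: on the faces $c_i=0$ the second estimate gives $\Psi_i(c,t)\le-a_i<0$, and any zero of $\Psi(\cdot,t)$ satisfies $\sum_i c_i\le\sum_j a_j$, excluding the face $\sum_i c_i=\sum_j a_j+1$, so $\Psi(\cdot,t)\neq0$ on $\partial\Omega$. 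Homotopy invariance then yields $\deg(\mathrm{id}-G,\Omega,0)=\deg(\mathrm{id}-a,\Omega,0)=1$ (as $a\in\Omega$), producing a solution $c^*\in\Omega$, necessarily with $c^*_i>0$. This proves existence, and in passing that every nonnegative solution is positive with $\sum_i c_i\le\sum_j a_j$. (A truncated Brouwer argument — projecting $G$ onto $\{c\ge0,\ \sum_i c_i\le\sum_j a_j\}$ — also works, but checking the projection is inactive at corners is messier than the degree version.)

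\textit{Uniqueness.} Let $c,c'$ be positive solutions and $w=c-c'$. If $w\ge0$ componentwise and $w\neq0$, monotonicity of $\Phi_k$ in its reactant concentrations gives $\Phi_k(c)\ge\Phi_k(c')$, and subtracting the two systems and summing over $i$ gives $\sum_i w_i=-\sum_k(\Phi_k(c)-\Phi_k(c'))\left(\sum_{i\in\Ret(k)}b_{ki}-\sum_{i\notin\Ret(k)}b_{ki}\right)\le0$ by (\ref{eq:dominant}), contradicting $\sum_i w_i>0$; symmetrically $w\le0,\ w\neq0$ is impossible, so two distinct positive solutions must be incomparable. For the incomparable case I would linearize along the segment $c'+sw$, $s\in[0,1]$, which stays in the positive orthant: write $\Phi_k(c)-\Phi_k(c')=\sum_j\gamma_{kj}w_j$ with $\gamma_{kj}=\int_0^1\partial_{c_j}\Phi_k(c'+sw)\,ds$, so $\gamma_{kj}>0$ for $j\in\Ret(k)$ and $\gamma_{kj}=0$ otherwise. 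The difference of the two systems becomes $(I+A)w=0$ with $A=UV^\top$, $U_{ik}=b_{ki}$ if $i\in\Ret(k)$ and $-b_{ki}$ otherwise, $V_{jk}=\gamma_{kj}$; by Sylvester's determinant identity $\det(I_N+UV^\top)=\det(I_K+V^\top U)$, so uniqueness reduces to
\begin{equation}
\det\left(I_K+V^\top U\right)\neq0\qquad(\text{in fact }>0).
\label{eq:propose-det}
\end{equation}
Here $(V^\top U)_{k\ell}=\sum_{j\in\Ret(k)}\gamma_{kj}U_{j\ell}$ has nonnegative diagonal $\sum_{j\in\Ret(k)}\gamma_{kj}b_{kj}$, off-diagonal entries of mixed sign, and column sums $\sum_k(V^\top U)_{k\ell}=\sum_j d_j U_{j\ell}$ with $d_j:=\sum_{k:j\in\Ret(k)}\gamma_{kj}>0$, i.e. the dominance combinations of (\ref{eq:dominant}) reweighted by the $d_j$.

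\textit{The main obstacle.} I expect (\ref{eq:propose-det}) to be the hard part. It is immediate for $K=1$ (the determinant is $1+\sum_j\gamma_{1j}b_{1j}\ge1$) and, more generally, whenever the reactant/product incidence of the reactions can be ordered so that $V^\top U$ is (block-)triangular; but a cycle of reactions whose products feed the reactants of the next — the autocatalytic configuration responsible for multiple positive steady states in continuous-flow stirred-tank / mass-action systems — makes the sign of the $K\times K$ determinant genuinely delicate. The dominance hypothesis (\ref{eq:dominant}) is what ought to forbid this, but the reweighting by the uncontrolled factors $d_j$ above shows it is not obviously sufficient; I would expect to need an additional structural assumption (acyclicity of the reaction graph, $q_k\ge1$, or strict inequality in (\ref{eq:dominant})) or a genuinely new idea. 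Equivalently, in the degree language of the existence proof, (\ref{eq:propose-det}) is exactly the statement that every positive zero of $\mathrm{id}-G$ has index $+1$, which together with $\deg(\mathrm{id}-G,\Omega,0)=1$ would force a single positive solution; without it the degree argument alone only yields an odd number of positive solutions counted with index. I would therefore present the unconditional existence result and the comparable-case half of uniqueness in full, and isolate (\ref{eq:propose-det}) as the remaining ingredient, to be established under whichever of the above side conditions is available.
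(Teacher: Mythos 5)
You should first be aware that the paper does not prove this statement: it is posed as a conjecture, the authors state in Sec.~\ref{sec:general_positivity} that they ``have tried to prove the correctness of Conjecture~\ref{conjecture} but have not yet found the approach,'' and what is actually established there is two special cases --- Theorem~\ref{theorem:1} (every reaction has a single reactant, proved via the substitution $v_i=c_i^{q_i}$, a column diagonally dominant $M$-matrix, and a contractive fixed-point iteration) and a second theorem (a single multi-reactant, product-free reaction, reduced to a scalar equation in $t=\bigl(\prod_j c_j/a_j^*\bigr)^q$) --- together with a Monte-Carlo validation on $10^6$ random systems. Measured against that, your existence argument is a genuine advance: the two a priori estimates ($\sum_i c_i\le\sum_i a_i$ from summing the equations against (\ref{eq:dominant}), and strict positivity of any nonnegative solution because $c_i=0$ annihilates every $\Phi_k$ with $i\in\Ret(k)$ when $q_k>0$) are both correct, the homotopy $\Psi(c,t)=c-a-t(G(c)-a)$ is nonvanishing on $\partial\Omega$ for exactly the reasons you give (on a face $c_i=0$ one has $G_i(c)\ge a_i$, hence $\Psi_i\le -a_i<0$; any zero of $\Psi(\cdot,t)$ obeys the mass bound and so avoids the outer face), and degree $1$ delivers a strictly positive solution. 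That already settles the ``at least one solution'' clause, which the paper only conjectures. The comparable-case half of your uniqueness argument is also sound.

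The gap you flag is, however, precisely the content of the conjecture that remains open, and it is the same obstruction the authors report being unable to overcome. Your reduction of the incomparable case to $\det(I_K+V^\top U)>0$ is correct as a reduction (the mean-value factorization $\Phi_k(c)-\Phi_k(c')=\sum_j\gamma_{kj}w_j$ with $\gamma_{kj}>0$ only for $j\in\Ret(k)$, the rank structure $A=UV^\top$, and Sylvester's identity all check out), but the determinant condition is not established: as you observe, (\ref{eq:dominant}) only controls the columns of $U$, and after reweighting by the uncontrolled factors $d_j=\sum_{k:j\in\Ret(k)}\gamma_{kj}$ no sign or dominance structure on $V^\top U$ survives, so a cycle of reactions feeding one another could in principle produce a degenerate or negatively indexed zero. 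Two points of comparison are worth recording. First, your tractable instances ($K=1$, or triangularizable reactant/product incidence) essentially subsume the paper's two theorems; in the single-reactant case your $I+A$ becomes, after the same substitution the paper uses, a strictly column diagonally dominant $M$-matrix, which is exactly why the paper's contraction argument succeeds there. Second, your suspicion that (\ref{eq:dominant}) is not the operative mechanism is corroborated by the paper itself: in Sec.~\ref{sec:positivity_euler} the authors note that appending the internal-energy equation for an exothermic reaction violates (\ref{eq:dominant}) outright, yet uniqueness persists in their numerical experiments and in their worked two-reaction example. So the proposal should be read as a correct proof of existence and of the ``no two comparable positive solutions'' statement, with the uniqueness claim of the conjecture still open --- which matches the paper's own state of knowledge rather than falling short of it.
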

\begin{rem}
    Conjecture \ref{conjecture} is an extension of the linear system in \cite{Huang2019a}, and the constraint of Eq. (\ref{eq:dominant}) is introduced to be consistent with the linear system in \cite{Huang2019a} where a strict diagonal dominant m-matrix can be obtained. However, through the numerical tests, the constraint of Eq. (\ref{eq:dominant}) can be relaxed a little for the nonlinear system when the pressure is introduced as shown in Secs. \ref{sec:positivity_euler} and \ref{sec:1dOxygen}.
\end{rem}
\begin{rem}
    The density of the reactants can be zero, i.e., $a_i$ and $a_i^*$ are zero, which leads the corresponding reaction rate to be also zero and thus be decoupled from Eq. (\ref{eq:conjecture}). Therefore, only the system with strict positive densities for all species are discussed in this section.
\end{rem}

The correctness of Conjecture \ref{conjecture} is unknown. However, if we take a glance at Eq. (\ref{eq:conjecture}), it can be found that Eq. (\ref{eq:conjecture}) can be treated as a first order backward Euler discretization with monotone source terms; thus Conjecture \ref{conjecture} is assumed to be correct in this study. In this section, we give two conclusions under simplified conditions. For more general cases, the conjecture is validated with exhausted numerical samples in the parameter space of $a_i$, $a_i^*$ and $b_{ki}$. In the remainder of this paper, if all the elements of a vector or a matrix is greater than zero, or greater or equal than zero, it is denoted as $ \vec{c} > 0,\, \mathbb{M} > 0$ and $\vec{c} \geq 0,\, \mathbb{M} \geq 0$, respectively.
\subsection{Positivity under Simplified Conditions}
\begin{theorem}
    If each of the reactions has only one reactant, then the reactions can be expressed by the indices of the species and denoted as $q_k = q_i,\,\forall\,\, \Ret(k) \ni i$. Furthermore, Eq. (\ref{eq:conjecture}) has at least one solution, and there exists an one and only solution satisfying $\vec{c} > 0$ among all the solutions.
    \label{theorem:1}
\end{theorem}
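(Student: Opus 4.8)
The plan is to reduce the system (\ref{eq:conjecture}) to a single scalar fixed-point problem. When each reaction has exactly one reactant, say $\Ret(k) = \{i\}$ with exponent $q_i$, the product $\prod_{j\in\Ret(k)}(c_j/a_j^*)^{q_k}$ collapses to $(c_i/a_i^*)^{q_i}$, which depends only on the $i$-th unknown. Grouping all reactions by their unique reactant, define for each $i$ the quantity $d_i := \sum_{k:\Ret(k)\ni i} b_{ki} \geq 0$ (outflow coefficient from species $i$) and note that the only coupling between equations comes from the production terms $\sum_{k:\Ret(k)\not\ni i} b_{ki}(c_{r(k)}/a_{r(k)}^*)^{q_{r(k)}}$, where $r(k)$ is the reactant of reaction $k$. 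So equation $i$ reads $c_i - a_i + d_i\,(c_i/a_i^*)^{q_i} - g_i(\,\{(c_j/a_j^*)^{q_j}\}_{j\neq i}) = 0$, with $g_i$ a nonnegative linear function of the listed arguments.

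First I would introduce the change of variables $z_i := (c_i/a_i^*)^{q_i} \in [0,\infty)$, equivalently $c_i = a_i^* z_i^{1/q_i}$. The $i$-th equation becomes $\Phi_i(z) := a_i^* z_i^{1/q_i} + d_i z_i - a_i - \sum_{k:\Ret(k)\not\ni i} b_{k,r(k)-\text{type}} z_{r(k)} = 0$; more cleanly, writing $L_{ij} := \sum_{k:\Ret(k)=\{j\},\, i\notin\{j\}} b_{ki} \geq 0$ for the production of $i$ from reactions with reactant $j\neq i$, the system is $a_i^* z_i^{1/q_i} + d_i z_i - \sum_{j\neq i} L_{ij} z_j = a_i$. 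The left side, as a map $F:\mathbb{R}_{\geq0}^N \to \mathbb{R}^N$, has a monotone structure: each $F_i$ is strictly increasing in $z_i$ (since $a_i^*>0$, $q_i>0$ give $a_i^* z_i^{1/q_i}$ strictly increasing, and $d_i z_i$ is nondecreasing) and nonincreasing in each $z_j$, $j\neq i$. This is exactly the structure of an M-function / off-diagonally antitone diagonally isotone map. The key point is that Eq. (\ref{eq:dominant}) — which here says $d_j = \sum_{k:\Ret(k)=\{j\}} b_{kj} \geq \sum_{k:\Ret(k)=\{j\}} \sum_{i\neq j} b_{ki} = \sum_i L_{ij}$ — gives column dominance: the positive diagonal entry of the ``Jacobian-like'' structure dominates the sum of the off-diagonal magnitudes in its column, so $F$ is (weakly) an M-function and in particular injective on $\mathbb{R}_{\geq0}^N$.

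For existence of a positive solution I would argue as follows: on the box $[\varepsilon, \mathbf{R}] := \prod_i[\varepsilon_i, R_i]$ for suitable small $\varepsilon_i>0$ and large $R_i$, show the map is inward-pointing. On the upper face $z_i = R_i$: since $a_i^* R_i^{1/q_i}\to\infty$ as $R_i\to\infty$ while the subtracted terms $\sum_j L_{ij}z_j$ are bounded by $\sum_j L_{ij}R_j$, choosing the $R_i$ large and comparable (using column dominance so the linear growth cannot outrun the positive diagonal) forces $F_i(z) > a_i$ there. On the lower face $z_i=\varepsilon_i$: $F_i \leq a_i^*\varepsilon_i^{1/q_i} + d_i\varepsilon_i < a_i$ for $\varepsilon_i$ small (using $a_i>0$). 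By the Poincaré–Miranda theorem (or Brouwer applied to an appropriate truncation/retraction), $F(z)=a$ has a solution in the open box, which is strictly positive; this also yields the ``at least one solution'' claim for (\ref{eq:conjecture}) via $c_i = a_i^*z_i^{1/q_i}$. Uniqueness among positive solutions then follows from injectivity: if $z, z'$ both solve $F=a$ with $z,z'>0$, let $S=\{i: z_i > z_i'\}$; summing the equations over $i\in S$ and using off-diagonal antitonicity plus column dominance to control the cross terms from $j\notin S$ forces $S=\emptyset$, and symmetrically the reverse, so $z=z'$. Finally I would note no solution can have a zero or negative component distinct from the positive one — a component hitting $0$ makes $F_i(z) = -a_i - \sum_{j\neq i}L_{ij}z_j \leq -a_i < 0 \neq a_i$, ruling it out, so the positive solution is the unique one in the relevant sense.

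The main obstacle I anticipate is the existence argument via the box: getting the upper-face inequality to hold \emph{simultaneously} for all $i$ requires choosing the $R_i$ jointly, and here the precise use of the column-dominance hypothesis (\ref{eq:dominant}) is essential — without it the linear production terms could in principle grow faster than the concave/linear diagonal term $a_i^*z_i^{1/q_i}+d_iz_i$ (note $1/q_i$ may exceed $1$, helping, or be less than $1$, in which case the $d_i z_i$ term is what saves us, and if $d_i$ were small relative to the column sum we would be stuck). So the delicate bookkeeping is verifying that (\ref{eq:dominant}) is exactly what is needed to close the a priori bound, and handling the borderline case where $q_i \le 1$ and $d_i$ carries the dominance. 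A secondary technical point is justifying injectivity of $F$ cleanly despite the non-smoothness of $z_i^{1/q_i}$ at $z_i=0$; restricting to the open positive orthant where $F$ is $C^1$ and its Jacobian is a column-dominant matrix with positive diagonal and nonpositive off-diagonals (hence a nonsingular M-matrix) makes the monotone-operator / M-function machinery apply directly.
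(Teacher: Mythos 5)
Your overall route differs from the paper's: after essentially the same linearizing substitution ($z_i=(c_i/a_i^*)^{q_i}$ versus the paper's $v_i=c_i^{q_i}$), the paper proves existence and uniqueness in one stroke by building a fixed-point iteration $\vec{v}^{n+1}=\vec{F}^{-1}\bigl((\epsilon\mathbb{I}+\mathbb{D})^{-1}(\mathbb{R}\vec{v}^n+\vec{a})\bigr)$ and choosing $\epsilon$ so that the composite map is a contraction in a suitable $p$-norm, whereas you split the work into a degree-theoretic existence argument and a separate monotonicity (M-function) uniqueness argument. Your uniqueness half is correct and, in my view, cleaner than the paper's: summing the difference of the two systems over $S=\{i:z_i>z_i'\}$, bounding $\sum_{i\in S}\sum_j L_{ij}(z_j-z_j')\le\sum_{j\in S}d_j(z_j-z_j')$ by column dominance, and extracting the contradiction $\sum_{i\in S}a_i^*(z_i^{1/q_i}-z_i'^{1/q_i})\le 0$ works exactly as you describe and needs no smoothness at the origin.

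The existence half, however, has a genuine gap that you flag but do not close. Eq.~(\ref{eq:dominant}) controls \emph{column} sums, $d_j\ge\sum_{i\ne j}L_{ij}$, while the inward-pointing condition on the upper face $z_i=R_i$ requires a \emph{row}-wise inequality $a_i^*R_i^{1/q_i}+d_iR_i-\sum_{j\ne i}L_{ij}R_j>a_i$. With equal or ``comparable'' radii this can fail outright: take $N=2$, $q_1>1$, $d_1=L_{21}$ small and $L_{12}=d_2$ large; then $F_1(R,\dots,R)\to-\infty$, so Poincar\'e--Miranda does not apply on any large symmetric box, and choosing the $R_i$ jointly requires a nontrivial M-matrix lemma you have not supplied. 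The clean repair is global rather than face-wise: summing all $N$ equations and using column dominance gives the a priori bound $\sum_i a_i^*z_i^{1/q_i}=\sum_i c_i\le\sum_i a_i$ for any nonnegative solution (this is exactly the bound the paper exploits in its Newton re-initialization, $c_i\leftarrow\min(c_i,\sum_i a_i)$). One can then define a truncated Jacobi-type self-map $\Psi_i(z)$ as the unique root $w\ge0$ of $a_i^*w^{1/q_i}+d_iw=a_i+\sum_j L_{ij}\min(z_j,R_j)$ with $R_j=(\sum_l a_l/a_j^*)^{q_j}$, apply Brouwer on the resulting compact box, and verify a posteriori (by re-summing and using column dominance) that the truncation is inactive at the fixed point, so it solves the original system and is strictly positive since its right-hand side is at least $a_i>0$. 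With that substitution your argument is complete; without it, the existence claim rests on an unverified choice of the $R_i$.
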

\begin{proof}
    After a variable substitution of $v_i = c_i^{q_i}$, Eq. (\ref{eq:conjecture}) can be rewritten in the following form,
    \begin{equation}
        \vec{v}^{\frac{1}{q}} + \MM \, \vec{v} = \vec{a},\,
        \label{eq:fixed_proof_1}
    \end{equation}
    where $\vec{v}^{\frac{1}{q}} = \left\{v_1^{\frac{1}{q_1}}, v_2^{\frac{1}{q_2}},\cdots, v_N^{\frac{1}{q_N}}\right\}^T$ and
    \begin{equation}
        \begin{aligned}
        & \MM_{ii}  = \sum_{k, \Ret(k) \ni i}{\frac{b_{ki}}{{a_i^*}^{q_i}}}\,,\quad\quad
         \MM_{ij}  = -\sum_{k, \Ret(k) \ni i}{\frac{b_{kj}}{{a_i^*}^{q_i}}}\,, i\neq j,\,\\
        & \vec{v}  = \left\{v_1, v_2, \cdots, v_N\right\}^T\,,\quad\quad
        \vec{a} = \left\{a_1, a_2, \cdots, a_N\right\}^T\,.
        \end{aligned}
        \label{eq:M_Matrix}
    \end{equation}
    Note that the Einstein summation rule does not apply to the subscript of $\MM_{ii}$ in Eq. (\ref{eq:M_Matrix}) and all the other expressions in the remainder of this section. According to Eq. (\ref{eq:dominant}) and considering $i$ is the only reactant in $\Ret(k)$, ${b_{ki}} \geq \sum_{j \neq i}{b_{kj}}$. Thus $\MM$ is a column diagonally dominant $m$-matrix. Then we introduce a small number $\epsilon > 0$ and rearrange Eq. (\ref{eq:fixed_proof_1}) as
    \begin{equation}
        \begin{aligned}
            \left(\epsilon \mathbb{I+D}\right)^{-1}\left(\vec{v}^{\frac{1}{q}}+\mathbb{D}\vec{v}\right) = \left(\epsilon \mathbb{I+D}\right)^{-1}\left(\mathbb{R} \vec{v}+\vec{a}\right)\,,
        \end{aligned}
    \end{equation}
    where $\mathbb{I}$ is an identity matrix, $\mathbb{D}$ is the diagonal part of $\MM$, $\MM = \mathbb{D-R}$ and $\mathbb{D}\geq 0,\,\mathbb{R}\geq 0$. $\forall\,\, \vec{v} > 0$, we have $\left(\epsilon \mathbb{I+D}\right)^{-1}\left(\mathbb{R} \vec{v}+\vec{a}\right) > 0$.

    Denote
    \begin{equation}
\vec{F}(\vec{v}) = \left(\epsilon \mathbb{I+D}\right)^{-1}\left(\vec{v}^{\frac{1}{q}}+\mathbb{D}\vec{v}\right)\,,
    \end{equation}
    and since $\epsilon \mathbb{I+D}$ is a diagonal matrix, it is easy to verify that, in the first phase, $\forall\,\,\vec{f} > 0$, $\vec{F}(\vec{v}) = \vec{f}$ has one and only one solution. Thus, we denote the projection $\vec{F}^{-1}: \vec{f} \rightarrow \vec{v},\,\vec{v} > 0$ and define the following fixed-point iteration scheme
    \begin{equation}
        \begin{aligned}
             \vec{f} & = \left(\epsilon \mathbb{I+D}\right)^{-1}\left(\mathbb{R} \vec{v}^n+\vec{a}\right)\,,\\
             \vec{v}^{n+1} & = \vec{F}^{-1}(\vec{f})\,.
        \end{aligned}
        \label{eq:fixed_iteration}
    \end{equation}
    Since $\epsilon > 0$ and $\MM$ is a column diagonally dominant $m$-matrix, $\epsilon \mathbb{I} +\MM$ is a strict column diagonally dominant $m$-matrix, the spectral radius of $\left(\epsilon \mathbb{I+D}\right)^{-1}\mathbb{R} $ is smaller than 1, and there must exist a $p$-norm which makes $\left\|\left(\epsilon \mathbb{I+D}\right)^{-1}\mathbb{R}\right\|_p < 1$. Hence, for any given $\vec{v}_1^n > 0$ and $\vec{v}_2^n > 0$, we have
    \[
        \vec{f}_1 > 0,\,\,\vec{f}_2 > 0\,,  
    \]
    and
    \[
        \left\|\vec{f}_1 - \vec{f}_2\right\|_p \leq \left\|\left(\epsilon \mathbb{I+D}\right)^{-1}\mathbb{R}\right\|_p\left\|\vec{v}^n_1 - \vec{v}^n_2\right\|_p < \left\|\vec{v}^n_1 - \vec{v}^n_2\right\|_p\,.
    \]
    On the other hand, if we can find a $\epsilon$ which makes $\left\|\vec{v}_1^{n+1} - \vec{v}_2^{n+1}\right\|_p \leq \left\|\vec{f}_1 - \vec{f}_2\right\|_p$, the iteration scheme is convergent.

    The solution of the second step in Eq. (\ref{eq:fixed_iteration}) can be written element-wisely as
    \begin{equation}
         v_i^{\frac{1}{q_i}} = - \mathbb{D}_{ii} v_i + \left(\epsilon + \mathbb{D}_{ii}\right) f_i\,,
         \label{eq:step2}
    \end{equation}
    where the superscript $n+1$ of $v_i$ is omitted without confusion. Since $\mathbb{R} \geq 0$, $f_i \geq \frac{a_i}{\epsilon + \DD_{ii}}$, the x-intercept of the linear function on the right-hand side of Eq. (\ref{eq:step2}) satisfies
    \[
        \frac{\left(\epsilon + \DD_{ii}\right) f_i}{\DD_{ii}} \geq  \frac{a_i}{\epsilon + \DD_{ii}} \frac{\left.\epsilon + \DD_{ii}\right.}{\DD_{ii}} = \frac{a_i}{\DD_{ii}}\,,
    \]
    and the y-intercept satisfies
    \[
      \left(\epsilon + \DD_{ii}\right) f_i \geq  \frac{a_i}{\epsilon + \DD_{ii}} \left(\epsilon + \DD_{ii}\right) = a_i\,.
    \]
    Denoting the first-phase solution of $v_i^{\frac{1}{q_i}} = -\DD_{ii} v_i + a_i$ as $\widetilde{v}_i$ which is greater than 0 and its vector form as $\vec{\widetilde{v}}$, it can be concluded that the solution of Eq. (\ref{eq:step2}) is a bounded value locates in $[\widetilde{v}_i,+\infty)$.

    Meanwhile,
    \begin{equation}
        \frac{\partial \vec{F}}{\partial \vec{v}} = \left(\epsilon \mathbb{I+D}\right)^{-1}\left(\text{Diag}(\frac{1}{q}\vec{v}^{\frac{1}{q}-1})+\mathbb{D}\right) \,,
    \end{equation}
    where $\text{Diag}(\cdot)$ means a diagonal matrix and 
    \[\frac{1}{q}\vec{v}^{\frac{1}{q}-1} = \left\{ \frac{1}{q_1}v_1^{\frac{1}{q_1}-1}, \frac{1}{q_2}v_2^{\frac{1}{q_2}-1}\cdots, \frac{1}{q_N}v_N^{\frac{1}{q_N}-1}\right\}^T\,.
    \]
    If $\epsilon  = \min(\frac{1}{q_i}\widetilde{v}_{i}^{\frac{1}{q_i}-1}),\,\,i = 1,2,\cdots,N$,  each component of $\vec{f}_{1}-\vec{f}_{2}$ and $\vec{v}_{1}^{n+1}-\vec{v}_{2}^{n+1}$ satisfies
    \[
        \frac{|\vec{f}_{1i}-\vec{f}_{2i}|^p}{|\vec{v}_{1i}^{n+1} - \vec{v}_{2i}^{n+1}|^p} = \left(\left.\frac{\partial \vec{F}}{\partial \vec{v}} \right|_{\vec{v}*}\right)_{ii}^p \geq 1,\,\,\vec{v}* \in [\vec{v}_1^{n+1},\vec{v}_2^{n+1}]  \subset [\vec{\widetilde{v}}, +\infty)\,.
    \]
    Thus, we have
    \[
        \left\|\vec{v}_1^{n+1} - \vec{v}_2^{n+1}\right\|_p \leq \left\|\vec{f}_1 - \vec{f}_2\right\|_p \leq \left\|\left(\epsilon \mathbb{I+D}\right)^{-1}\mathbb{R}\right\|_p\left\|\vec{v}^n_1 - \vec{v}^n_2\right\|_p < \left\|\vec{v}_1^{n} - \vec{v}_2^{n}\right\|_p\,.
    \]
    In conclusion, with the selected $\epsilon$, for arbitrary given first phase initial values, the iteration scheme defined as in Eq. (\ref{eq:fixed_iteration}) converges to the only one solution. Given $\vec{v}^n > 0$, $\vec{v}^{n+1}$ is also greater than 0 and the converged solution locates in the first phase. The proof is completed.
\end{proof}

If $q_i = 1, \forall\,\, i = 1,2\cdots,N$, Theorem \ref{theorem:1} reduces to the case of the PD system as in \cite{Huang2019,Huang2019a}. From Theorem $\ref{theorem:1}$, it is concluded that the power of the Patankar coefficients can be chosen as any positive value.
\begin{theorem}
    Among all the reactions, there is only one reaction involving with more than one reactant but without production. Without losing generality, assume $\#\Ret(1) > 1$ and $\Ret(1) = \left\{1,2,\cdots, k\right\}$. $q_1 = p$. For the reaction with only one reactant, $q_k = 1,\,\,k=2,3,\cdots,K$. Under such conditions, Eq. (\ref{eq:conjecture}) has at least one solution, and there is one and the only one satisfying $\vec{c} > 0$.
\end{theorem}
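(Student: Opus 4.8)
The plan is to exploit the fact that the whole system couples nonlinearly only through the single scalar $t:=\big(\prod_{j\in\Ret(1)}c_j/a_j^{*}\big)^{p}$. Collecting the contributions of the $K-1$ one-reactant reactions into a matrix $\mathbb{L}$ (such a reaction with reactant $j$ adds $b_{kj}/a_j^{*}$ to $\mathbb{L}_{jj}$ and $-b_{ki}/a_j^{*}$ to $\mathbb{L}_{ij}$ for each product $i$), Eq.~(\ref{eq:conjecture}) becomes
\[
(\mathbb{I}+\mathbb{L})\,\vec{c}=\vec{a}-t\,\vec{b}_1,\qquad (\vec{b}_1)_i=b_{1i}\ \text{ for }i\in\Ret(1),\quad (\vec{b}_1)_i=0\ \text{ otherwise}.
\]
Since Eq.~(\ref{eq:dominant}) applied to each one-reactant reaction gives $b_{kj}\ge\sum_{i\ne j}b_{ki}$, the matrix $\mathbb{L}$ has nonnegative diagonal, nonpositive off-diagonal, and is column diagonally dominant; hence $\mathbb{I}+\mathbb{L}$ is a nonsingular $m$-matrix with $(\mathbb{I}+\mathbb{L})^{-1}\ge 0$, so the linear solve above has a unique solution for every fixed $t\ge0$.

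I would then treat $t$ as a free parameter, set $\vec{c}(t):=(\mathbb{I}+\mathbb{L})^{-1}(\vec{a}-t\,\vec{b}_1)$---an affine, componentwise non-increasing curve with $\vec{c}(0)=(\mathbb{I}+\mathbb{L})^{-1}\vec{a}>0$ (strictly, since the nonnegative inverse is nonsingular and $\vec{a}>0$)---and let $t^{*}:=\sup\{t\ge 0:\ c_j(t)>0\ \forall\,j\in\Ret(1)\}$. One checks $t^{*}<\infty$: if every reactant component stayed positive for all $t$, the slopes $\big((\mathbb{I}+\mathbb{L})^{-1}\vec{b}_1\big)_j$ for $j\in\Ret(1)$ would vanish, and substituting this back into $(\mathbb{I}+\mathbb{L})\big((\mathbb{I}+\mathbb{L})^{-1}\vec{b}_1\big)=\vec{b}_1$ together with the sign pattern of $\mathbb{L}$ forces $b_{1i}=0$ for all $i\in\Ret(1)$, which is impossible since $\Ret(1)\neq\varnothing$. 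At $t=t^{*}$ some reactant component $c_{j_0}(t^{*})$ vanishes, so $g(t):=\big(\prod_{j\in\Ret(1)}c_j(t)/a_j^{*}\big)^{p}$ is continuous and non-increasing on $[0,t^{*}]$ with $g(0)>0$ and $g(t^{*})=0$. Hence $\phi(t):=g(t)-t$ is strictly decreasing with $\phi(0)>0>\phi(t^{*})$, so there is a unique $t_0\in(0,t^{*})$ with $g(t_0)=t_0$, and $\vec{c}(t_0)$ solves Eq.~(\ref{eq:conjecture}) by construction.

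It remains to prove $\vec{c}(t_0)>0$ and that no other positive solution exists. If positivity of $\vec{c}(\cdot)$ never fails on $[0,t_0]$ we are done; otherwise let $s\in(0,t_0]$ be the first failure point, with $c_{i_0}(s)=0$ and $\vec{c}(s)\ge 0$. Since $s<t^{*}$ every reactant component is still positive at $s$, so $i_0\notin\Ret(1)$, hence $(\vec{b}_1)_{i_0}=0$ and $\big((\mathbb{I}+\mathbb{L})\vec{c}(s)\big)_{i_0}=a_{i_0}$; but $c_{i_0}(s)=0$, $\mathbb{L}_{i_0 i_0}\ge 0$, $\mathbb{L}_{i_0 j}\le 0$ for $j\ne i_0$, and $\vec{c}(s)\ge 0$ make that left-hand side $\le 0<a_{i_0}$, a contradiction, so $\vec{c}(t)>0$ throughout $[0,t_0]$. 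For uniqueness, any positive solution $\vec{c}$ determines $t=\big(\prod_{j\in\Ret(1)}c_j/a_j^{*}\big)^{p}$, solves the linear system with this $t$, hence equals $\vec{c}(t)$; positivity forces $t<t^{*}$ and $\phi(t)=0$, so $t=t_0$ by strict monotonicity of $\phi$. The step I expect to be hardest is this positivity argument: one must exclude that a species outside $\Ret(1)$ goes negative along the parametrized family before $t_0$ is reached, which is exactly where the $m$-matrix sign structure of the one-reactant block and the hypothesis that reaction $1$ has no products are both used; everything else reduces to a one-dimensional intermediate-value and monotonicity argument.
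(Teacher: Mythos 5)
Your proposal follows essentially the same route as the paper: reduce the system to a linear $m$-matrix solve parametrized by the single scalar $t=\big(\prod_{j\in\Ret(1)}c_j/a_j^{*}\big)^{p}$, and then locate the unique fixed point of the resulting strictly decreasing scalar map on the interval where the reactant components stay positive. The only substantive difference is that you explicitly verify that the components outside $\Ret(1)$ also remain positive at the fixed point (using the sign pattern of the off-diagonal entries and the fact that reaction $1$ has no products), a step the paper's proof passes over by writing those components as zero; your argument there is correct and in fact closes that small gap.
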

\begin{proof}\label{proof2}
    Equation (\ref{eq:conjecture}) can be rewritten as
    \begin{equation}
        \MM \vec{c}  = \vec{a} + \vec{B} t\,,
        \label{eq:theorem2_1}
    \end{equation}
    where $\MM$ has the same definition as in Eq. (\ref{eq:M_Matrix}) except that the summation only applies to reactions from 2 to $K$. $\vec{B} = \left\{-b_{11},-b_{12},\cdots,-b_{1k}, 0,0,\cdots,0\right\}^T$ 
    and $t = \left(\frac{c_1 c_2 \cdots c_k}{a_1^* a_2^* \cdots a_k^*}\right)^q$.
    
    Thus
    \begin{equation}
        \vec{c} = \MM^{-1} \left(\vec{a} +\vec{B} t\right).
        \label{eq:theorem2_c}
    \end{equation}
    Since $\MM$ is a column diagonally dominant $m$-matrix, all entries of $\MM^{-1}$ is non-negative, and the diagonal elements are positive and are greater than other elements in the same row, denote the right-hand side of Eq. (\ref{eq:theorem2_c}) as
    \[
    \MM^{-1} \left(\vec{a} +\vec{B} t\right) = \left\{\xi_1 - \zeta_1 t,\xi_2 - \zeta_2 t,\cdots,\xi_k - \zeta_k t, 0,0,\cdots, 0\right\}^T\,,   
    \]
    where $\xi_i$, $\zeta_i$, $i=1,2,\cdots,k$ are positive values. Then, we have
    \begin{equation}
        t = \left(\frac{\left(\xi_1 - \zeta_1 t\right)\left(\xi_2 - \zeta_2 t\right)\cdots\left(\xi_k - \zeta_k t\right)}{a_1^* a_2^* \cdots a_k^*}\right)^q\,.
        \label{eq:theorem2_3}
    \end{equation}
    Since the first phase solution should satisfy $\vec{c} > 0$, we have
    \begin{equation}
        t < \min(\frac{\xi_i}{\zeta_i}),\,i=1,2,\cdots,k\,.    
        \label{eq:theorem2_constraint}
    \end{equation}
    With constraint of Eq. (\ref{eq:theorem2_constraint}), in the first phase, Eq. (\ref{eq:theorem2_3}) has one and only one solution. Thus, in the first phase, Eq. (\ref{eq:theorem2_1}) has one and only one solution. The proof is completed.
\end{proof}
\subsection{Positivity for General Conditions}\label{sec:general_positivity}
For general cases, we have tried to prove the correctness of Conjecture \ref{conjecture} but have not yet found the approach. Thus, this problem is remained to be solved. Instead of providing a rigorous and mathematically elegant proof, in this study, we try to validate this conjecture through large randomly generated samples.

A system with 100 species ($N = 100$) and 100 reactions ($K = 100$) are considered. Each reaction involves randomly sampled but with a maximum number of 10 reactants and randomly sampled productions. The maximum number of 10 reactants is considered here to avoid the cut-off error due to too many floating number multiplications. For real systems of chemical reactions, reaction with more than 10 species is also rarely seen. All the constants $a_i$, $a_i^*$ and $b_{ki}$ are drawn out from a uniform distribution in $[0,1]$ and satisfy Eq. (\ref{eq:dominant}). For a real reactive system, the value of $a_i$, $a_i^*$ and $b_{ki}$ of course can be of any value. However, it should be noted that, if $c_i$, $a_i$, $a_i^*$ and $b_{ki}$ hold for Eq. (\ref{eq:conjecture}), after a scaling, $h c_i$, $h a_i$, $h a_i^*$ and $h b_{ki}$ still hold for Eq. (\ref{eq:conjecture}), where $h$ is an arbitrary real number. This means, for any sets of coefficients, that they can be scaled and locate in the region of $[0,1]$. Thus, it is reasonable to draw the samples of $a_i$, $a_i^*$, $b_{ki}$ out from $[0,1]$. For the power of the Patankar coefficient, case with $q_k = \frac{1}{\#\Ret(k)}$ is considered, which is also a consideration from the viewpoint of truncation error and cut-off error.

Since obtaining all the solutions for such a big system is very time-consuming, Newton iteration is used to solve the equations in the real space. 1000000 samples are drawn out and tested. Denotes $\chi_k = \left(\prod_{\varepsilon \in \Ret(k)}\frac{c_\varepsilon}{a_\varepsilon^{*}}\right)^{q_k}$, $v_i = c_i^{q_m^{i}}$, $f_i = c_i - a_i + \sum_{k, \Ret(k) \ni i} \left[b_{ki} \chi_k\right]- \sum_{k, \Ret(k) \not\ni i} \left[b_{ki} \chi_k\right]$, $J_{ij} = \frac{\partial f_i}{\partial v_j}$ and $\mathbf{J} = \left\{J_{ij}\right\}$, where $q_m^i = \min_{\Ret(k) \ni i}\left\{q_k\right\}$. To keep the positivity of $c_i$ during the iteration, Newton iterative algorithm with a re-initialization strategy is briefly listed in Algorithm \ref{algorithm_newton}, which is shown
to be effective in all the testing cases and numerical examples in Sec. \ref{sec:numerical_examples}.

For each sample, 10 different randomly generated initial values of $c_i$ are tested and converge to the same solution. Since the data for the tested coefficients of Eq. (\ref{eq:conjecture}) are too large for the presentation, the source code for generating and testing samples are directly provided in Pan \cite{Pan2020} as a supporting information. Interested readers can download and test it. Although being not a rigorous proof, the result of this section can provide as a solid supporting to Conjecture \ref{conjecture}, which is the basis of the proposed scheme.

\begin{algorithm}
    \caption{Newton iteration for the Patankar scheme.\label{algorithm_newton}}
\begin{algorithmic}
    \STATE $c_i \leftarrow c_i^0$
    \STATE $v_i \leftarrow c_i^{q_m^i}$
    \WHILE{$f_i$ > Threshold}
        \STATE $f_i \leftarrow c_i - a_i + \sum_{k, \Ret(k) \ni i} \left[b_{ki} \chi_k\right]- \sum_{k, \Ret(k) \not\ni i} \left[b_{ki} \chi_k\right]$
        \FOR { $k = 1$ to $K$}
        \STATE $\chi_k \leftarrow \left(\prod_{\varepsilon \in \Ret(k)}\frac{c_\varepsilon}{a_\varepsilon^{*}}\right)^{q_k}\,.$
        \ENDFOR
        \STATE $\Delta \mathbf{v} =  - \mathbf{J}^{-1} \mathbf{v}.$
        \FOR{$i = 1$ to $N$}
            \IF {$v_i + \Delta v_i < 0$}
              \STATE $c_i \leftarrow a_i + \sum_{k, \Ret(k) \not\ni i} \left[b_{ki} \chi_k\right]$
              \STATE $c_i \leftarrow \min\left(c_i, \sum_{i=1}^N{a_i}\right)$
              \STATE $v_i \leftarrow c_i^{1/q_m^i}$
            \ELSE
              \STATE $v_i \leftarrow v_i + \Delta v_i$
              \STATE $c_i \leftarrow v_i^{q_m^i}$
            \ENDIF
        \ENDFOR
    \ENDWHILE
\end{algorithmic}
\end{algorithm}

\section{Positive-Preserving for the Euler Equations}\label{sec:positivity_euler}
For the Euler equations coupled with convection terms, the positive-preserving technology proposed in Zhang and Shu \cite{Zhang2010} is firstly reviewed and utilized to guarantee the positivity of the convection part. Then the PMPRK scheme is used to keep the positivity concerning the source terms. As will be shown later, the PMPRK scheme of the form Eq. (\ref{eq:chi1_detailed}) can be easily coupled with the energy equation and help keep the positivity of the pressure.

If we define the admissible set $G$ as in Huang and Shu \cite{Huang2019a} and Huang et al. \cite{Huang2019}, that is
\begin{equation}
    G = \left\{U = \left(\rho, Y_1, Y_2, \cdots, Y_N, m, E\right)^T|\rho > 0, Y_i > 0,\,\,i = 1,2,\cdots,N, p > 0\right\}\,,
    \label{eq:G}
\end{equation}
it was found that if the reaction is endothermic, $G$ may not be a convex set due to the existence of $h_i^0$, see Section 3.3 and 3.4 in Huang and Shu \cite{Huang2019a} and thus the pressure could not be guaranteed to be positive as stated. However, if we make a little rearrangement of Eq. (\ref{eq:euler1}), this problem is readily solved. Rewrite Eq. (\ref{eq:euler1detailed}) as
\begin{equation}
    \begin{aligned}
        U & = \left( \rho\,, m\,, E\,, \rho Y_1\,,\rho Y_2,\cdots,\rho Y_N \right)^T\,,\\
        F(U) &= \left(m, \rho u^2 +p, (E+p) u, \rho u Y_1, \rho u Y_2,\cdots \rho Y_N\right)^T\,,\\
        S(U) & = \left(0,0,\omega_E,\omega_1,\omega_2,\cdots,\omega_N\right)^T\,.
    \end{aligned}
    \label{eq:euler2}
\end{equation}
and
 \begin{equation}
     \begin{aligned}
    \omega_E & = -\frac{\partial \rho Y_i h_i^0}{\partial t} - \frac{\partial \rho Y_i h_i^0 u}{\partial x}\,,\\
     E & = \frac{1}{2} \rho u^2 + \rho Y_i e_i(T)\,.
     \end{aligned}
     \label{eq:redefineE}
 \end{equation}
 All other definitions are the same as in Eq. (\ref{eq:euler1detailed}). Further, according to the mass conservation equations and considering that $h_i^0$ is constant for each specie, 
 \begin{equation}
    \omega_E = - \sum_{k=1}^K{R_k \lambda_{ki}M_i h_i^0}\,.
    \label{eq:we}
 \end{equation}
Equations (\ref{eq:euler2}-\ref{eq:we}) are the final forms that we solved for the Euler equations in this study. Then, we define the admissible set of $G$ in Eq. (\ref{eq:G}) with $E$ defined as in Eq. (\ref{eq:redefineE}). After rewriting, $G$ becomes a convex set as in Zhang and Shu \cite{Zhang2010,Zhang2011} and can be preserved to be positive if only the source term of energy equation is treated appropriately.

Assume that the computational domain constitutes $M$ uniformly distributed cells and cell $j$ is defined in $[x_{j-\frac{1}{2}}, x_{j+\frac{1}{2}}]$. Consider the following finite volume scheme with forward Euler time integration,
\begin{equation}
    \overline{U}_j^{n+1} = \overline{U}_j^n - \frac{\Delta t}{\Delta x} \left[\hat{F}\left(U_{j+\frac{1}{2}}^-, U_{j+\frac{1}{2}}^+\right) - \hat{F}\left(U_{j-\frac{1}{2}}^-, U_{j-\frac{1}{2}}^+\right)\right] + \Delta t\overline{S}^n\,,
\end{equation}
where $\Delta t$ and $\Delta x$ are the time and space step, respectively. $\overline{U}_j$ means the averaged value in cell $j$. The superscripts $n$ and $n+1$ mean that the variable is defined in time step $n$ and $n+1$, respectively. Denote the piece-wise high order reconstructed polynomial based on the cell averaged value $\overline{U}^n$ for each cell as $Q^n_j(x)$. $U_{j+\frac{1}{2}}^- = Q^n_j(x_{j+\frac{1}{2}})$ and $U_{j-\frac{1}{2}}^- = Q^n_j(x_{j-\frac{1}{2}})$. $\hat{F}$ is the numerical flux. $\overline{S}^n$ is defined as 
\begin{equation}
\overline{S}^n = \frac{1}{\Delta x}\int^{x_{j+\frac{1}{2}}}_{x_{j-\frac{1}{2}}}{S(Q_j(x))}\,.
\end{equation}

Let $x_j^\eta$ be the $W$-point Legendre Gauss-Lobatto quadrature points for the interval $[x_{j-\frac{1}{2}}, x_{j+\frac{1}{2}}]$ and $\hat{w}_\eta$ be the corresponding weights int the standard cell $[-1/2,1/2]$ such that $\sum_{\eta = 1}^{W}{\hat{w}_\eta} = 1$ and $2 W -3 $ is greater than the order of $Q^n_j(x)$. 
\begin{theorem}\cite{Zhang2010}
    For a finite volume scheme and a positive preserving flux $\hat{F}$, if $Q^n_j(x_j^\alpha) \in G$ for all $j$ and all $\alpha$, then $\overline{U}_j^n - \frac{\Delta t}{\Delta x} \left[\hat{F}\left(U_{j+\frac{1}{2}}^-, U_{j+\frac{1}{2}}^+\right) - \hat{F}\left(U_{j-\frac{1}{2}}^-, U_{j-\frac{1}{2}}^+\right)\right] \in G$ under the CFL condition
    \begin{equation}
        \left\||u|+acou\right\|_\infty \frac{\Delta t}{\Delta x} \leq \hat{w}_1 \delta_0\,.
    \end{equation}
    \label{theorem_positive}
\end{theorem}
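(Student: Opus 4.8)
The plan is to reproduce, in the present multi-species setting, the convex-decomposition argument of Zhang and Shu \cite{Zhang2010} — the source term $\Delta t\,\overline{S}^n$ is deliberately excluded here and will be handled separately by the PMPRK scheme. Write $\lambda=\Delta t/\Delta x$ and denote by $\Lambda_j$ the quantity whose membership in $G$ is to be shown,
\[
\Lambda_j := \overline{U}_j^n - \lambda\left[\hat{F}\!\left(U_{j+\frac12}^-,U_{j+\frac12}^+\right)-\hat{F}\!\left(U_{j-\frac12}^-,U_{j-\frac12}^+\right)\right].
\]
The first step is to use the exactness of the $W$-point Legendre Gauss--Lobatto rule on polynomials of degree $\le 2W-3$ — which by hypothesis exceeds $\deg Q_j^n$ — to write $\overline{U}_j^n=\sum_{\eta=1}^{W}\hat{w}_\eta Q_j^n(x_j^\eta)$. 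Since the two cell endpoints are themselves nodes, $Q_j^n(x_j^1)=U_{j-\frac12}^+$ and $Q_j^n(x_j^W)=U_{j+\frac12}^-$, and by the symmetry of the rule $\hat{w}_1=\hat{w}_W$; in particular these two boundary values are among the $Q_j^n(x_j^\eta)$ assumed to lie in $G$.

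The core step is the regrouping. Substituting the quadrature identity into $\Lambda_j$ and introducing the auxiliary flux value $\hat{F}(U_{j-\frac12}^+,U_{j+\frac12}^-)$ so that the cross terms cancel, I would write
\[
\Lambda_j=\sum_{\eta=2}^{W-1}\hat{w}_\eta Q_j^n(x_j^\eta)+\hat{w}_1 H_j^+ + \hat{w}_1 H_j^-,
\]
where
\[
\begin{aligned}
H_j^+&=U_{j+\frac12}^- - \frac{\lambda}{\hat{w}_1}\left[\hat{F}\!\left(U_{j+\frac12}^-,U_{j+\frac12}^+\right)-\hat{F}\!\left(U_{j-\frac12}^+,U_{j+\frac12}^-\right)\right],\\
H_j^-&=U_{j-\frac12}^+ - \frac{\lambda}{\hat{w}_1}\left[\hat{F}\!\left(U_{j-\frac12}^+,U_{j+\frac12}^-\right)-\hat{F}\!\left(U_{j-\frac12}^-,U_{j-\frac12}^+\right)\right].
\end{aligned}
\]
Each of $H_j^{\pm}$ is literally one step of the first-order finite volume scheme generated by $\hat{F}$, applied to a three-point stencil whose states all belong to $G$ ($U_{j-\frac12}^+,U_{j+\frac12}^-,U_{j+\frac12}^+$ for $H_j^+$ and $U_{j-\frac12}^-,U_{j-\frac12}^+,U_{j+\frac12}^-$ for $H_j^-$), but with the time-step ratio $\lambda$ replaced by $\lambda/\hat{w}_1$.

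It then remains to invoke two standing facts. First, because $\hat{F}$ is a positive-preserving flux, the first-order scheme it generates keeps the state in $G$ as soon as its CFL number satisfies $(\lambda/\hat{w}_1)\left\||u|+acou\right\|_\infty\le\delta_0$, which is exactly the hypothesis $\left\||u|+acou\right\|_\infty\,\Delta t/\Delta x\le\hat{w}_1\delta_0$; hence $H_j^{\pm}\in G$. Second, $G$ with $E$ taken as in Eq. (\ref{eq:redefineE}) is convex — the constraint $p>0$ amounts to the concavity of $E-\tfrac12 m^2/\rho$ in the conserved variables and the remaining constraints are linear — so, since $\hat{w}_2,\dots,\hat{w}_{W-1},\hat{w}_1,\hat{w}_1$ are positive and sum to $\sum_\eta\hat{w}_\eta=1$, $\Lambda_j$ is a convex combination of points of $G$ and therefore lies in $G$. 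The main obstacle is not conceptual but lies in getting the decomposition exactly right — in particular the choice of the shared flux $\hat{F}(U_{j-\frac12}^+,U_{j+\frac12}^-)$ that cancels the cross terms in $\hat{w}_1H_j^+ + \hat{w}_1H_j^-$ — and in checking that $\left\||u|+acou\right\|_\infty$, with $acou$ the sound speed, genuinely bounds the wave speeds controlling the positivity CFL $\delta_0$ of the particular first-order scheme built from $\hat{F}$; both are routine but flux-dependent and must be verified for the numerical flux actually employed.
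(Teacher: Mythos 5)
Your proposal is correct and is essentially the proof of this theorem as given in the cited source \cite{Zhang2010}: the Gauss--Lobatto quadrature decomposition of $\overline{U}_j^n$, the regrouping of the flux difference into two formal forward-Euler steps $H_j^{\pm}$ of the first-order scheme with rescaled CFL ratio $\lambda/\hat{w}_1$, and the convexity of $G$ (which here requires the redefined $E$ of Eq.~(\ref{eq:redefineE}), as you correctly note). The paper itself states the theorem by citation without reproducing the argument, so there is nothing further to compare.
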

$acou$ is the acoustic wave speed defined as $acou = \sqrt{\gamma p/\rho}$, $\delta_0$ is a constant number depending on the positive preserving flux, e.g., the Godunov flux \cite{Einfeldt1991}, the Lax-Friedrichs flux \cite{Perthame1996}, the Boltzmann type flux \cite{Mathematics2019} , the Harten-Lax-van Leer flux \cite{Harten1983} and the HLLC \cite{batten1997choice,cheng2014positivity} flux. For the Lax-Friedrichs flux, $\delta_0$ can reach $1$ \cite{Zhang2010} and for four-point Legendre Gauss-Lobatto quadrature points, $\hat{w}_1 = \frac{1}{6}$.

Then incorporating the PMPRK scheme, the final single-step forward Euler time integration scheme of the Euler equations can be written as
\begin{equation}
    \begin{aligned}
        \overline{\rho}_j^{n+1} & = \left[\overline{\rho}_j^n - \frac{\Delta t}{\Delta x} (\hat{F}_{j+\frac{1}{2},1} - \hat{F}_{j-\frac{1}{2},1})\right]\,,\\
        \overline{m}_j^{n+1} & = \left[\overline{m}_j^n - \frac{\Delta t}{\Delta x} (\hat{F}_{j+\frac{1}{2},2} - \hat{F}_{j-\frac{1}{2},2})\right]\,,\\
        \overline{\rho e}_j^{n+1} & = \left[\overline{E}_j^n - \frac{\Delta t}{\Delta x} (\hat{F}_{j+\frac{1}{2},3} - \hat{F}_{j-\frac{1}{2},3}) - \frac{1}{2}\frac{{\overline{m}_j^{n+1}}^2}{\overline{\rho}_j^{n+1}}\right] - \Delta t \sum_{k=1}^K{\overline{R}_k^n \lambda_{ki} M_i h_i^0 \chi_k} \,,\\
        \overline{\rho Y_{i,j}}^{n+1} & = \left[\overline{\rho Y_{i,j}}^{n} - \frac{\Delta t}{\Delta x}(\hat{F}_{j+\frac{1}{2},3+i} - \hat{F}_{j-\frac{1}{2},3+i})\right] + \Delta t\overline{R}_k^n \lambda_{k(i)} M_{(i)} \chi_k\,,
        i=1,2,\cdots,N\,,
    \end{aligned}
    \label{eq:euler_update}
\end{equation}
where $\overline{\rho e} = \overline{E} - \frac{1}{2}\frac{{\overline{m}}^2}{\overline{\rho}}$. According to Theorem \ref{theorem_positive}, the values inside the square bracket in Eq. (\ref{eq:euler_update}) are all positive. In Eq. (\ref{eq:euler_update}), $\rho e$ can be treated as a special kind of specie. If $ \lambda_{ki}M_i h_i^0 < 0$ (exothermic), $\rho e$ is treated as a production; if $\lambda_{ki}M_i h_i^0 > 0$ (endothermic), $\rho e$ is treated as a reactant. However, the introduction of $\rho e$ equation breaks down the constraints of Eq. (\ref{eq:dominant}), i.e., if $\rho e$ is treated as a specie and if the reaction is exothermic, $\sum_{i \in \Ret(k)}{b_{ki}} < \sum_{i \not \in \Ret(k)}{b_{ki}}$. . Luckily, through the numerical investigations in a same way as described in Sec. \ref{sec:general_positivity}, even for the general system with 100 species and 100 reactions with each of which being either exothermic or endothermic, there exists a unique solution for the PMPRK scheme, which is different from the purely linear case in Huang and Shu \cite{Huang2019a} and Huang et al. \cite{Huang2019}, where a diagonal dominant requirements of Eq. (\ref{eq:dominant}) must be fulfilled.

To illustrate this, let us consider a very specific case with two species and two reactions. Each reaction involves only one reactant, and one reaction is endothermic, the other is exothermic. The system of PMPRK scheme, when written in the form of Eq. (\ref{eq:conjecture}), can be expressed as
\begin{equation}
    \begin{aligned}
    c_1 & = a_1 - b_{11} \frac{c_1}{c_1^*} + b_{21}\sqrt{\frac{c_2 c_3}{c_2^* c_3^*}}\,,\\
    c_2 & = a_2 + b_{12} \frac{c_1}{c_1^*} - b_{22}\sqrt{\frac{c_2 c_3}{c_2^* c_3^*}}\,,\\
    c_3 & = a_3 + b_{13} \frac{c_1}{c_1^*} - b_{23}\sqrt{\frac{c_2 c_3}{c_2^* c_3^*}}\,,\\
    \end{aligned}
    \label{eq:2specie2reaction}
\end{equation}
where $c_3$ represents the equation for the internale energy and $b_{11} \geq b_{12}$, $b_{21} \leq b_{22}$. Since $b_{ki}$ and $c_i^*$, $i=1,2,3; k=1,2$ are all constants, denoting $b_{1i}^* = b_{1i}/c_1^*, i=1,2,3$ and $b_{2i}^* = b_{2i}/\sqrt{c_2^* c_3^*}$, Eq. (\ref{eq:2specie2reaction}) can be simplified as
\begin{equation}
    \begin{aligned}
    c_1 & = a_1 - b_{11}^* {c_1} + b_{21}^*\sqrt{c_2 c_3}\,,\\
    c_2 & = a_2 + b_{12}^* {c_1} - b_{22}^*\sqrt{c_2 c_3}\,,\\
    c_3 & = a_3 + b_{13}^* {c_1} - b_{23}^*\sqrt{c_2 c_3}\,,\\
    \end{aligned}
    \label{eq:2specie2reaction2}
\end{equation}
and $b_{11}^* \geq b_{12}^*$, $b_{21}^* \leq b_{22}^*$. Denoting $t = \sqrt{c_2 c_3}$, $c_1$ can be expressed as $c_1 = \frac{a_1+b_{21}^*t}{1+b_{11}^*}$, the substitution of which into $c_2$ and $c_3$ yields
\begin{equation}
    \begin{aligned}
        c_2 & = a_2 + \frac{a_1 b_{12}^*}{1+b_{11}^*} + (\frac{b_{12}^* b_{21}^*}{1+b_{11}^*}-b_{22}^*) t\,,\\
        c_3 & = a_3 + \frac{a_1 b_{13}^*}{1+b_{11}^*} + (\frac{b_{13}^* b_{21}^*}{1+b_{11}^*}-b_{23}^*) t\,.
    \end{aligned}
    \label{eq:c2c3t}
\end{equation}
Since $b_{12}^* \leq b_{11}^*$, $\frac{b_{12}^* b_{21}^*}{1+b_{11}^*}-b_{22}^* < b_{21}^* - b_{22}^* \leq 0$. Considering that $c_2 c_3 = t^2$ and $c_2$ and $c_3$ in Eq. (\ref{eq:c2c3t}) should all be greater than 0, it can be concluded that there exists a unique $t$ satisfying $t > 0,\,c_1 >0,\,c_2 >0$ and $c_3 > 0$.

Thus, the updating scheme of Eqs. ($\ref{eq:euler_update}$) is able to keep the positivity of $\overline{\rho e}^{n+1},\,\overline{\rho Y_i}^{n+1}$ given that $\overline{\rho e}^{n} > 0,\,\overline{\rho Y_i}^{n} > 0,\,i=1,2,\cdots,N$, 
that is, the positivity of both density and pressure are preserved.

The extension to high order SSP Runge-Kutta (SSP-RK) scheme is straightforward and as described in the literature work \cite{zhang2010positivity,zhang2010maximum,xing2010positivity,zhang2012positivity,zhang2012maximum,zhang2013maximum}, i.e., for each step of SSP-RK, the update scheme can be written as a convex combination of the forward Euler scheme. 

To preserve the positivity, the restriction of the CFL condition is
\begin{equation}
        \left\||u|+acou\right\|_\infty \frac{\Delta t}{\Delta x} \leq \hat{w}_1 \delta_0 \min{\frac{\alpha_{ij}}{\beta_{ij}}}\,.
        \label{eq:cfl_upper_bond}
\end{equation}
To maximize $\min{\frac{\alpha_{ij}}{\beta_{ij}}}$, for current second order PMPRK scheme, the following RK coefficients are selected,
\begin{equation}
    \alpha_{10} = \beta_{10} = 1, \quad \alpha_{20} = \alpha_{21} = \beta_{21} = \frac{1}{2} \quad \beta_{20} = 0\,,
\end{equation}
with  $\min{\frac{\alpha_{ij}}{\beta_{ij}}} = 1$.

Note that, in this study, the 2-D examples are solved on the structured grids, where the reconstructed polynomials are obtained through a dimension-by-dimension way. The source terms are integrated with a first order of accuracy. To capture both the shock and contact discontinuities of multi-species with high resolutions, a minimum dispersion and controllable dissipation (MDCD) scheme proposed by Sun, Ren, Wang and their collaborators \cite{sun2011class, wang2013low, wang2020consistent} is utilized in the characteristic space. The flux for the Euler equations is HLLC \cite{wang2020consistent} for multi-species. Besides the finite volume method, the proposed PMPRK scheme can of course be applied in any other framework of high order of accuracy, such as the DG method and the finite difference method.
\section{Numerical Examples\label{sec:numerical_examples}}
\subsection{Accuracy Test}
Firstly, the accuracy of the proposed PMPRK scheme is verified through the following ordinary differential equations,
\begin{equation}
\begin{aligned}
    \frac{\mathrm{d} c_1}{\mathrm{d} t} & = -\lambda_0 c_1^3 c_2^3 + 4 \lambda_1 c_2^3 c_3^3 c_4\,,\\
    \frac{\mathrm{d} c_2}{\mathrm{d} t} & = -2 \lambda_0 c_1^3 c_2^3 - 2 \lambda_1 c_2^3 c_3^3 c_4\,,\\
    \frac{\mathrm{d} c_3}{\mathrm{d} t} & = 3\lambda_0 c_1^3 c_2^3 - \lambda_1 c_2^3 c_3^3 c_4\,,\\
    \frac{\mathrm{d} c_4}{\mathrm{d} t} & =  - \lambda_1 c_2^3 c_3^3 c_4\,,
\end{aligned}
\end{equation}
where two reactions are involved, $\lambda_0$ and $\lambda_1$ are constants. Initial values are $\left\{c_1,c_2,c_3,c_4\right\} = \left\{0.1, 0.4, 0, 1.0\right\}$. Time step $\Delta t = {t_{end}}/{N_t}$. The results of $N_t = 10240000$ are taken as the reference values. Two different conditions are considered, i.e., $\left\{\lambda_0, \lambda_1\right\} = \left\{1000, 2000\right\}$ and $\left\{\lambda_0, \lambda_1\right\} = \left\{1\times10^8, 2\times10^6\right\}$, the latter of which represents a system with stiff source terms. The final time is $t_{end} = 0.5$ and $0.02$ for each condition, respectively. Errors and numerical orders are listed in Tab. \ref{tab:accuracy_test}. For the non-stiff system with $\left\{\lambda_0, \lambda_1\right\} = \left\{1000, 2000\right\}$, the second prior order of accuracy is observed. For the stiff system with $\left\{\lambda_0, \lambda_1\right\} = \left\{1\times10^8, 2\times10^6\right\}$, super-convergence is observed when $N_t \leq 320$. However, when $N_t = 640$, the order decreases to 0.64 but recovers to the prior order of convergence when $N_t$ increases further. For both cases, the positivity of $c_i$ is preserved unconditionally.

\begin{table}[htbp]
    \centering
\caption{Accuracy test with initial values $\left\{c_1,c_2,c_3,c_4\right\} = \left\{0.1, 0.4, 0, 1.0\right\}$.\label{tab:accuracy_test}}
\begin{tabular}{ccccccccc}
    \toprule
\multirow{2}{*}{$N_t$} &
 \multicolumn{4}{c}{$\left\{\lambda_0, \lambda_1\right\} = \left\{1000, 2000\right\}$} & \multicolumn{4}{c}{$\left\{\lambda_0, \lambda_1\right\} = \left\{1\times10^8, 2\times 10^6\right\}$} \\
 \cline{2-9}
    & $Error_2$ & Order & $Error_{\infty}$ & Order & $Error_2$ & Order & $Error_{\infty}$ & Order \\
    \cline{1-9}
40  &1.80E-6  &     & 2.84E-6 &      & 4.00E-3 &    & 6.19E-3 &    \\ 
80  &4.45E-7  &2.01 &7.02E07  & 2.02 & 7.67E-4 &2.38& 1.17E-3 &2.41\\
160 &1.11E-7  &2.00 & 1.75E-7 & 2.00 & 6.79E-5 &3.50& 9.78E-5 &3.58\\
320 &2.77E-8  &2.00 & 4.36E-8 & 2.00 & 6.89E-6 &3.30& 9.84E-6 &3.31\\
640 &6.92E-9  &2.00 &1.09E-8  & 2.00 & 3.57E-6 &0.95& 6.00E-6 &0.71\\
1280&1.73E-9  &2.00 &2.74E-9  & 1.99 & 1.27E-6 &1.49& 2.17E-6 &1.47\\
2560&4.32E-10 &2.00 &6.96E-10 & 1.98 & 3.14E-7 &2.02& 5.33E-7 &2.03\\
\bottomrule
\end{tabular}
\end{table}

\subsection{1D Detonation of $CH_4$\label{section1ddetonationCH4}}
In this example, the following 1D detonation problem \cite{bao2002random} is considered,
\[
    CH_4 + 2O_2 \rightarrow CO_2 +2 H_2O\,.
\]
The reaction rate is defined as
\begin{equation}
    R_1 = B_1 T^{\alpha_1} H(T-T_1) \frac{\rho Y_{CH_4}}{M_{CH_4}}\left(\frac{\rho Y_{O_2}}{M_{O_2}}\right)^2\,,
\end{equation}
where $B_1 = 1\times 10^6$, $\alpha_1 = 0$, $T_1 = 2.0$ and $H(x)$ is the Heaviside function. $M_{CH_4} = 16$, $M_{O_2} = 32$, $M_{CO_2} = 44$ and $M_{H_2O} = 18$. $h^0_{CH_4} = 500$ and $h^0_{O_2} =h^0_{CO_2} = h^0_{H_2O} = 0$. The specific heat ratio is set to be constant as $\gamma = 1.4$. Different from the idea gas assumption, in this example, the temperature is defined as $T = p/\rho$. The computational domain is $[-25, 25]$. Initial data are given by
\begin{equation}
    \left( \rho, u, p, Y_{CH_4}, Y_{O_2}, Y_{CO_2}, Y_{H_2O} \right) = 
        \left\{
            \begin{aligned}
                &\left( 2, 10, 40, 0, 0.2, 0.475, 0.325\right)\,,&x\leq -22.5\,,\\
                &\left( 1, 0, 1, 0.1, 0.6, 0.2, 0.1\right)\,,&x > -22.5\,.
            \end{aligned}
            \right.
\end{equation}
To accurately capture the detonation wave while avoiding the technique of subcell resolution \cite{wang2015high}, the computational domain is divided into 600 uniformly distributed cells. CFL number is chosen to be 0.2. It should be noted the used CFL number 0.2 is higher than the theoretical upper bond which guarantees the positivity in Eq. (\ref{eq:cfl_upper_bond}) when the four-point Legendre-Gauss-Lobatto quadrature rule is applied, i.e., $CFL \leq \frac{1}{6}$. From time $t^n$ to $t^{n} + \Delta t$, if negative pressure or density is found, the solved variables will be reinitialized by the solution at $t^{n}$ and the time step will be reduced to $\frac{\Delta t}{2}$. Such a time advancing technique keeps the CFL number as large as possible while guarantees the positivity when needed.

 The pressure, density and mass fraction of ${CH_4}$ and ${O_2}$ at $t = 3.0$ are plotted in Fig. \ref{fig:1dDetationCH4}, where the solid black lines are reference values obtained with grid cells $N = 4000$. As shown in Fig. \ref{fig:1dDetationCH4}, $N = 600$ is enough to resolve the location of right forward detonation wave.
 \begin{figure}
    \centering
    \begin{subfigure}[b]{0.4\textwidth}
        \includegraphics[width=\textwidth]{./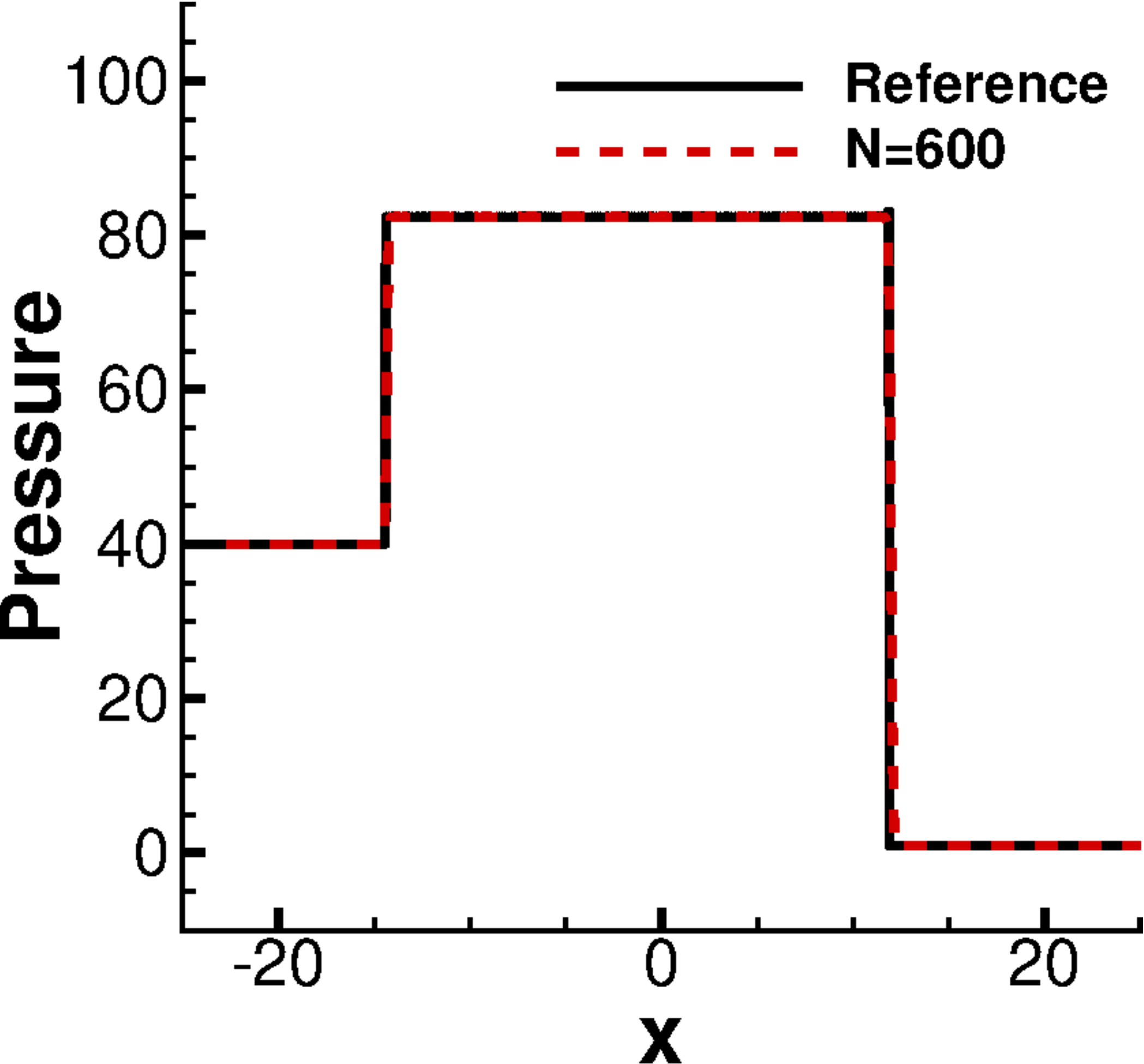}
    \end{subfigure}
    \begin{subfigure}[b]{0.4\textwidth}
        \includegraphics[width=\textwidth]{./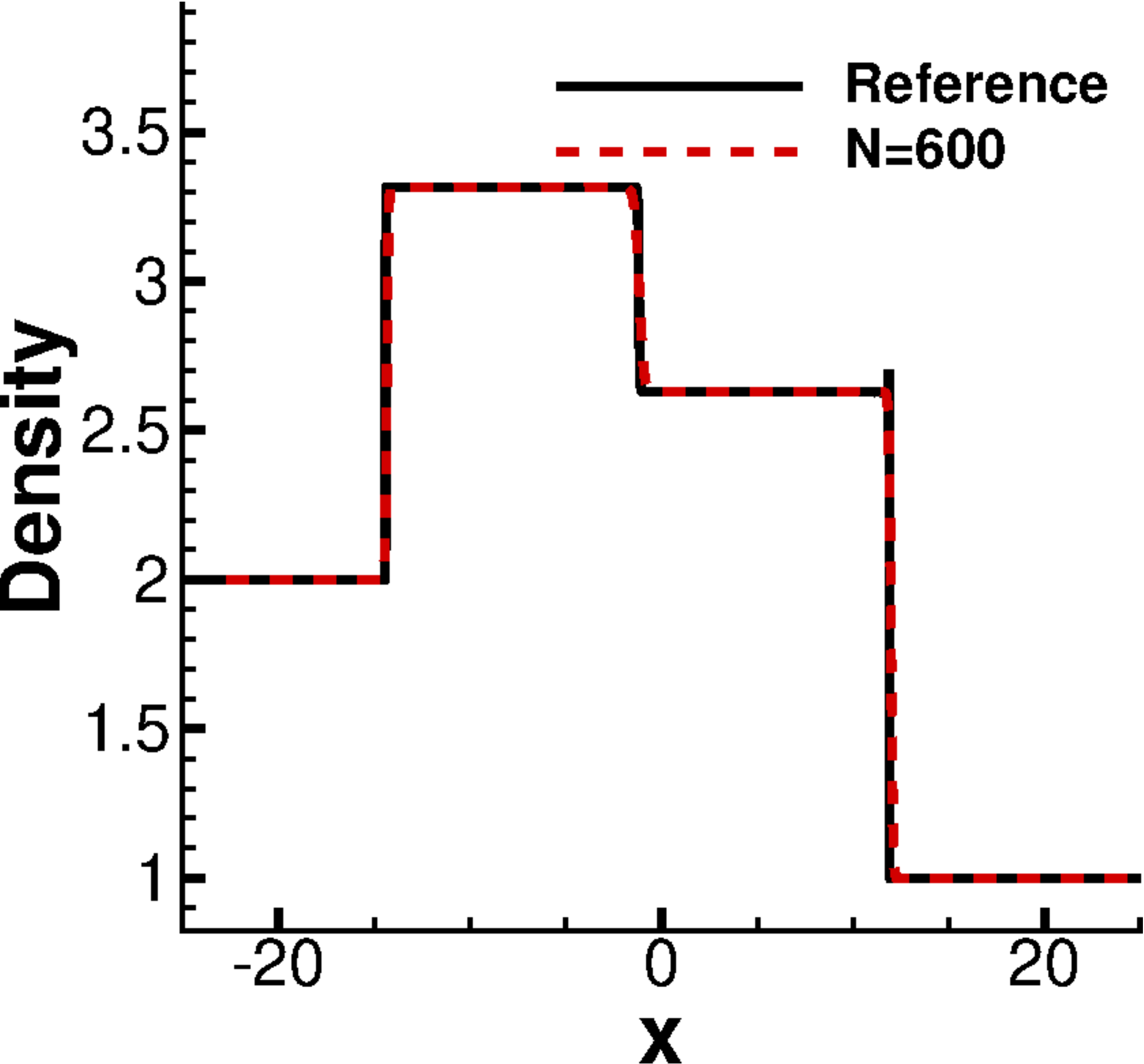}
    \end{subfigure}
    \begin{subfigure}[b]{0.4\textwidth}
        \includegraphics[width=\textwidth]{./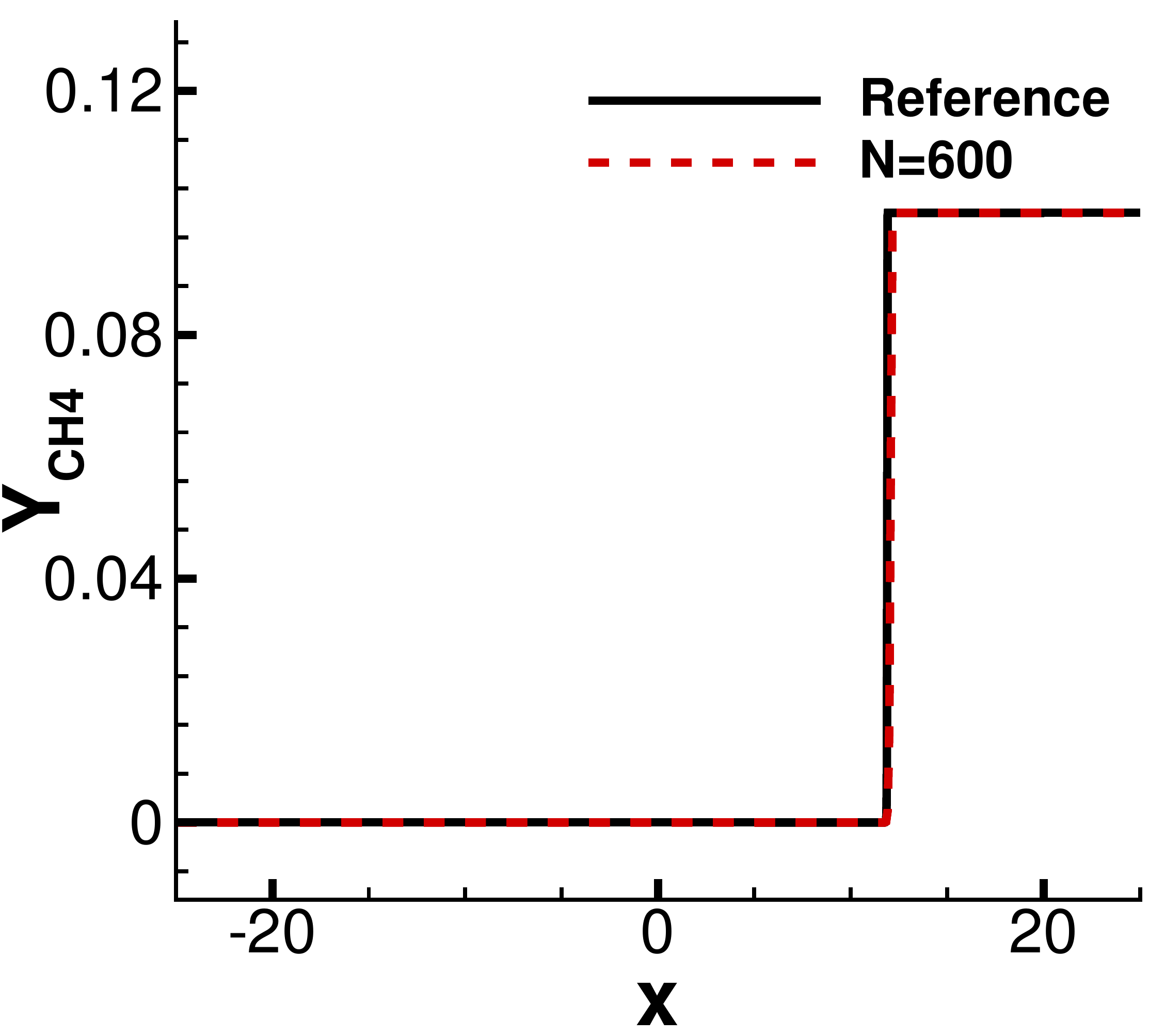}
    \end{subfigure}
    \begin{subfigure}[b]{0.4\textwidth}
        \includegraphics[width=\textwidth]{./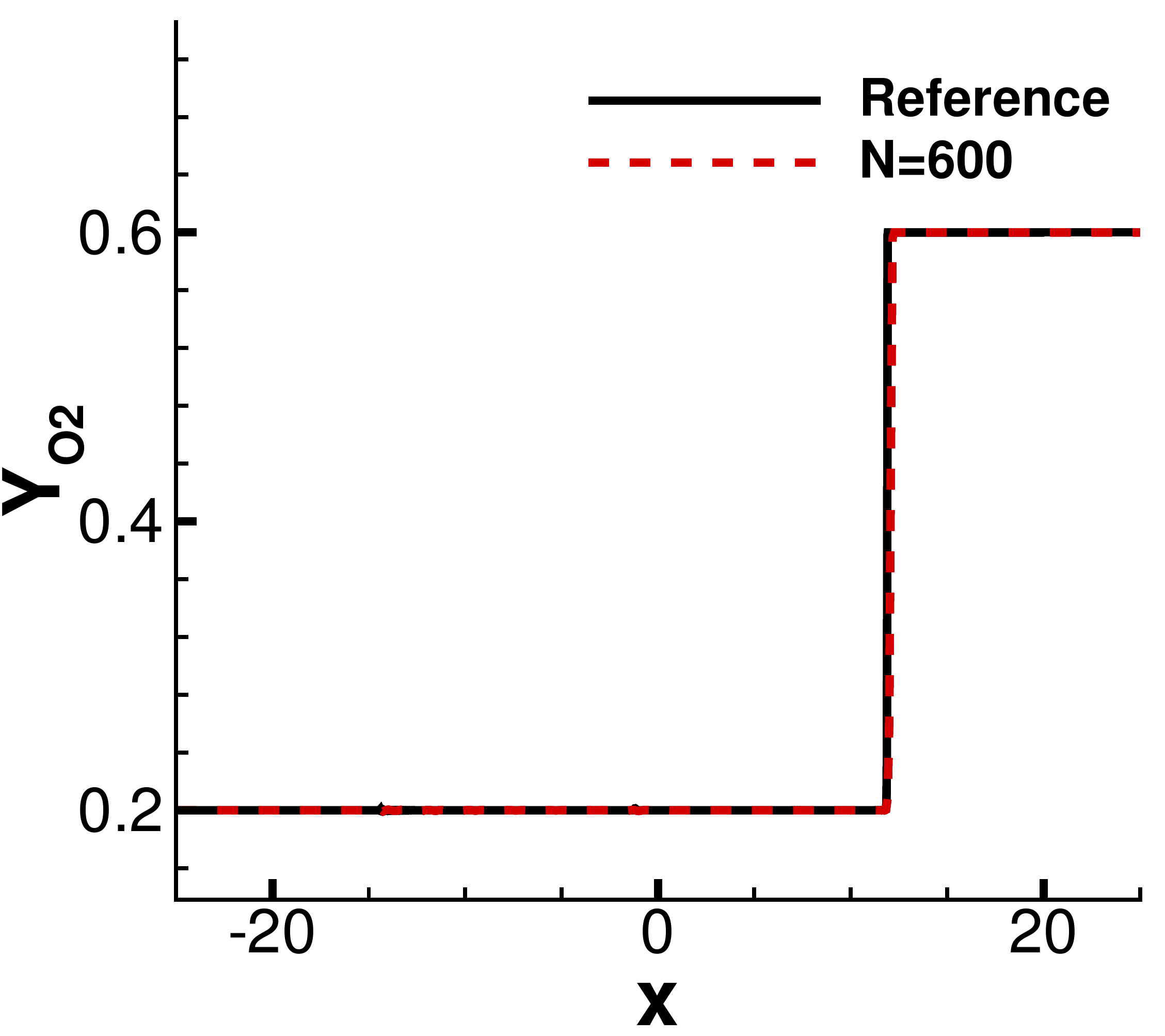}
    \end{subfigure}
\caption{Solutions of 1D detonation problem of Sec. \ref{section1ddetonationCH4} at time $t = 3.0$. CFL = 0.2 and cell number $N = 600$.}
\label{fig:1dDetationCH4}
 \end{figure}
 \subsection{2D Detonation of $CH_4$}
 In this example, a radial symmetry 2D Detonation problem \cite{wang2015high} is studied. The reaction model is the same as in Section \ref{section1ddetonationCH4} except that $h^0_{CH_4} = 200$. The computational domain is $[0,50]\times[0,50]$, where the lower and left boundaries are inviscid walls, and the upper and right boundaries are outflow conditions. The initial values consist of totally burnt gas inside a circle with radius of 10 and unburnt gas outside,
\begin{equation}
    \left( \rho, u, v, p, Y_{CH_4}, Y_{O_2}, Y_{CO_2}, Y_{H_2O} \right) = 
        \left\{
            \begin{aligned}
                &\left( 2, 10 \frac{x}{r}, 10 \frac{y}{r}, 40, 0, 0.2, 0.475, 0.325\right)\,,&r\leq 10\,,\\
                &\left( 1, 0, 0, 1, 0.1, 0.6, 0.2, 0.1\right)\,,&r > 10\,,
            \end{aligned}
            \right.
\end{equation}
where $r = \sqrt{x^2+y^2}$.

The computational domain consists of $600\times600$ cells. Contours of pressure, density and radial velocity are shown in Fig. \ref{fig:2dDetationCH4Contour}. Solutions along the line $x=y$ are shown in Fig. \ref{fig:2dDetationCH4Line}. A detonation wave, a contact discontinuity and a shock wave can be observed with the decreasing of $r$. Discontinuities are captured without oscillations and densities are advanced without negative values. Actually, for this problem, only one reaction involving two reactants is considered, the proposed Patankar scheme of Eq. (\ref{eq:conjecture}) can be solved analytically without Newton iterations. The results are consistent with those of the work in \cite{Du2019}.
 \begin{figure}[htbp]
    \centering
    \begin{subfigure}[b]{0.45\textwidth}
        \includegraphics[width=\textwidth]{./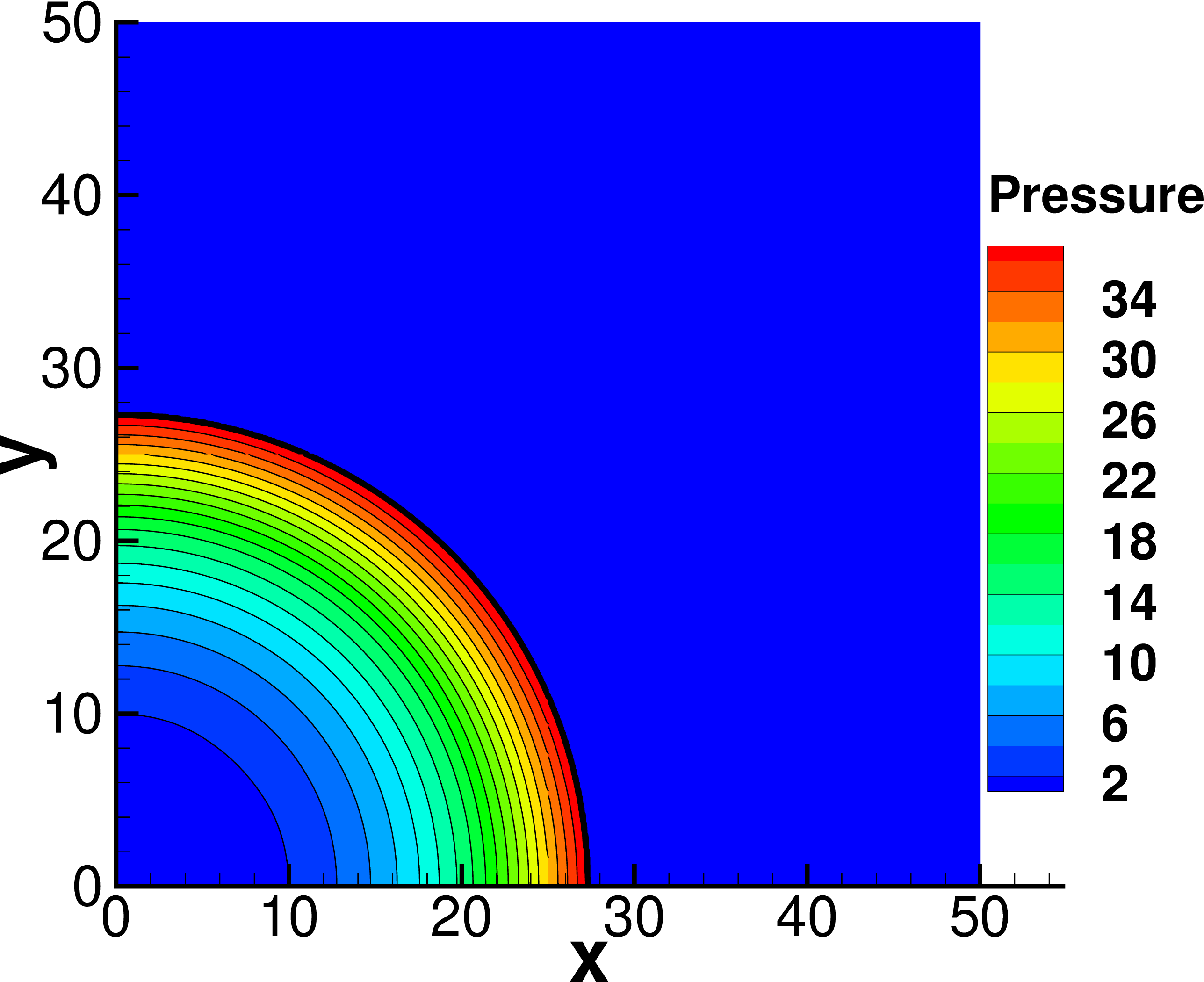}
        \caption{Pressure.}
    \end{subfigure}
    \begin{subfigure}[b]{0.45\textwidth}
        \includegraphics[width=\textwidth]{./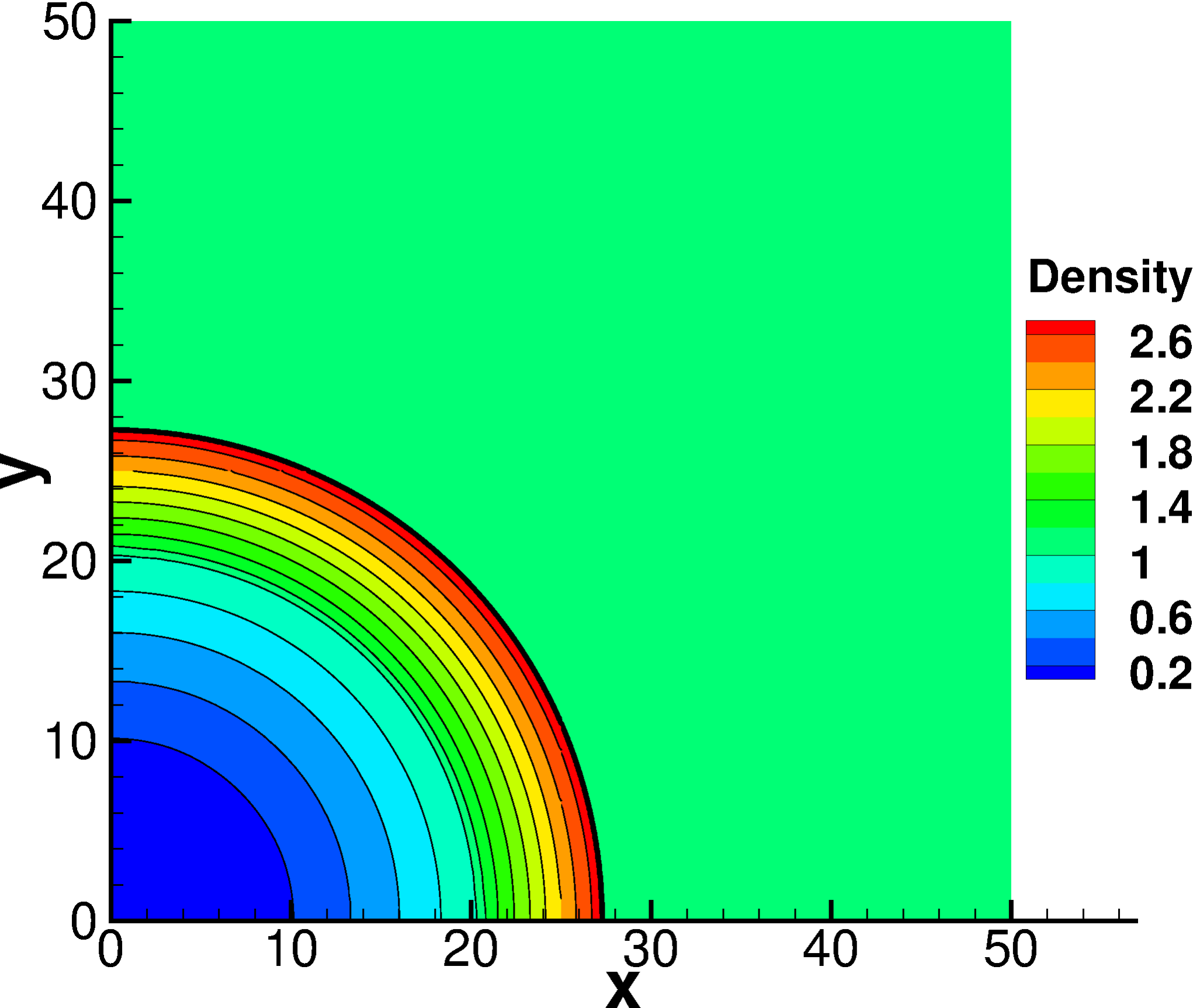}
        \caption{Density.}
    \end{subfigure}
    \begin{subfigure}[b]{0.45\textwidth}
        \includegraphics[width=\textwidth]{./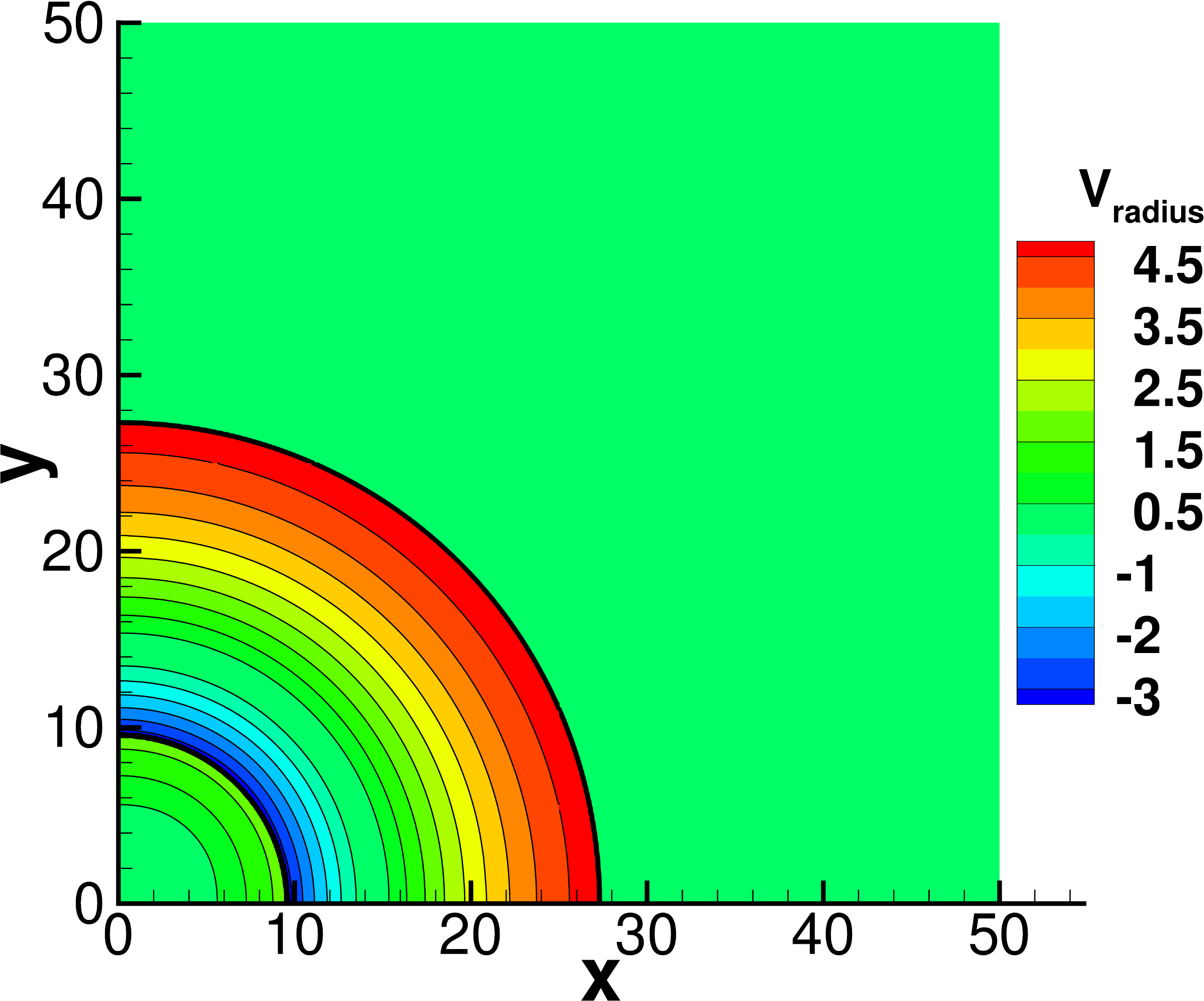}
        \caption{Radial velocity.}
    \end{subfigure}
\caption{Contours of 2D detonation problem of $CH_4$ at time $t = 2.0$. CFL = 0.2 and Cell size is $\Delta x = 1/12$.}
\label{fig:2dDetationCH4Contour}
 \end{figure}

 \begin{figure}[htbp]
    \centering
    \begin{subfigure}[b]{0.3\textwidth}
        \includegraphics[width=\textwidth]{./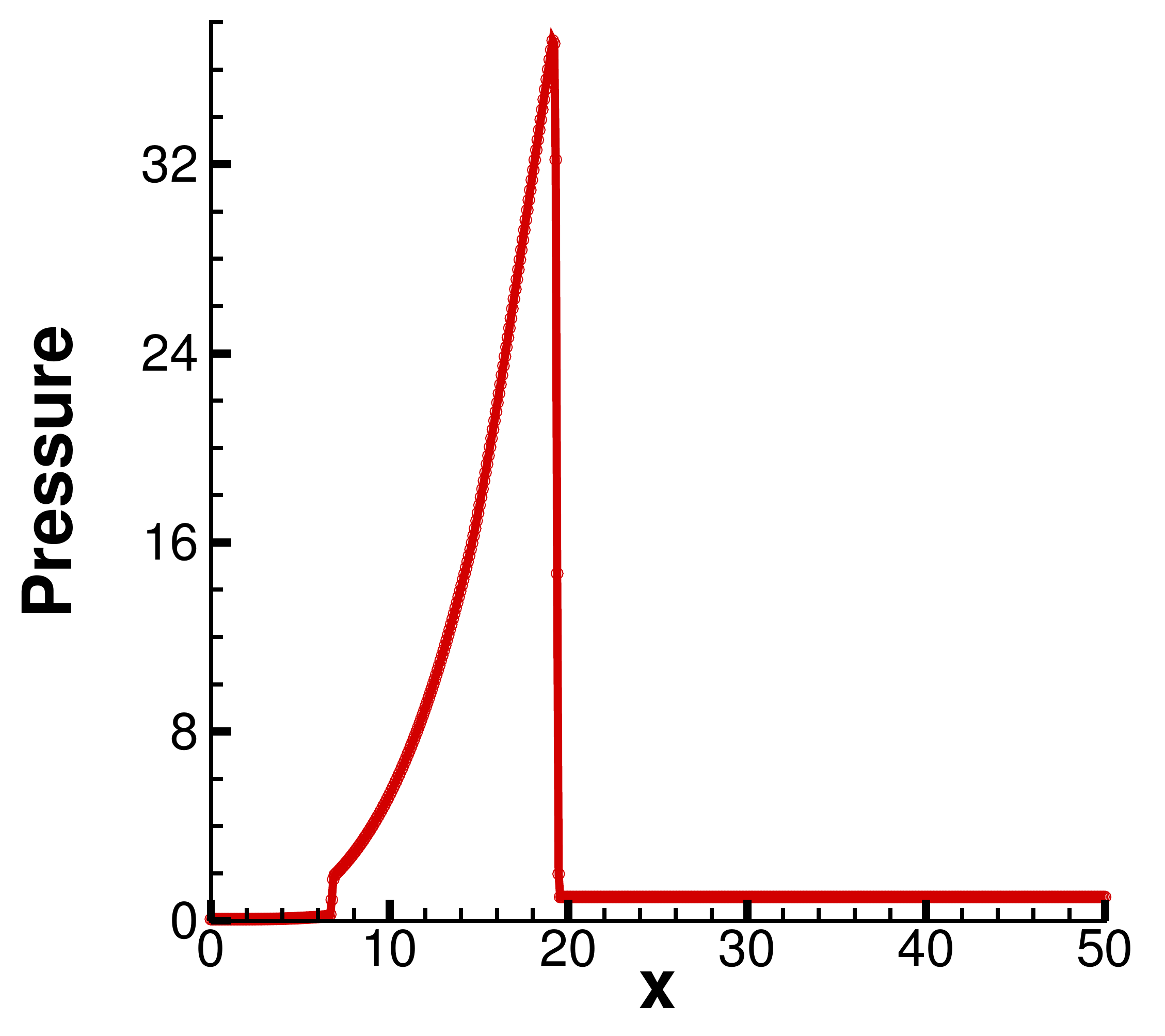}
        \caption{Pressure}
    \end{subfigure}
    \begin{subfigure}[b]{0.3\textwidth}
        \includegraphics[width=\textwidth]{./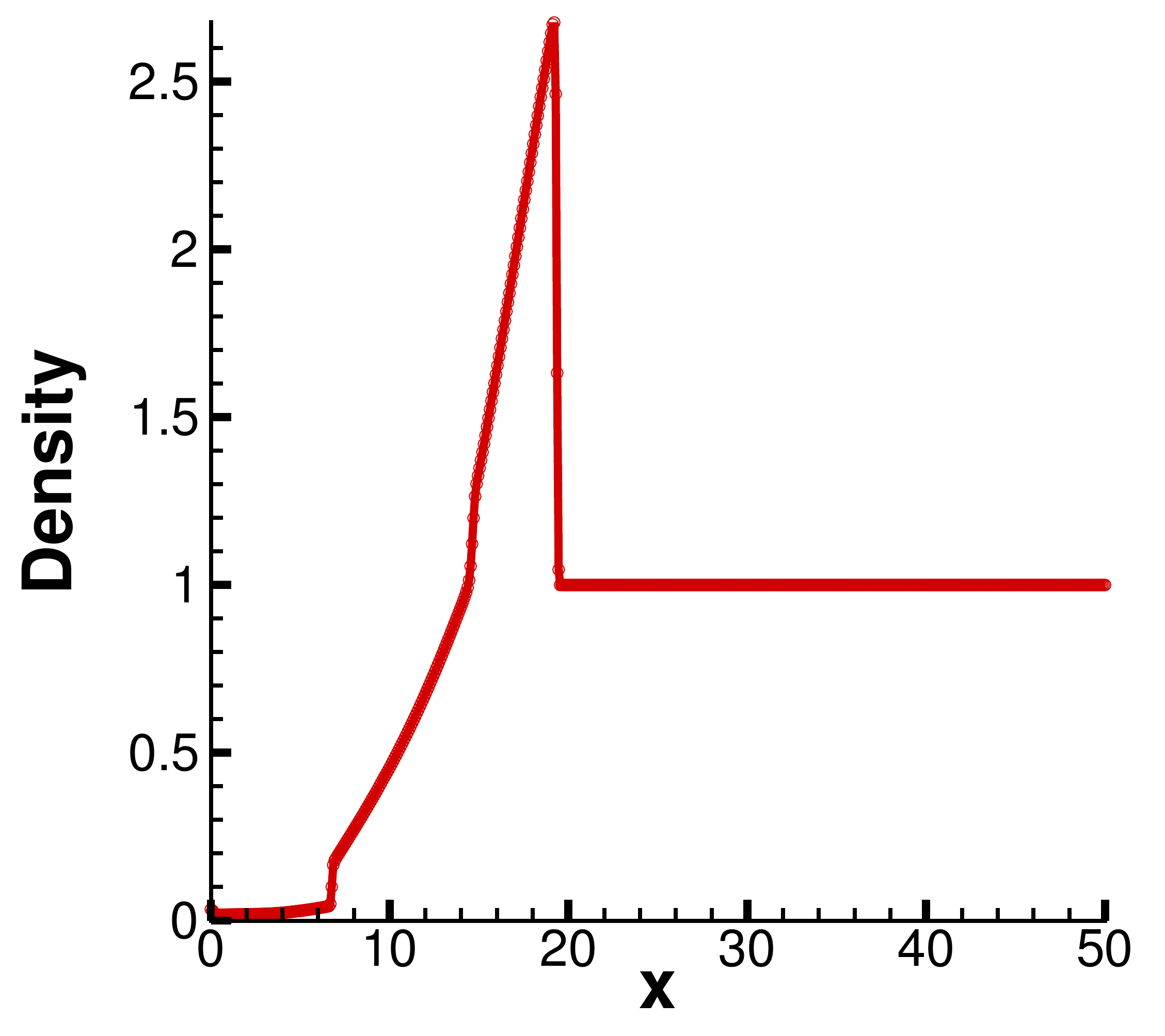}
        \caption{Density}
    \end{subfigure}
    \begin{subfigure}[b]{0.3\textwidth}
        \includegraphics[width=\textwidth]{./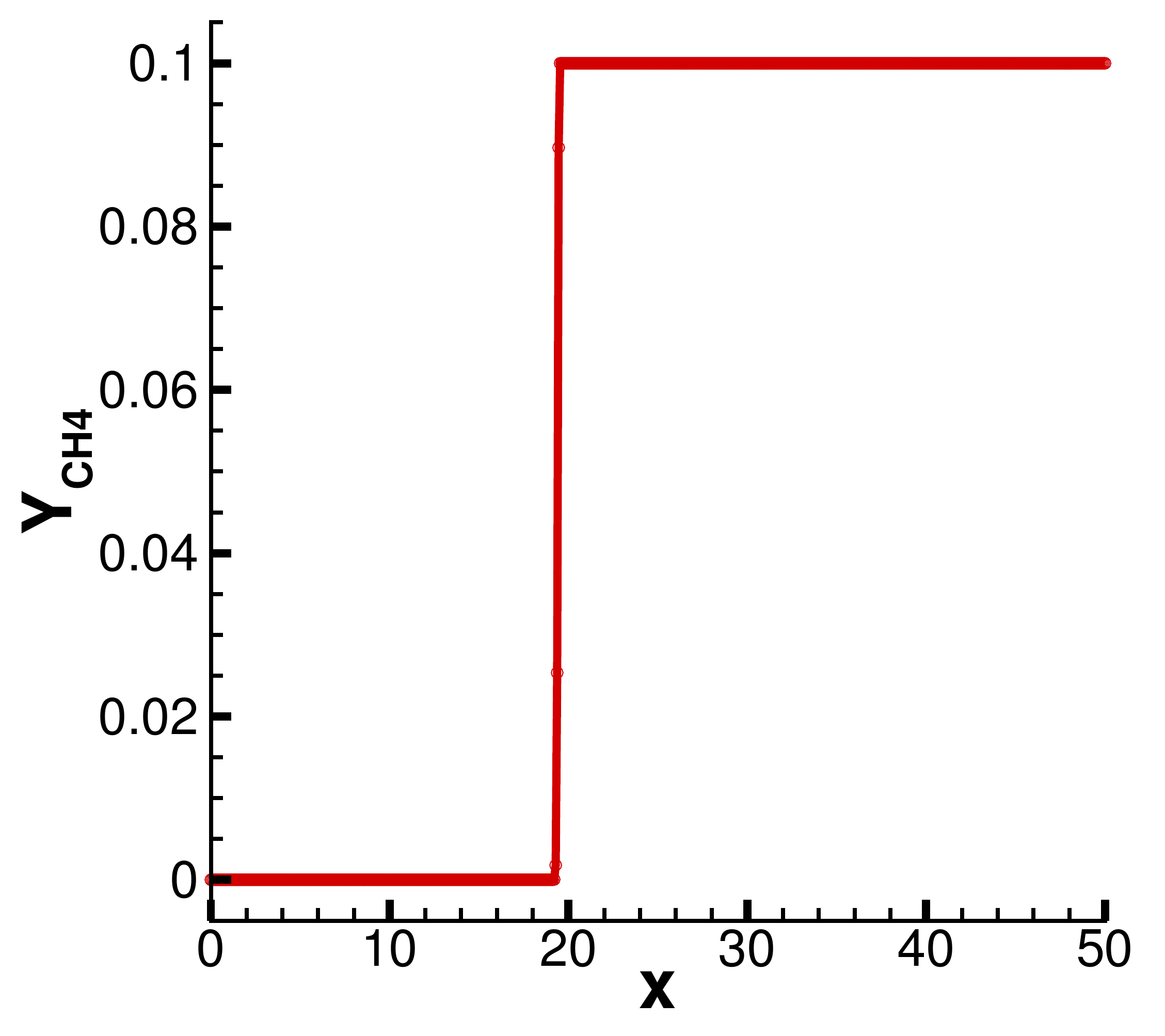}
        \caption{Mass fraction of $CH_4$}
    \end{subfigure}
    \begin{subfigure}[b]{0.3\textwidth}
        \includegraphics[width=\textwidth]{./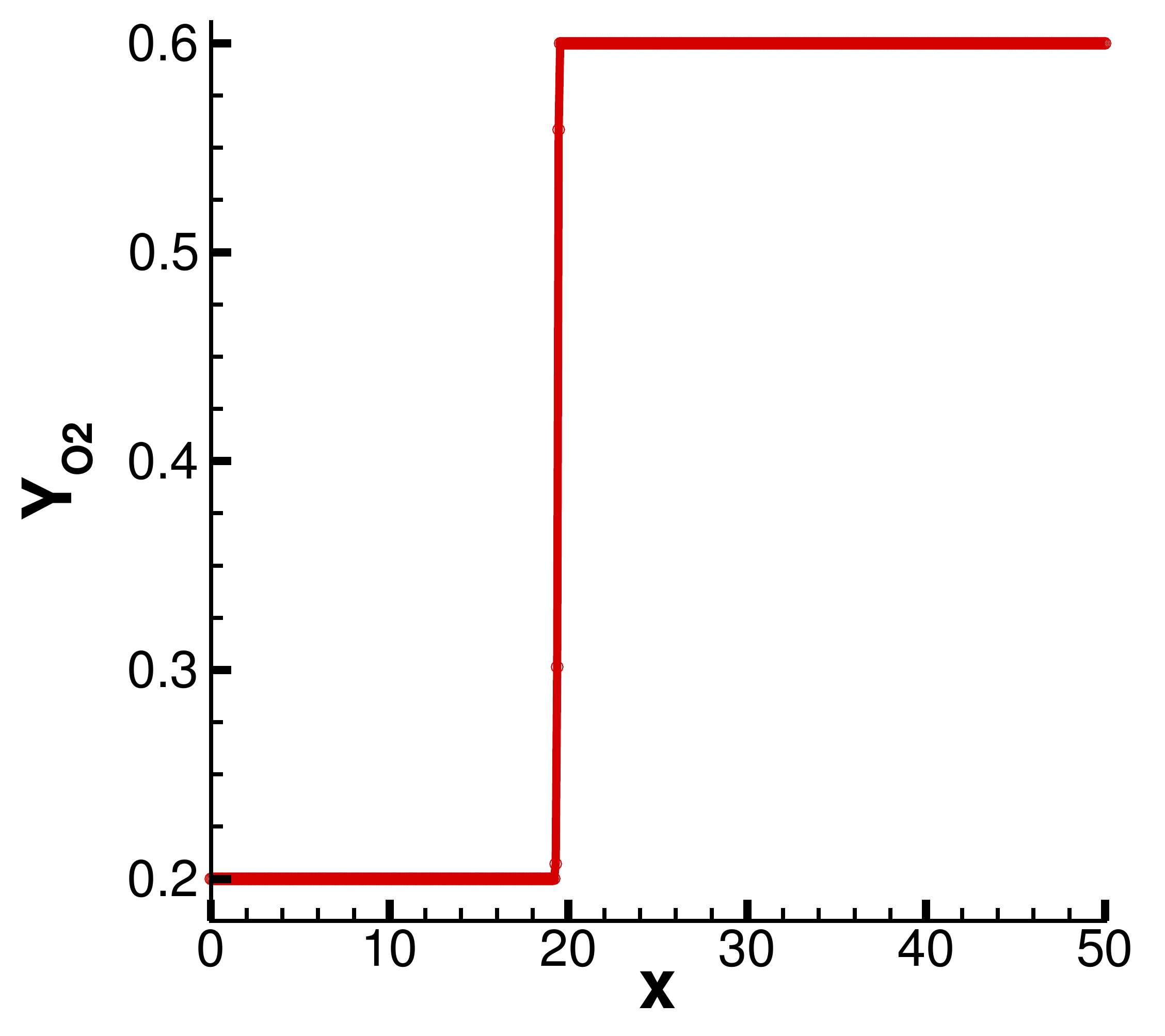}
        \caption{Mass fraction of $O_2$}
    \end{subfigure}
    \begin{subfigure}[b]{0.3\textwidth}
        \includegraphics[width=\textwidth]{./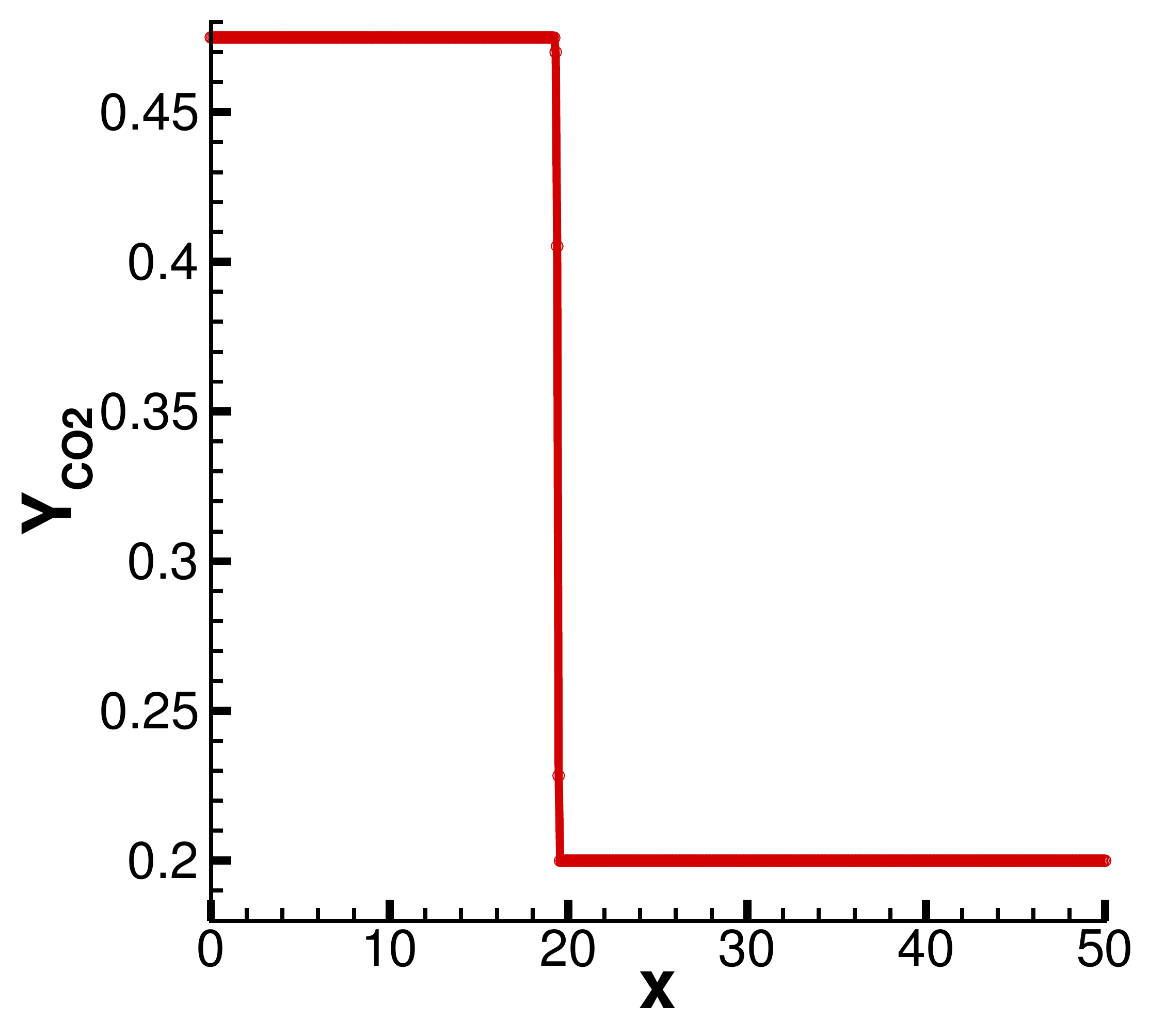}
        \caption{Mass fraction of $CO_2$}
    \end{subfigure}
    \begin{subfigure}[b]{0.3\textwidth}
        \includegraphics[width=\textwidth]{./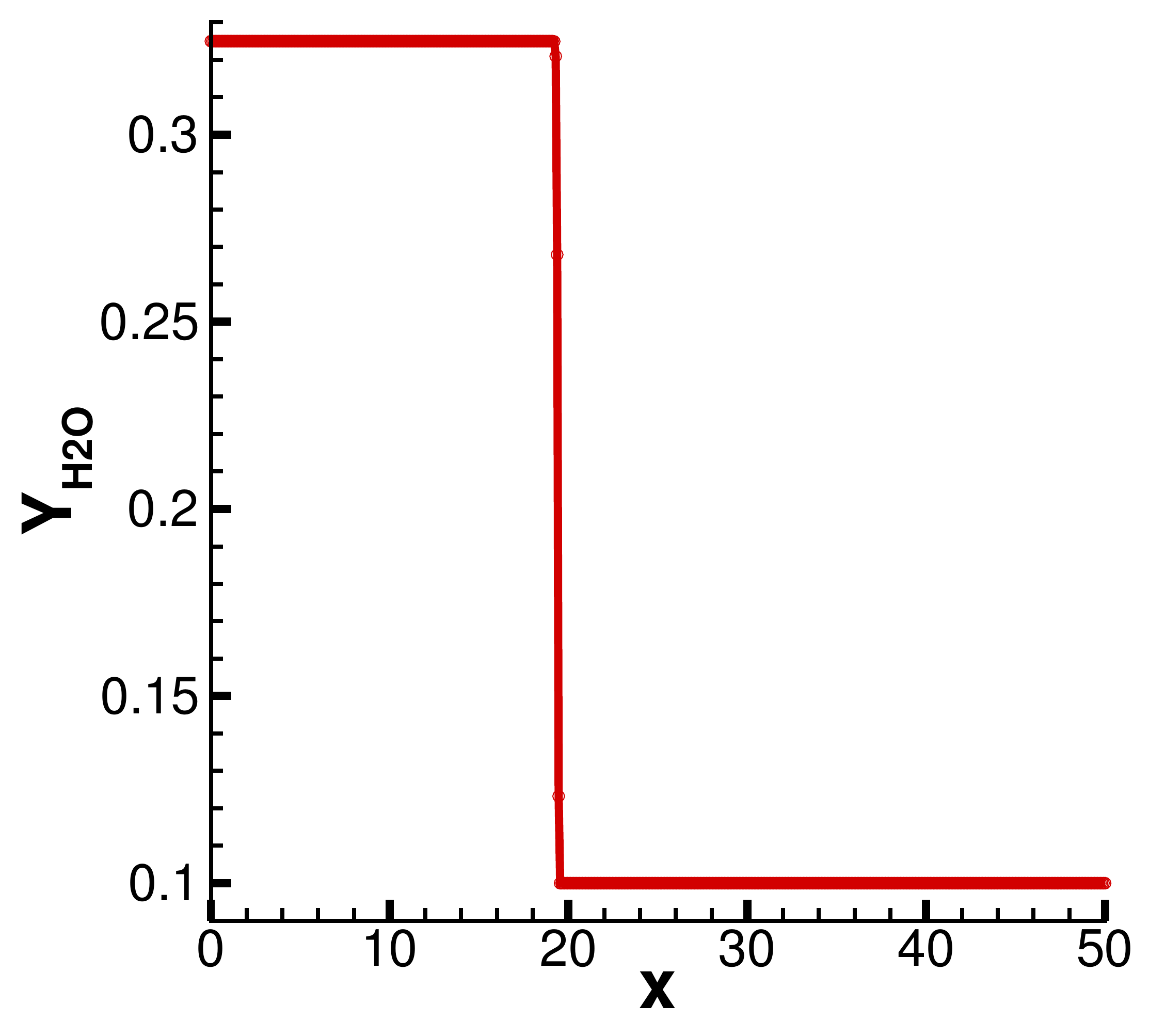}
        \caption{Mass fraction of $H_2O$}
    \end{subfigure}
\caption{Numerical solution of 2D detonation problem of $CH_4$ at time $t = 2.0$ across line $x=y$. CFL = 0.2. Cell size is $\Delta x = 1/12$.}
\label{fig:2dDetationCH4Line}
 \end{figure}
 \subsection{1D Detonation of $H_2$\label{sec:1d5species}}
 This problem involves 5 species and 2 reactions \cite{wang2015high} which can be modeled by
 \begin{equation}
    \begin{aligned}
        H_2 + O_2 &\rightarrow 2OH\,,\quad 2OH + H_2 & \rightarrow 2H_2O\,.
    \end{aligned}
 \end{equation}
$N_2$ plays the role of a catalyst. The corresponding reaction rate is
\begin{equation}
    \begin{aligned}
    R_1 & = B_1 T^{\alpha_1} H(T-T_1) \frac{\rho Y_{H_2}}{M_{H_2}}\frac{\rho Y_{O_2}}{M_{O_2}}\,,\\
    R_2 & = B_2 T^{\alpha_2} H(T-T_2) \frac{\rho Y_{H_2}}{M_{H_2}}\left(\frac{\rho Y_{OH}}{M_{OH}}\right)^2\,,
    \end{aligned}
\end{equation}
where $T_1 = 2.0$, $T_2 = 10.0$, $B_1 = B_2 = 1\times10^6$, $\alpha_1 = \alpha_2 = 0$, $h^0_{H_2} = h^0_{O_2} = h^0_{N_2} = 0$, $h^0_{OH} = -20$, $h^0_{H_2O} = -100$, $M_{H_2} = 2$, $M_{O_2} = 32$, $M_{OH} = 17$, $M_{H_2O} = 18$ and $M_{N_2} = 28$. $\gamma = 1.4$ and $T = p/\rho$. The computational domain is $[-25,25]$ and the initial values are
\begin{equation}
    \left( \rho, u, p, Y_{H_2}, Y_{O_2}, Y_{OH}, Y_{H_2O}, Y_{N_2} \right) = 
        \left\{
            \begin{aligned}
                &\left( 2, 10, 40, 0, 0, 0.17, 0.63, 0.2\right)\,,&x\leq -22.5\,,\\
                &\left( 1, 0, 1, 0.08, 0.72, 0, 0, 0.2\right)\,,&x > -22.5\,.
            \end{aligned}
            \right.
\end{equation}
Solutions with grid cell number $N = 600$ are shown in Fig. (\ref{fig:1dDetationH2O2}) where the solid lines are reference values obtained with $N = 4000$. Similarly, the exaction solution consists of a right-forward detonation wave, a contact discontinuity and a shock wave. The current Patankar scheme is able to capture the location of the detonation wave properly while retaining the positivity of density unconditionally.
 \begin{figure}
    \centering
    \begin{subfigure}[b]{0.4\textwidth}
        \includegraphics[width=\textwidth]{./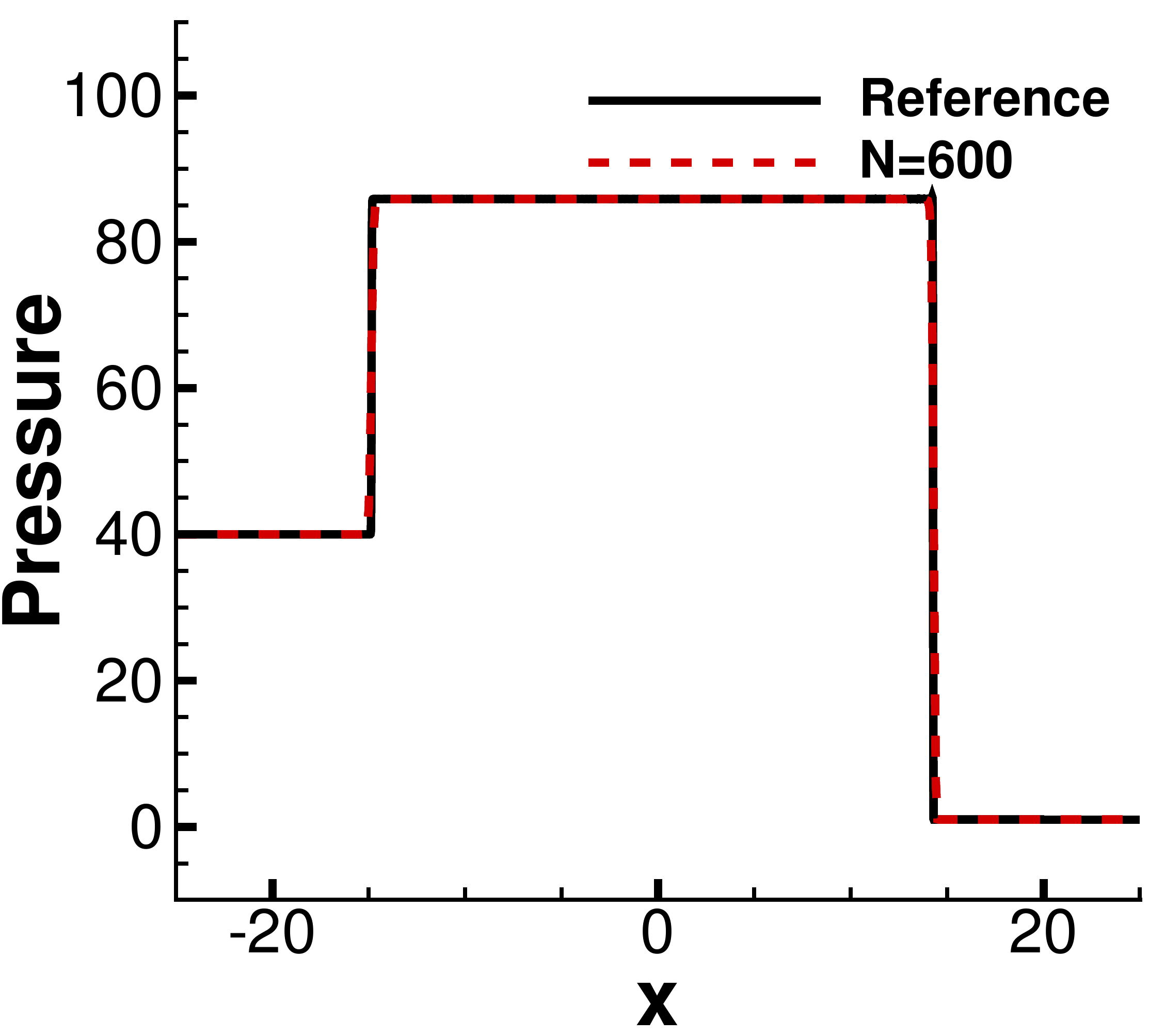}
    \end{subfigure}
    \begin{subfigure}[b]{0.4\textwidth}
        \includegraphics[width=\textwidth]{./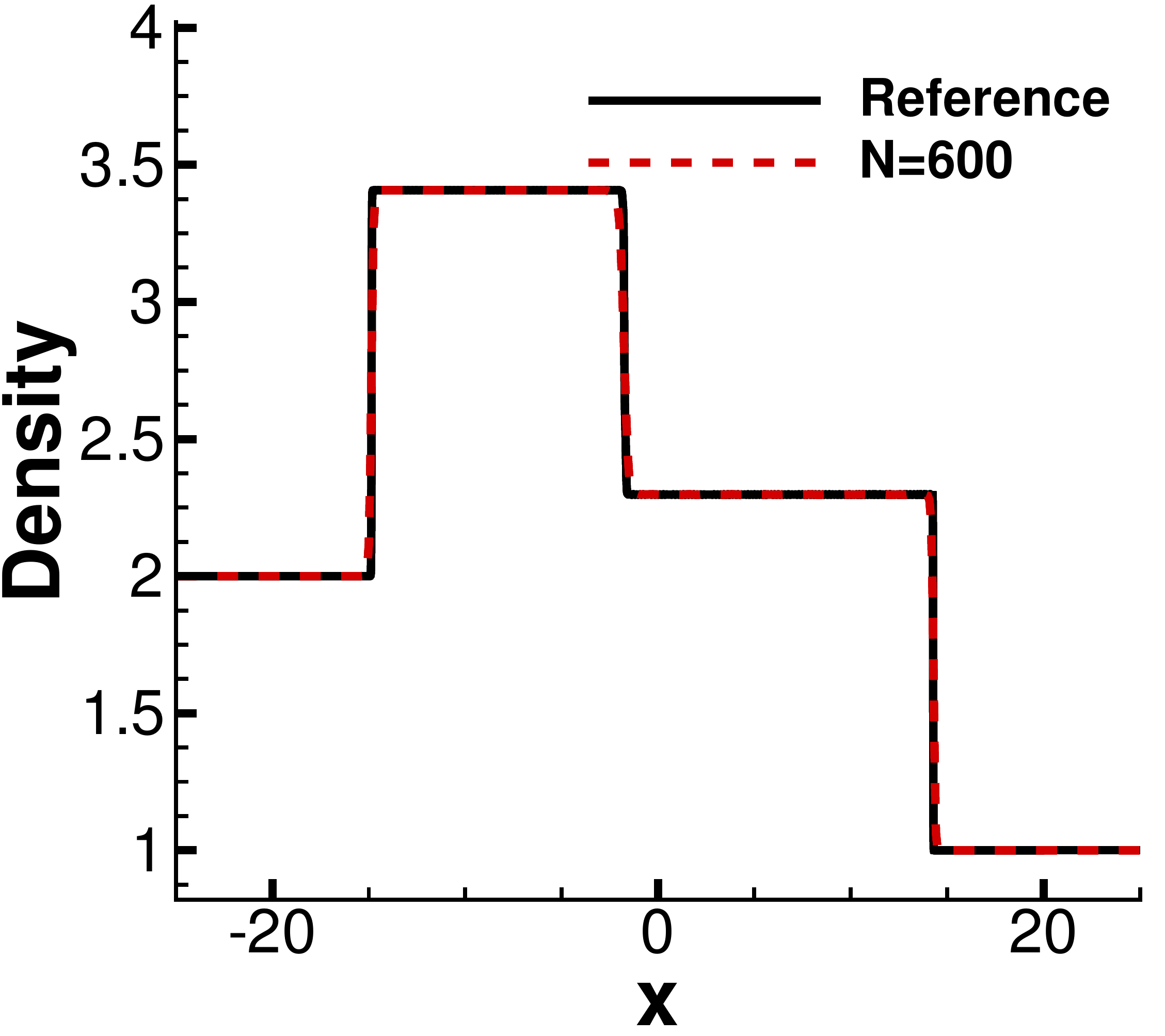}
    \end{subfigure}
    \begin{subfigure}[b]{0.4\textwidth}
        \includegraphics[width=\textwidth]{./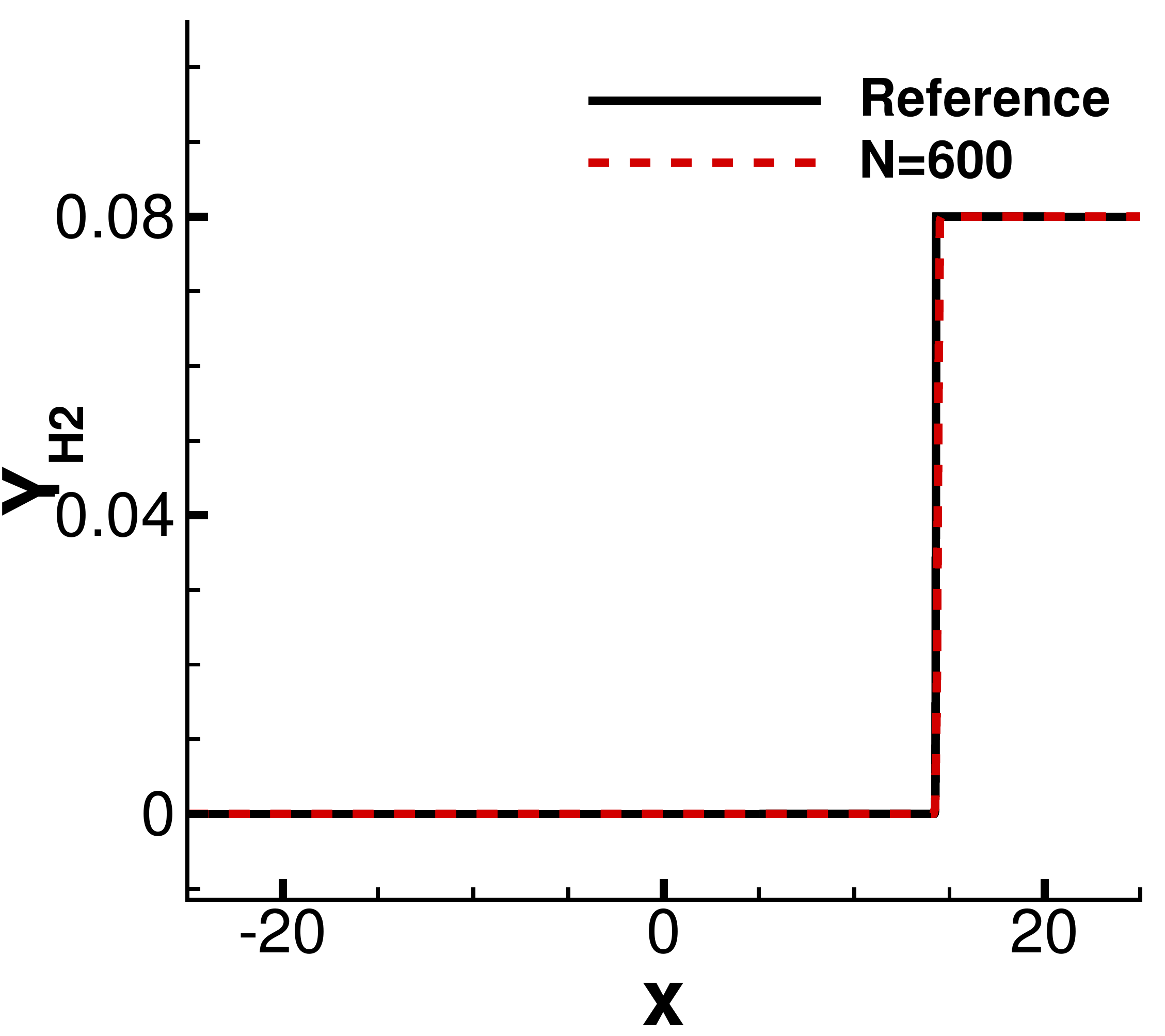}
    \end{subfigure}
    \begin{subfigure}[b]{0.4\textwidth}
        \includegraphics[width=\textwidth]{./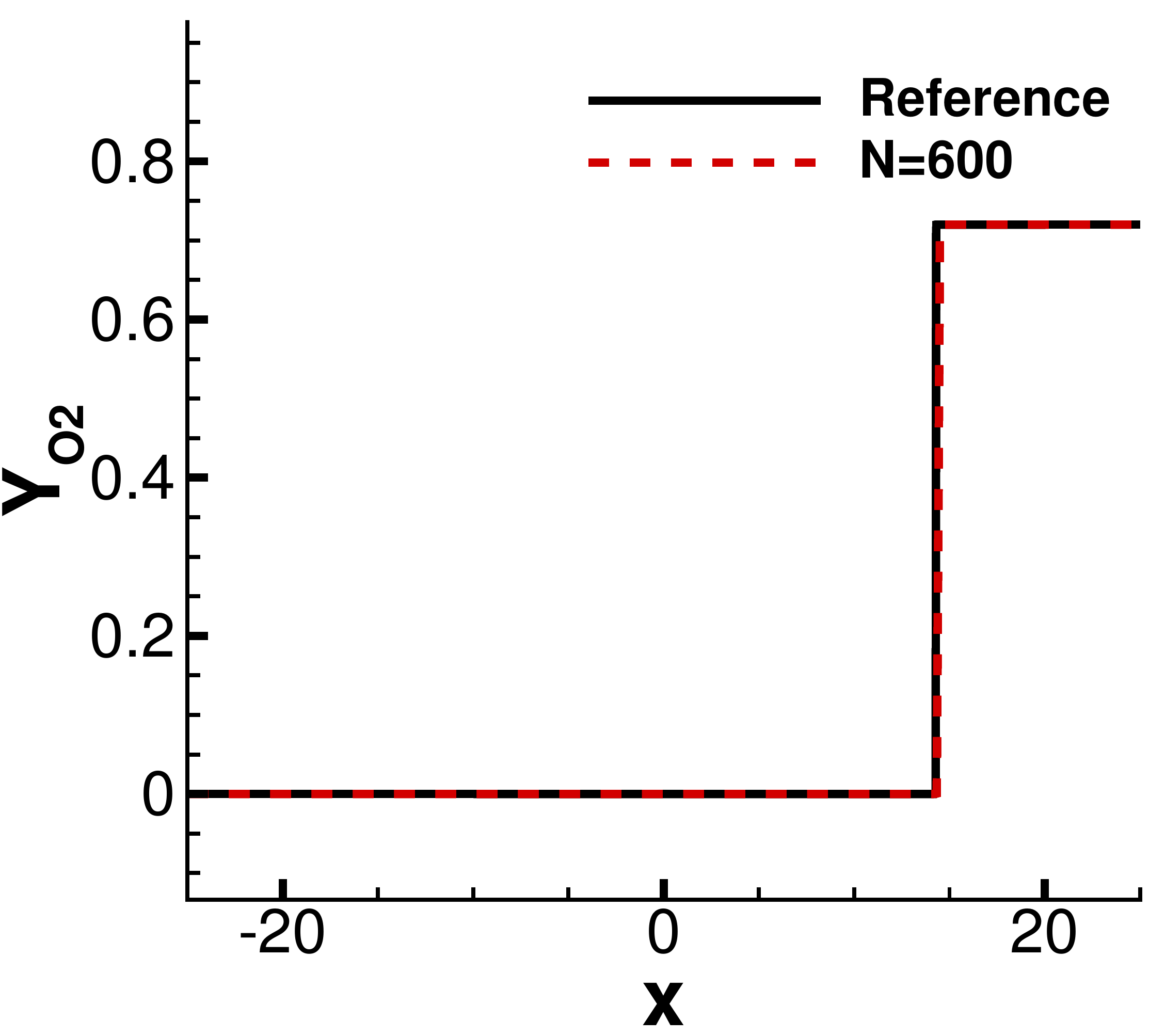}
    \end{subfigure}
\caption{Solutions of 1D detonation problem of $H_2$ at time $t = 3.0$. CFL = 0.2 and cell number $N = 600$.}
\label{fig:1dDetationH2O2}
 \end{figure}
 \subsection{2D Detonation of $H_2$\label{sec:2d5species}}
 A 2D problem \cite{wang2015high} with the same model set up as in Sec. \ref{sec:1d5species} is presented in this section. The computational domain is $[0,100]\times[0,25]$. Initial values are
\begin{equation}
    \left( \rho, u, p, Y_{H_2}, Y_{O_2}, Y_{OH}, Y_{H_2O}, Y_{N_2} \right) = 
        \left\{
            \begin{aligned}
                &\left( 2, 10, 40, 0, 0, 0.17, 0.63, 0.2\right)\,,&x\leq \xi(y)\,,\\
                &\left( 1, 0, 1, 0.08, 0.72, 0, 0, 0.2\right)\,,&x > \xi(y)\,,
            \end{aligned}
            \right.
\end{equation}
where
\begin{equation}
    \xi(y) = \left\{
        \begin{aligned}
            &5,& |y-12.5| > 7.5\,,\\
            &12.5-|y-12.5|,& |y-12.5| \leq 7.5\,.\\
        \end{aligned}
        \right.
\end{equation}
Computational domain of grid number $N_x \times N_y = 6000 \times 1500$ is used to fully resolve the detonation wave without spurious pressure oscillations ahead of the detonation wave. Inflow and outflow boundary conditions are applied on the left and right boundaries, respectively. The top and bottom boundaries are inviscid walls. 

Firstly, the solutions along the cross-section $y=12.5$ at $t = 2.0$ are presented in Fig. \ref{fig:2dDetationH2O2Line}. The detonation wave is captured sharply. At such a resolution, a weak shock wave can be observed after the detonation wave which is not found in the results of Du and Yang \cite{Du2019} and Wang et al. \cite{wang2015high}.

Secondly, the evolution of density contours from $t = 1$ to $t = 4$ is shown in Fig. \ref{fig:2dDetationH2O2}. An important feature of this problem is the existence of triple points, which travel along the detonation front in the transverse direction and reflect from the upper and lower boundaries, forming a cellular pattern. Behind the detonation front, there is a contact discontinuity presented as a jet and a strong shock, all of which moving right-forward.

 During the simulation, the proposed PMPRK scheme is able to preserve the positivity of all species unconditionally as expected.
 \begin{figure}
    \centering
    \begin{subfigure}[b]{0.4\textwidth}
        \includegraphics[width=\textwidth]{./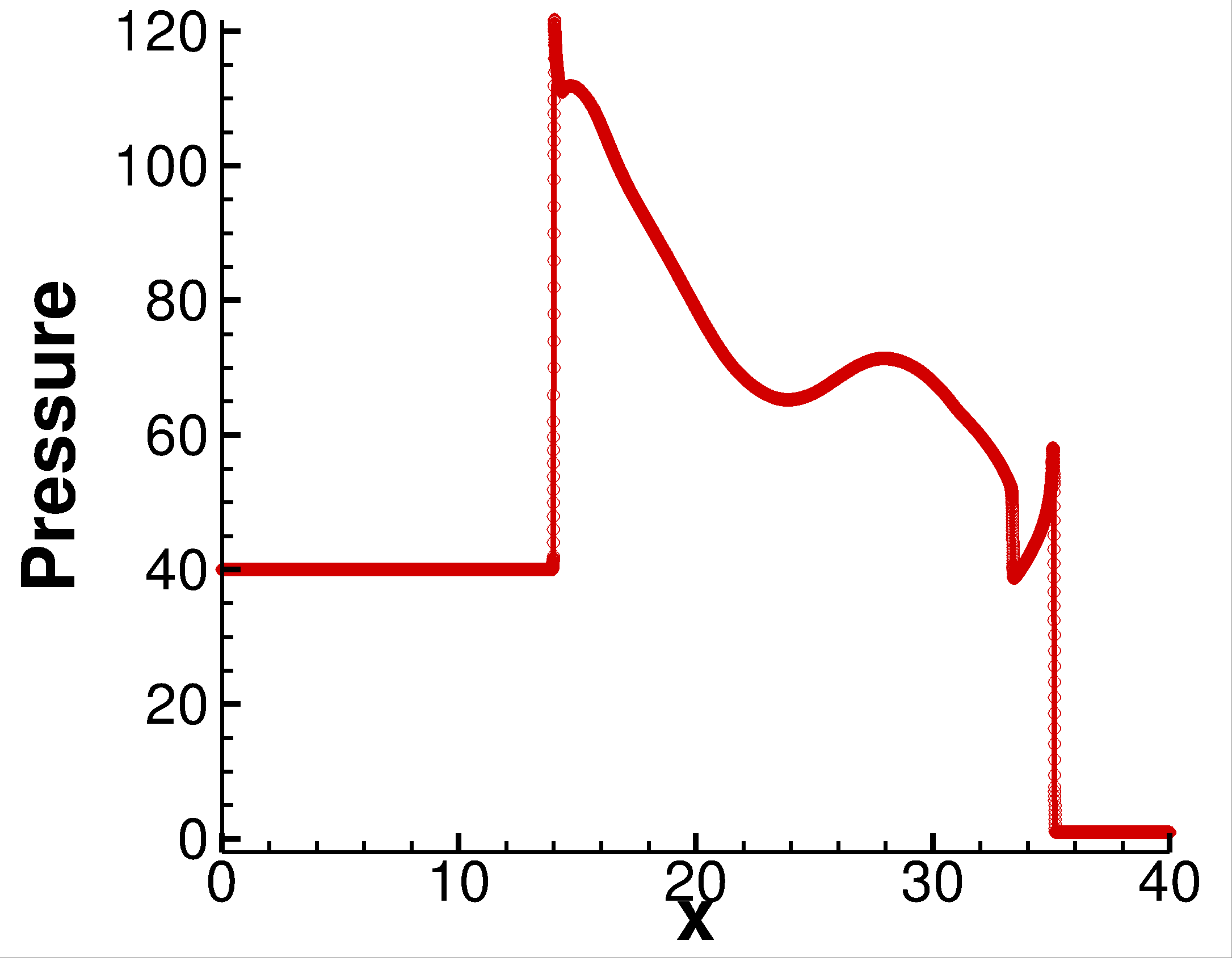}
    \end{subfigure}
    \begin{subfigure}[b]{0.4\textwidth}
        \includegraphics[width=\textwidth]{./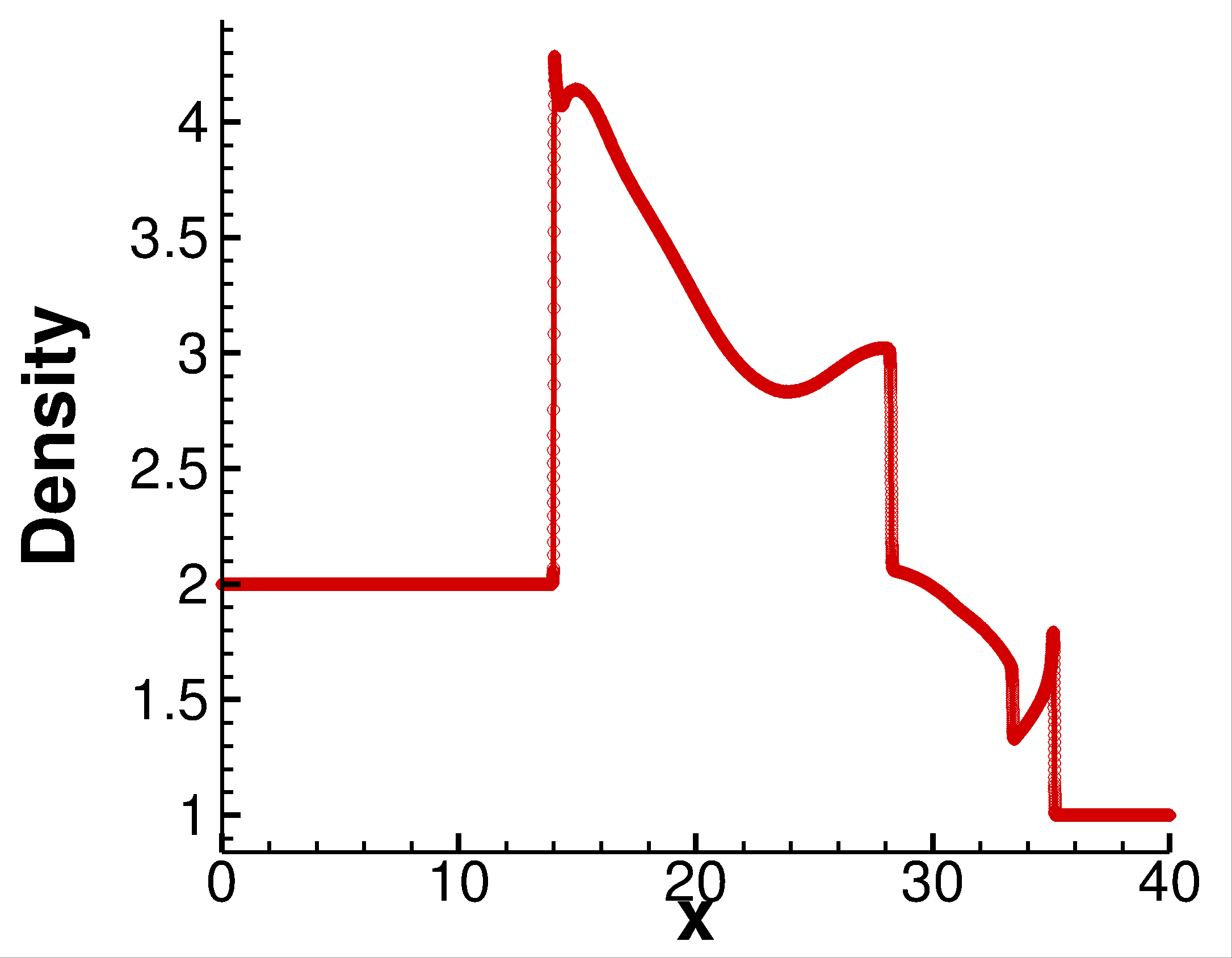}
    \end{subfigure}
    \begin{subfigure}[b]{0.4\textwidth}
        \includegraphics[width=\textwidth]{./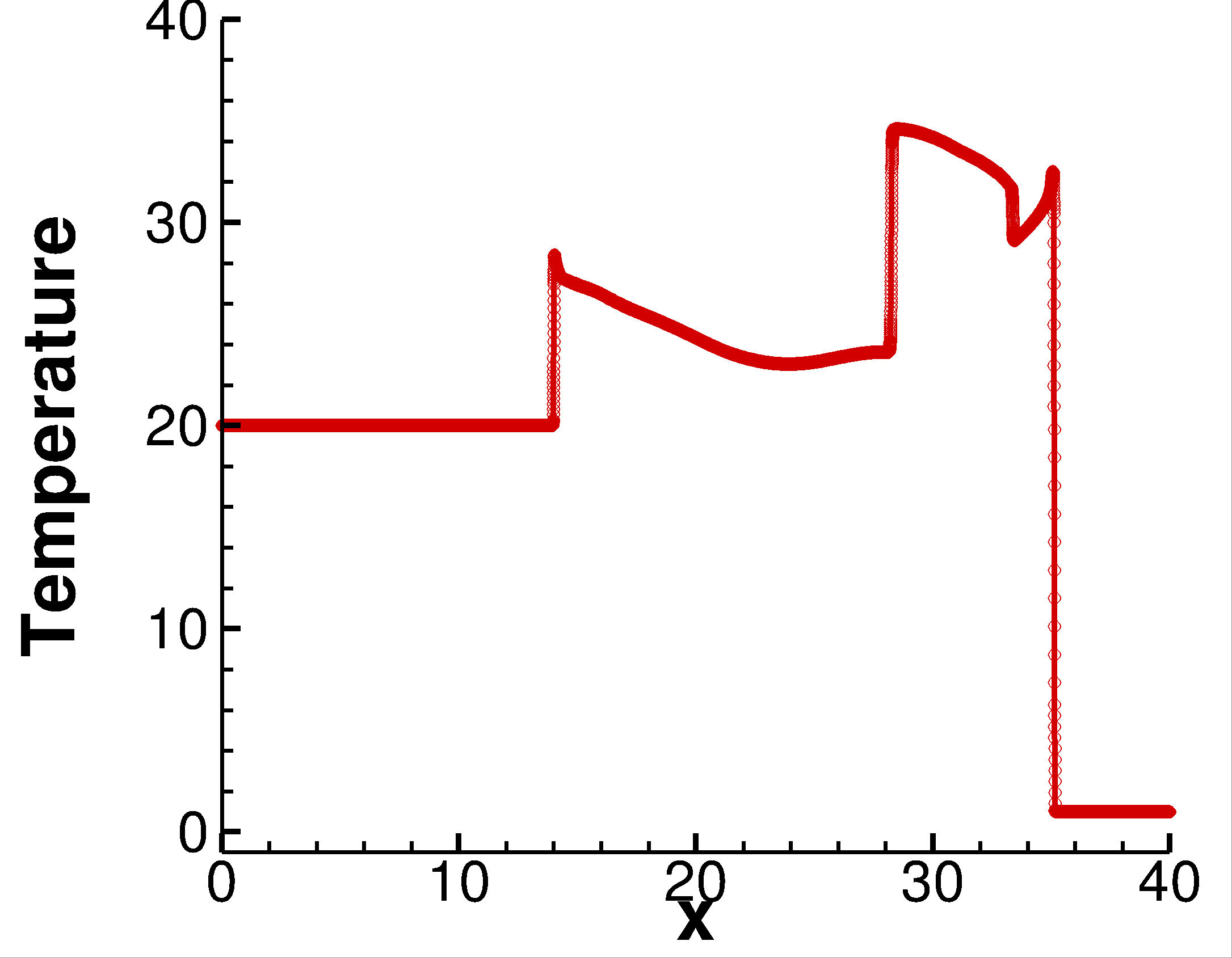}
    \end{subfigure}
    \begin{subfigure}[b]{0.4\textwidth}
        \includegraphics[width=\textwidth]{./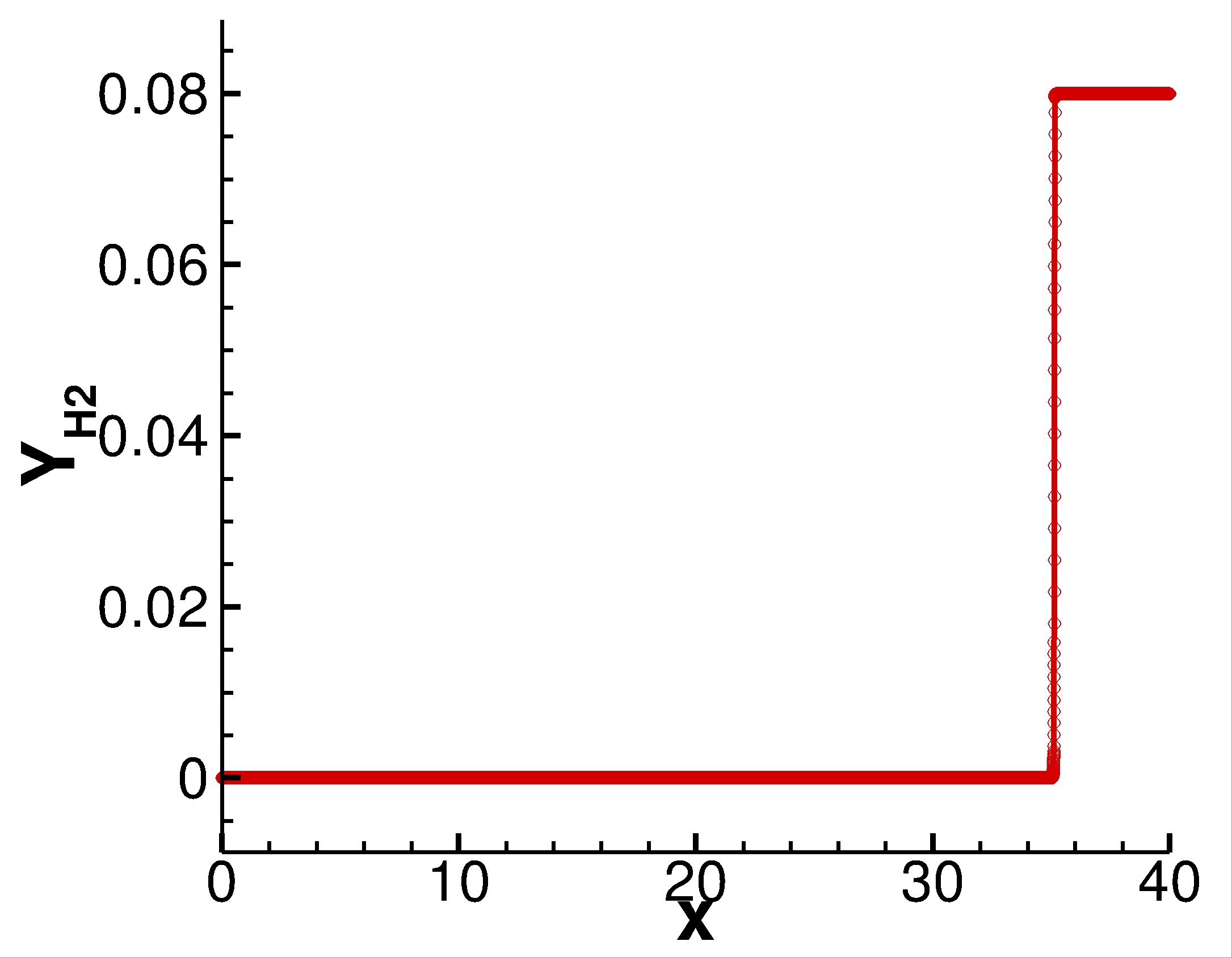}
    \end{subfigure}
    \begin{subfigure}[b]{0.4\textwidth}
        \includegraphics[width=\textwidth]{./2dDetonationH2O2Y1Line.png}
    \end{subfigure}
    \begin{subfigure}[b]{0.4\textwidth}
        \includegraphics[width=\textwidth]{./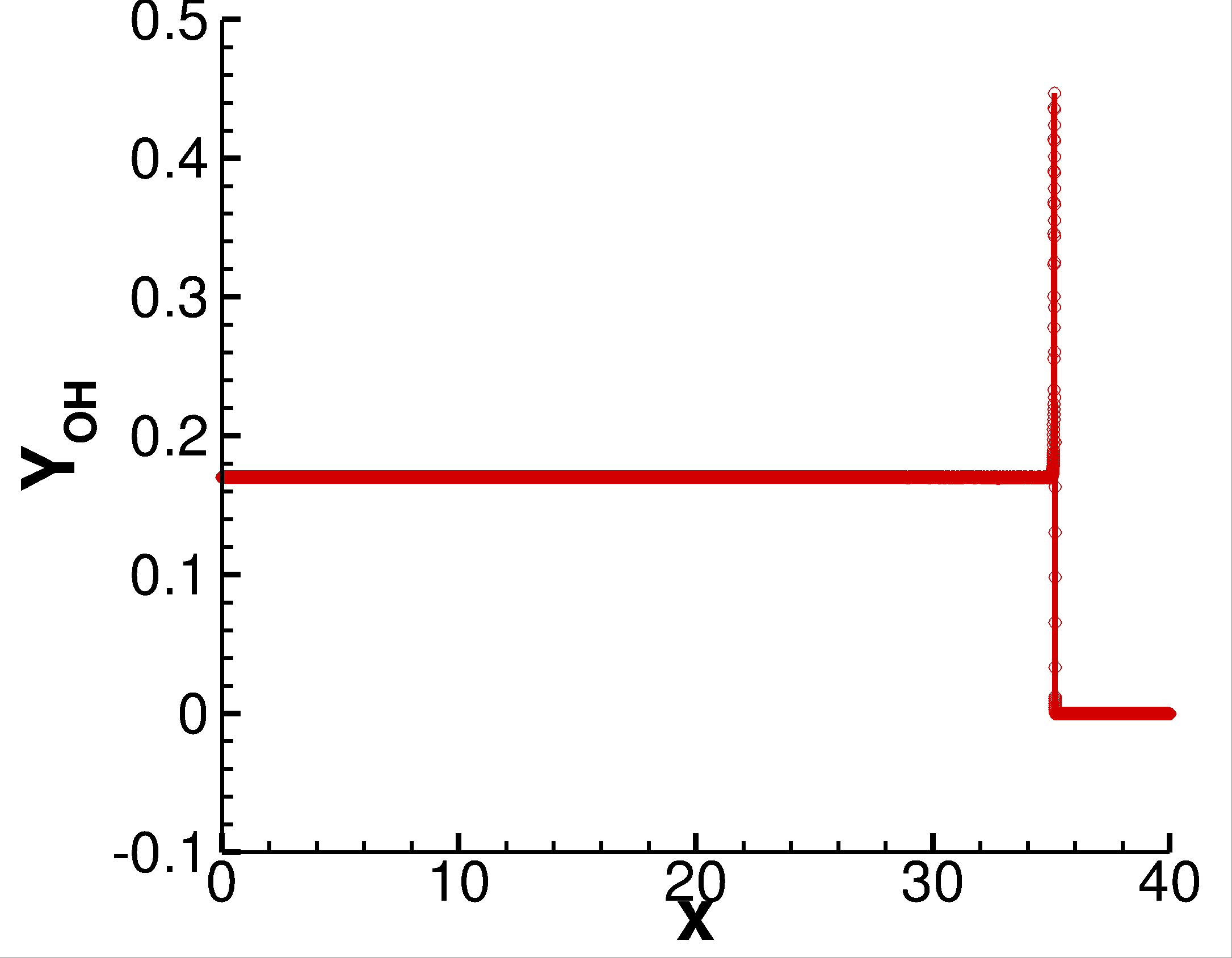}
    \end{subfigure}
    \begin{subfigure}[b]{0.4\textwidth}
        \includegraphics[width=\textwidth]{./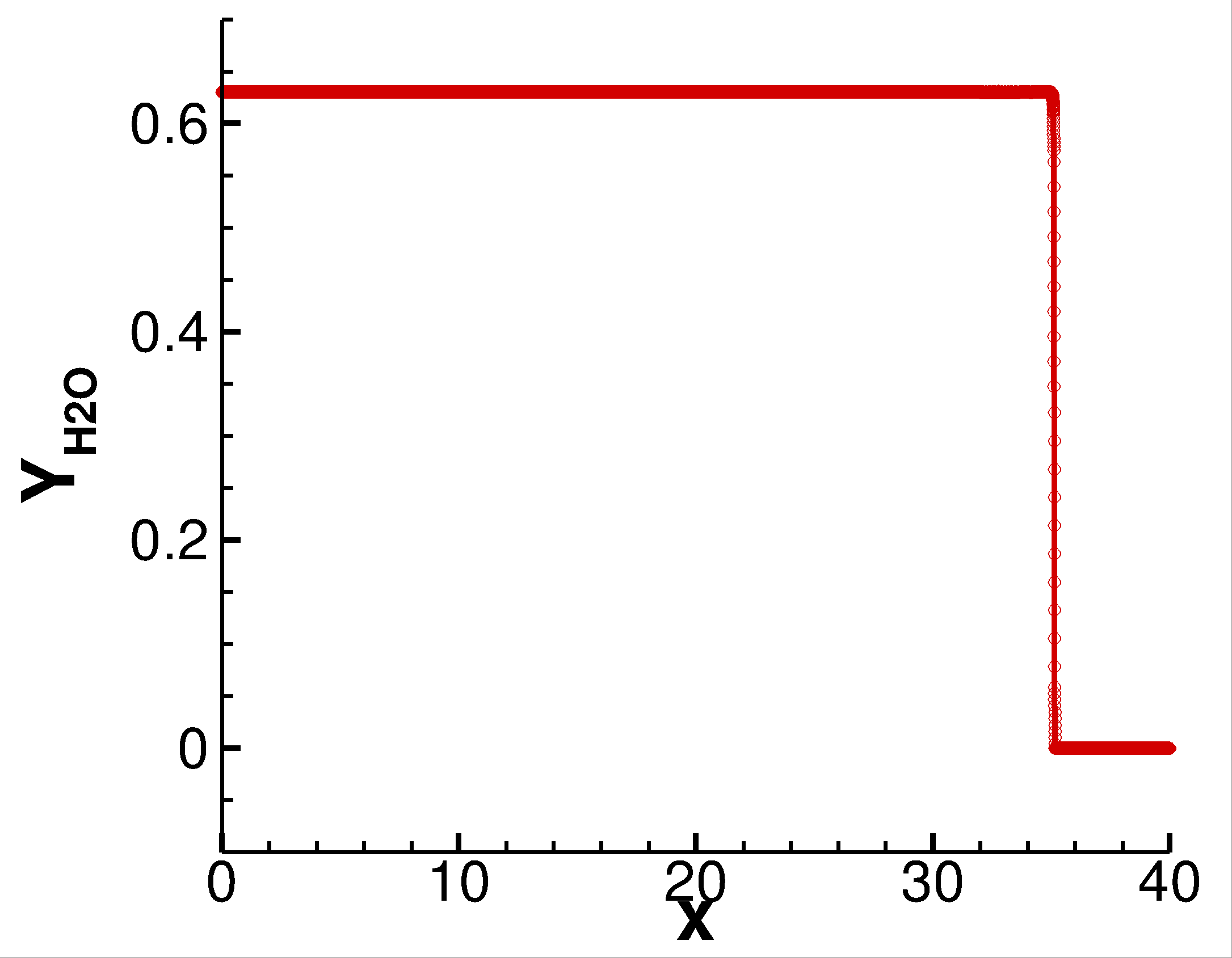}
    \end{subfigure}
    \begin{subfigure}[b]{0.4\textwidth}
        \includegraphics[width=\textwidth]{./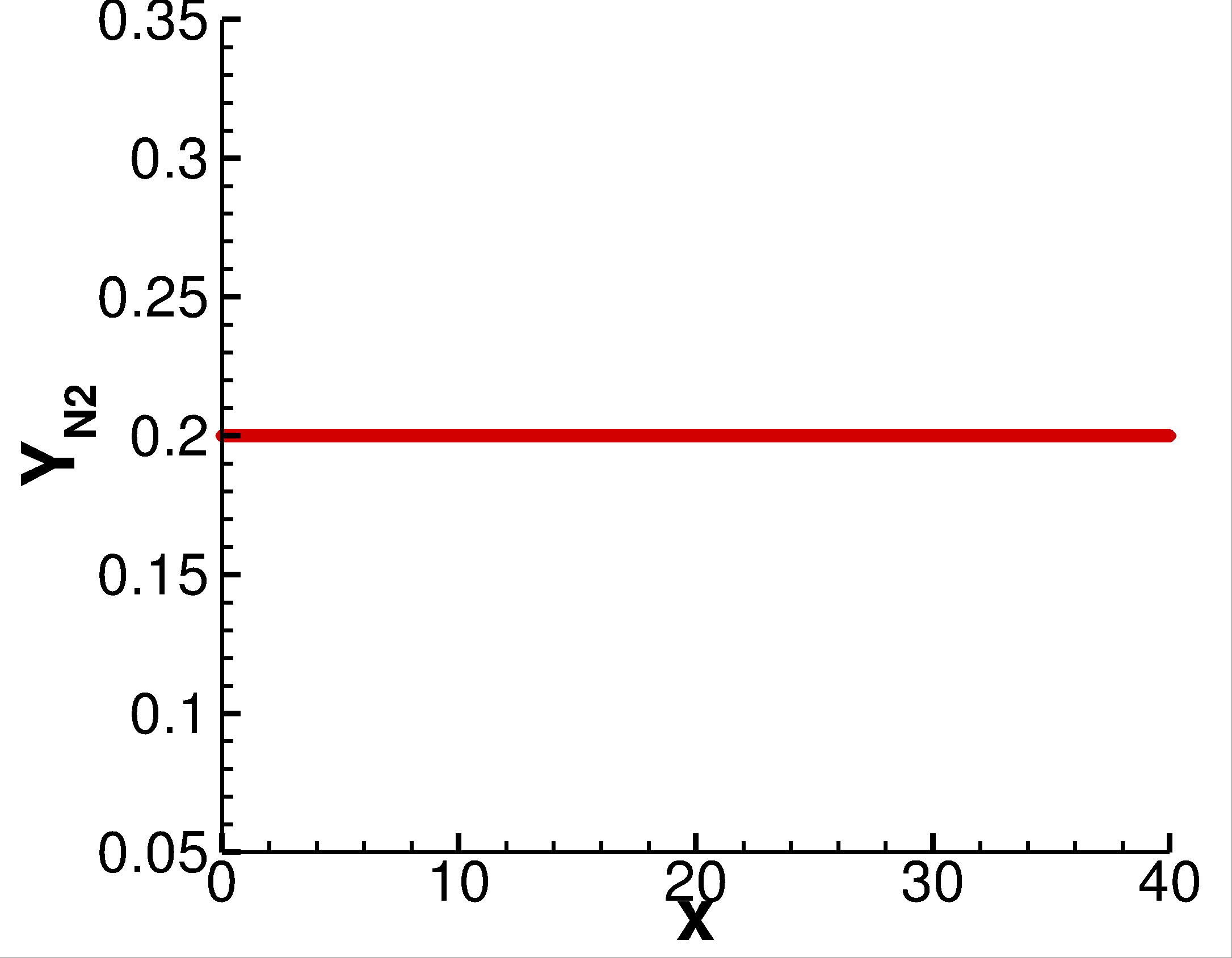}
    \end{subfigure}
\caption{Solutions along cross-section of $y=12.5$ of 2D detonation problem of $H_2$ at time $t = 2$. CFL = 0.4 and grid number is $N_x\times N_y = 6000\times1500$.}
\label{fig:2dDetationH2O2Line}
\end{figure}
 \begin{figure}
    \centering
    \begin{subfigure}[b]{0.4\textwidth}
        \includegraphics[width=\textwidth]{./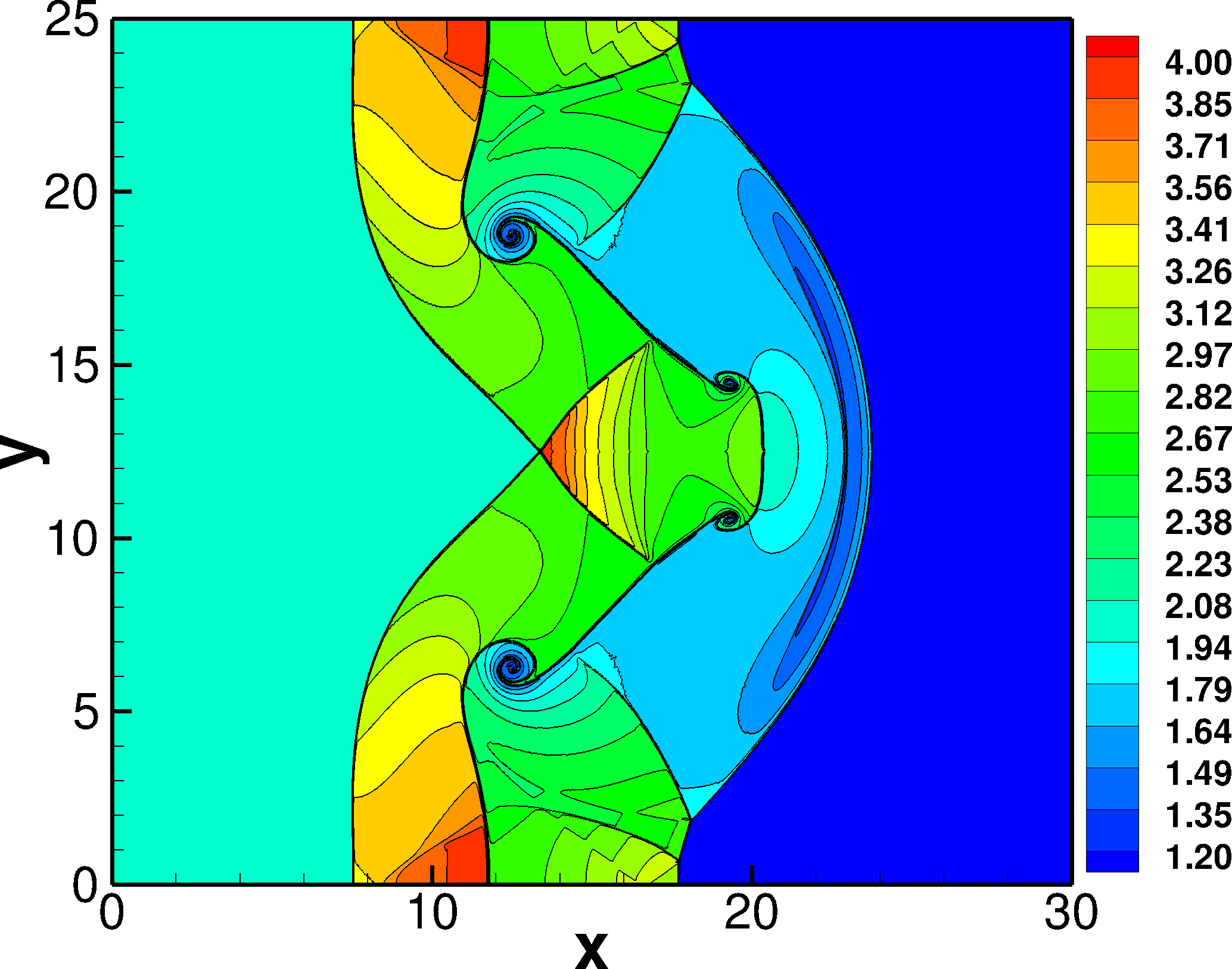}
        \caption{$t = 1$.}
    \end{subfigure}
    \begin{subfigure}[b]{0.4\textwidth}
        \includegraphics[width=\textwidth]{./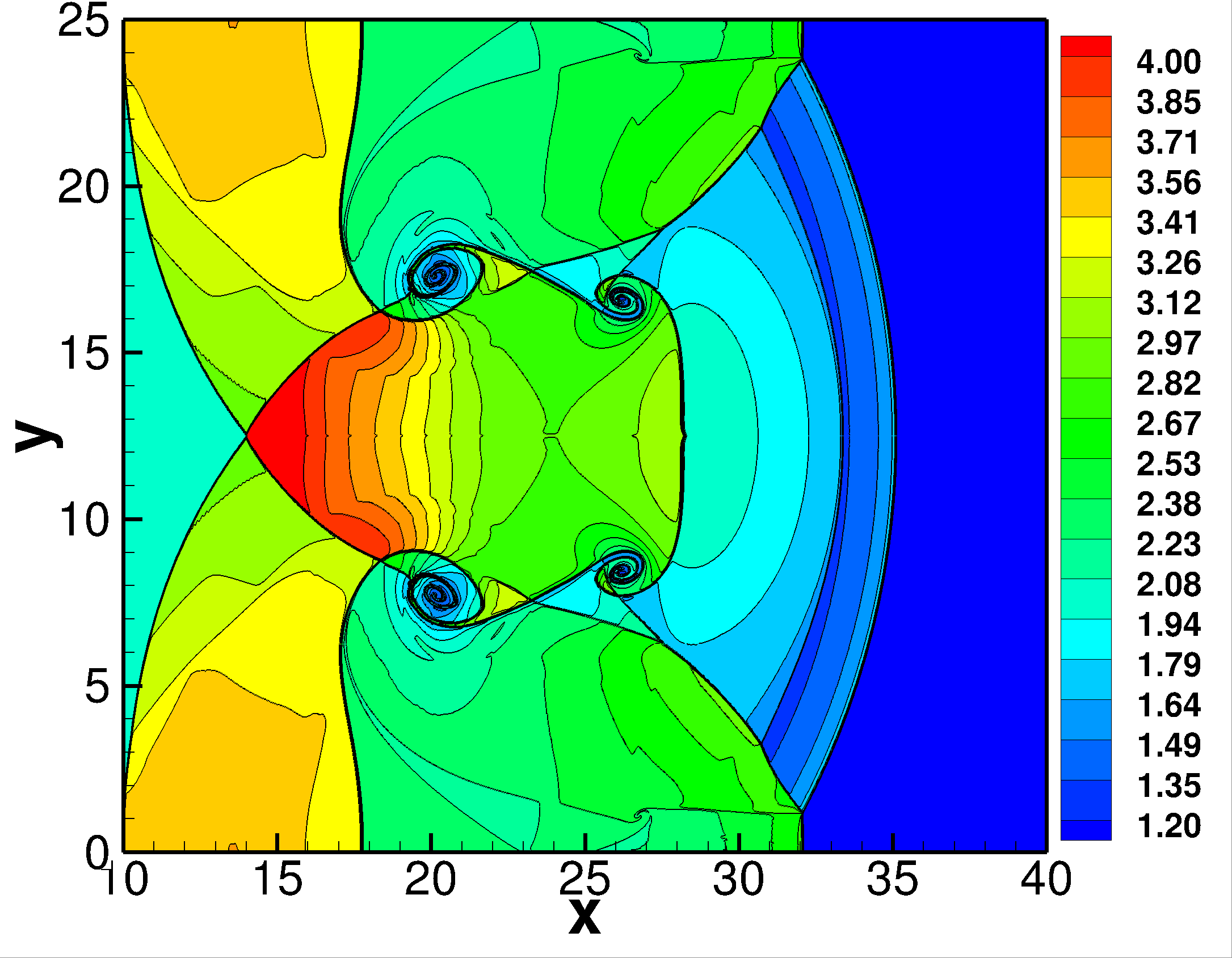}
        \caption{$t = 2$.}
    \end{subfigure}
    \begin{subfigure}[b]{0.4\textwidth}
        \includegraphics[width=\textwidth]{./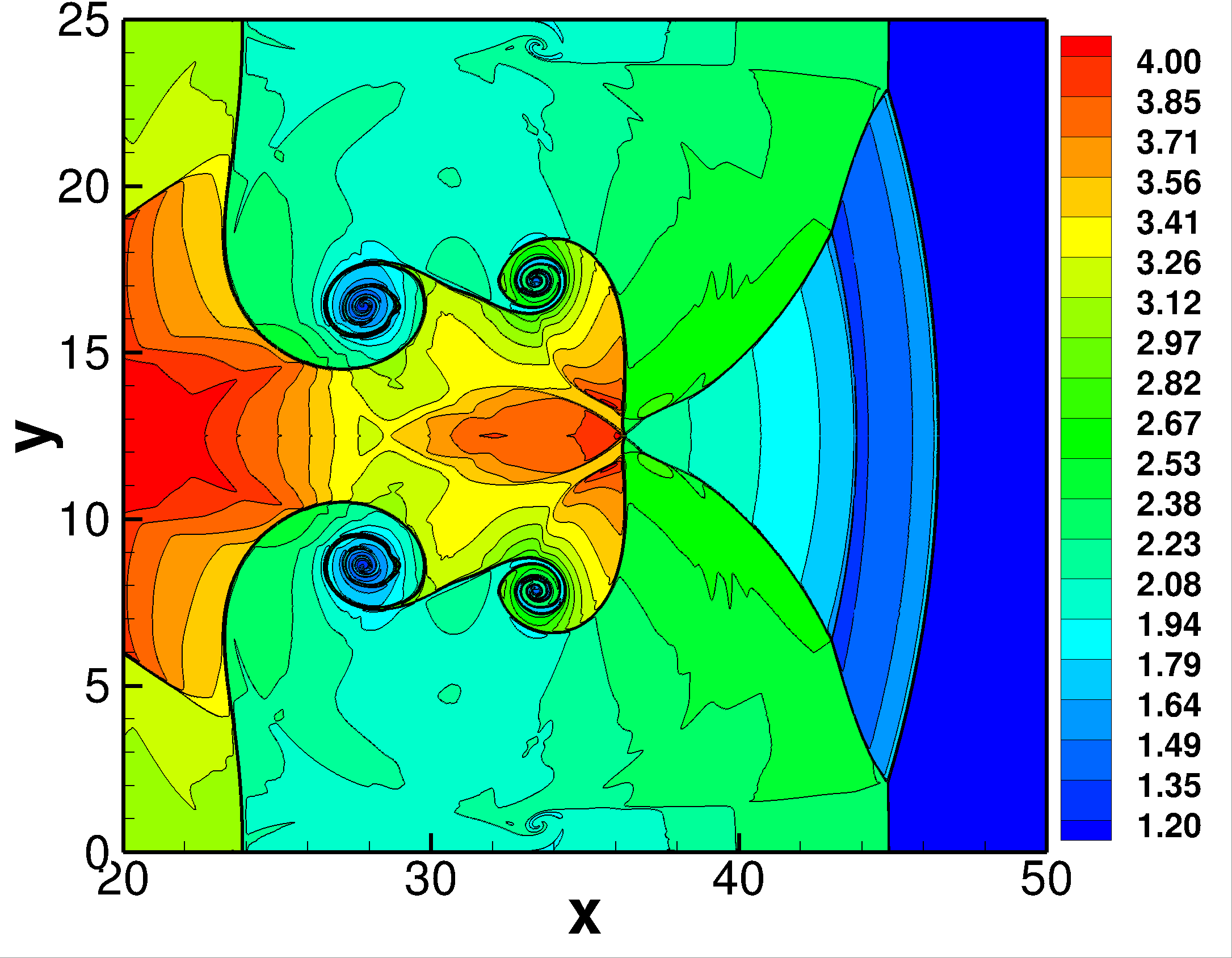}
        \caption{$t = 3$.}
    \end{subfigure}
    \begin{subfigure}[b]{0.4\textwidth}
        \includegraphics[width=\textwidth]{./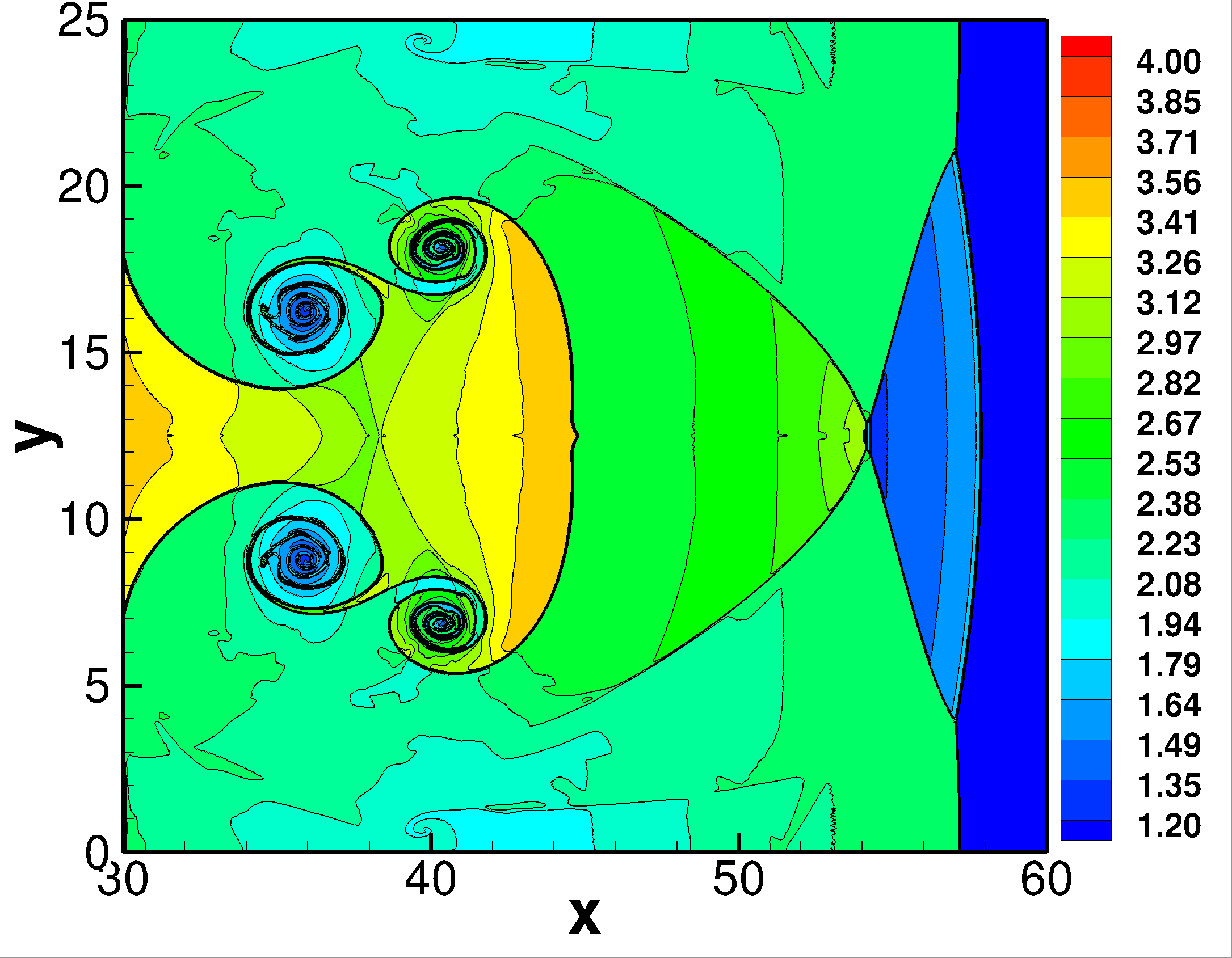}
        \caption{$t = 4$.}
    \end{subfigure}
\caption{Density contours of 2D detonation problem of $H_2$ from time $t = 1$ to $4$. CFL = 0.4 and grid number are $N_x\times N_y = 6000\times1500$.}
\label{fig:2dDetationH2O2}
\end{figure}

\subsection{1D Reversible Reactions of Oxygen\label{sec:1dOxygen}}
To illustrate that the proposed scheme is able to preserve the positivity of pressure, a problem with a couple of reversible reactions \cite{zhang2012positivity} is simulated. And one of two reactions is endothermic. Considering
\begin{equation}
    O_2\leftrightarrow 2O\,,
\end{equation}
with $N_2$ appears as a catalyst. The reaction rate is
\begin{equation}
    \begin{aligned}
    R_f & = k_f \left(\frac{\rho Y_{O_2}}{M_{O_2}}\right)\left(\frac{\rho Y_O}{M_O} +\frac{\rho Y_{O_2}}{M_{O_2}} +\frac{\rho Y_{N_2}}{M_{N_2}} \right)\,,\\
    R_b & = k_b \left(\frac{\rho Y_O}{M_O}\right)^2\left(\frac{\rho Y_O}{M_O} +\frac{\rho Y_{O_2}}{M_{O_2}} +\frac{\rho Y_{N_2}}{M_{N_2}} \right)\,,\\
    \end{aligned}
\end{equation}
where $ k_f = C T^{-2} e^{-E/T}$, $k_b = k_f/\mathrm{exp}(b_1+b_2 \mathrm{log} z + b_3 z + b_4 z^2 + b_5 z^5)$ and $z = 10000/T$. The parameters are $C = 2.9 \times 10^{17}$, $E = 59750$, $b_1 = 2.855$, $b_2 = 0.988$, $b_3 = -6.181$, $b_4 = -0.023$ and $b_5 = -0.001$. The mass per mole of the species are $M_{O} = 0.016$, $M_{O_2} = 0.032$ and $M_{N_2} = 0.028$. $h^0_{O} = 1.558\times10^7$ and $h^0_{O_2} = h^0_{N_2} = 0$. In this case, the specific heat ratios for each specie are set as
\[
    \gamma_{O} = \frac{5}{3}, \gamma_{O_2} = \gamma_{N_2} = \frac{7}{5}.
\] 
Pressure, temperature and internal energy are correlated according to the idea gas law, i.e., $p = \rho R_g T = \frac{\rho e}{\gamma-1}$, with the mixing rules defined as
\[
 \frac{1}{M} = \frac{Y_O}{M_{O}} + \frac{Y_{O_2}}{M_{O_2}} + \frac{Y_{N_2}}{M_{N_2}}\,,
\]
\[
 \frac{1}{\gamma-1} = \frac{1}{\gamma_{O}-1}\frac{M_O}{M} + \frac{1}{\gamma_{O_2}-1}\frac{M_{O_2}}{M} + \frac{1}{\gamma_{N_2}-1}\frac{M_{N_2}}{M},
\]
\[
    R_g = \frac{R}{M},
\]
and $R = 8.314\,\,\mathrm{J/\left(mol\cdot K\right)}$.

Computational domain is $[-1,1]$ consisting of 4000 uniformly distributed cells. Initial values are
\begin{equation}
    \left( u, p, \rho Y_{O}, \rho Y_{O_2}, \rho Y_{N_2} \right) = 
        \left\{
            \begin{aligned}
                &\left( 0, 1000, \rho Y_1^L, \rho Y_2^L, \rho Y_3^L \right)\,,&x\leq 0\,,\\
                &\left( 0, 1, \rho Y_1^R, \rho Y_2^R, \rho Y_3^R \right)\,,&x > 0\,,\\
            \end{aligned}
            \right.
\end{equation}
where 
\[
    \begin{aligned}
\rho Y_1^L = 5.251896311257204\times 10^{-5}, \quad \rho Y_2^L = 3.748071704863518\times10^{-5},\\
\rho Y_3^L = 2.962489471973072\times10^{-4}, \quad  \rho Y_1^R = 8.341661837019181\times 10^{-8},\\
\rho Y_2^R = 9.455418692098664\times 10^{-11},\quad \rho Y_3^R =2.748909430004963\times 10^{-7}\,.
    \end{aligned}
\]
Due to the formation of shock and rarefaction wave, the original equilibrium state is perturbed as shown by the mass fraction of $O$ and $O_2$ in Fig.\ref{fig:1dOxygen}. In Fig. \ref{fig:1dOxygen}, the reference values are solutions at the grid resolution of $\Delta x = 1/20000$, i.e., with grid number of $N = 40000$. All the structures are well-resolved and a converged solution is obtained when $N = 4000$. Particularly, the positivity of the pressure is preserved during the calculation, which is a significant improvement over the original Patankar scheme.
 \begin{figure}
    \centering
    \begin{subfigure}[b]{0.4\textwidth}
        \includegraphics[width=\textwidth]{./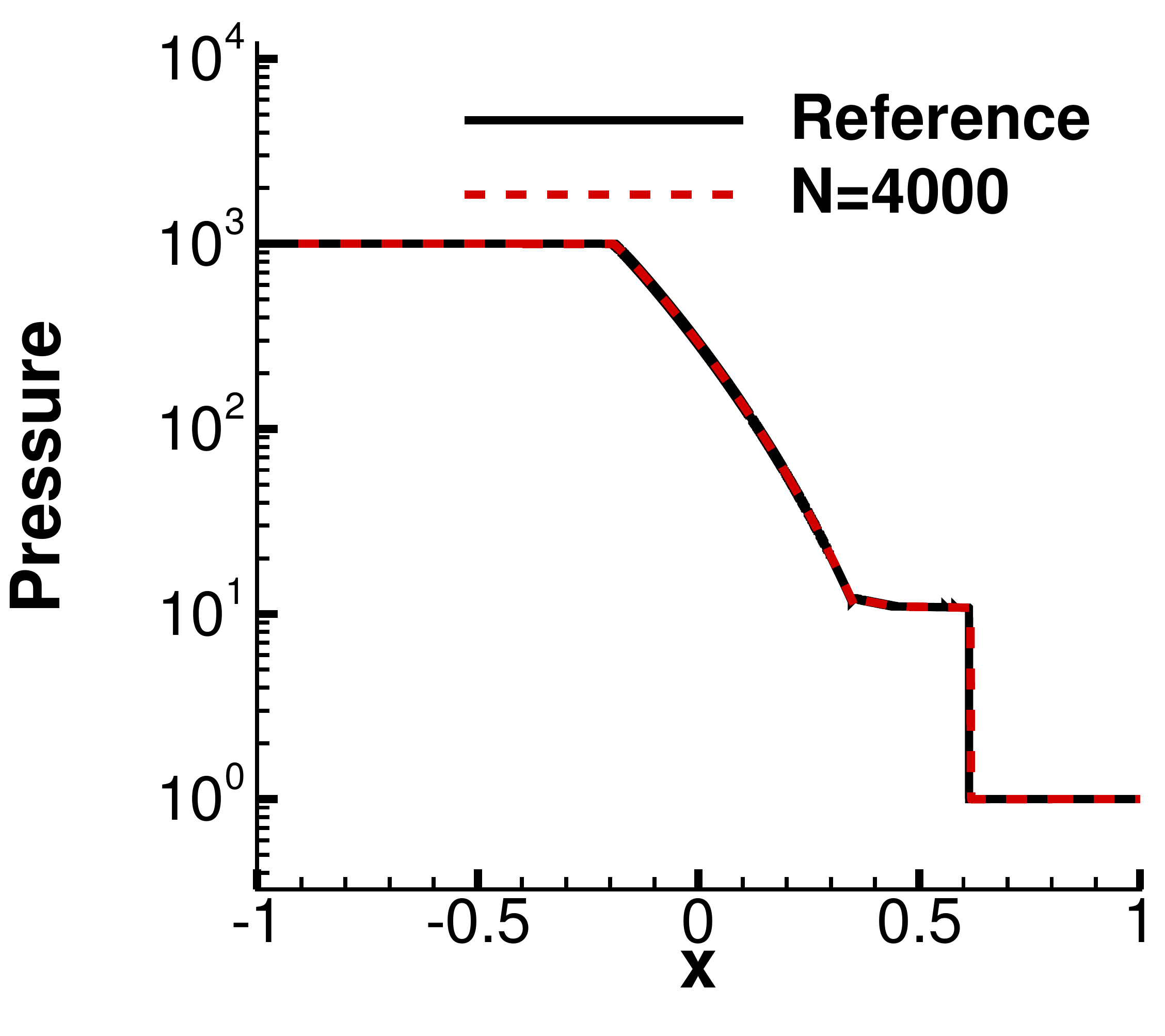}
    \end{subfigure}
    \begin{subfigure}[b]{0.4\textwidth}
        \includegraphics[width=\textwidth]{./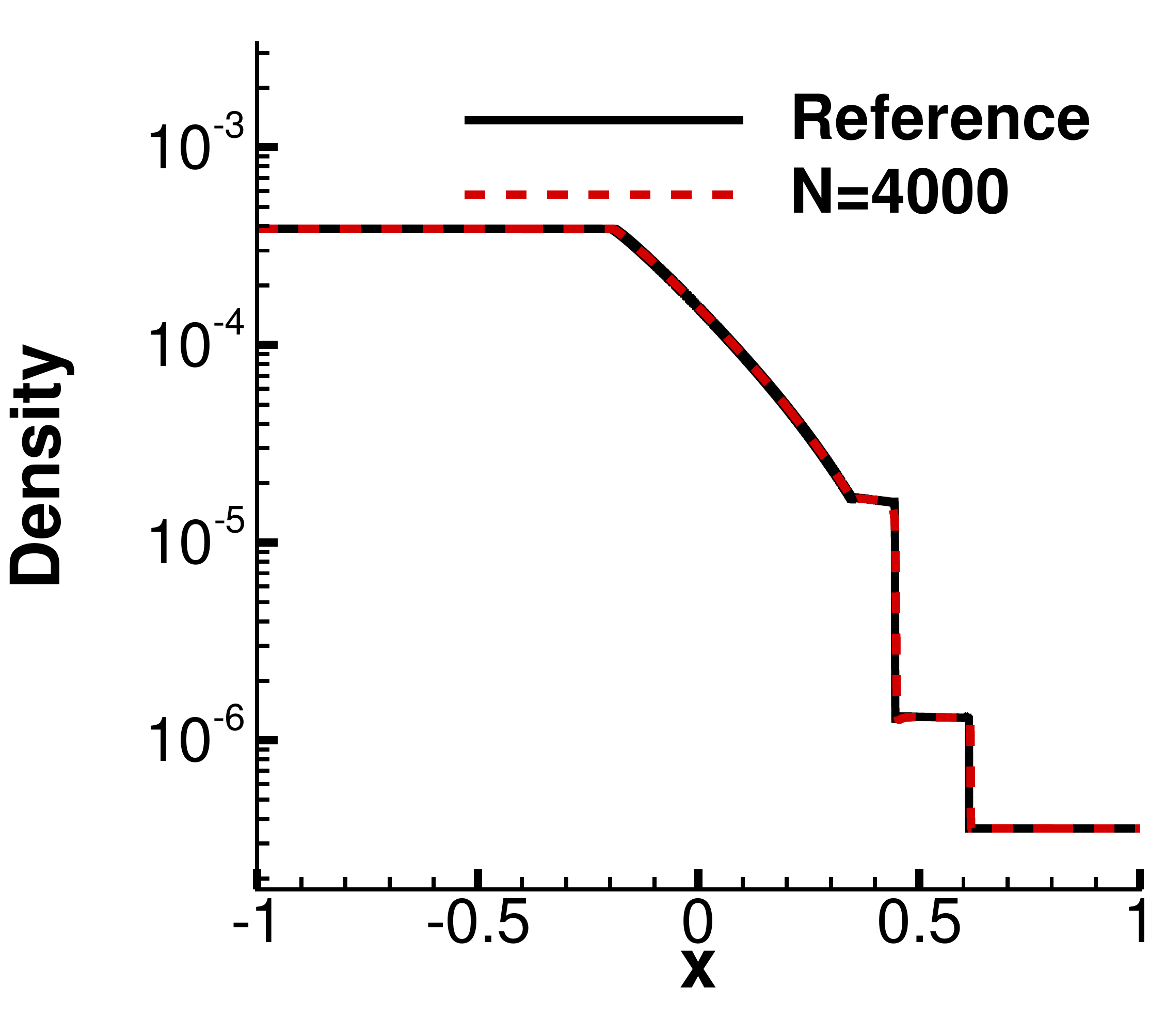}
    \end{subfigure}
    \begin{subfigure}[b]{0.4\textwidth}
        \includegraphics[width=\textwidth]{./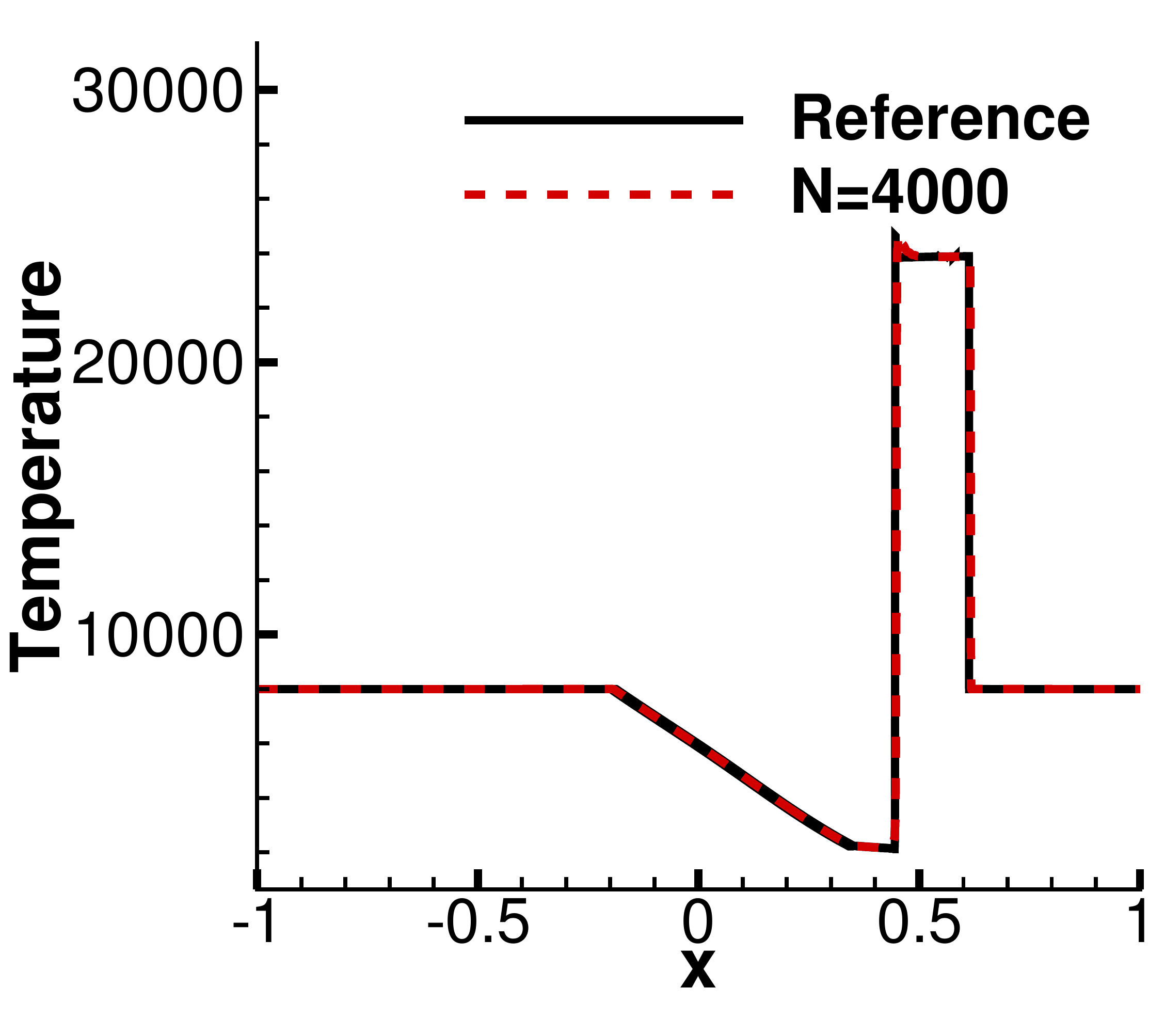}
    \end{subfigure}
    \begin{subfigure}[b]{0.4\textwidth}
        \includegraphics[width=\textwidth]{./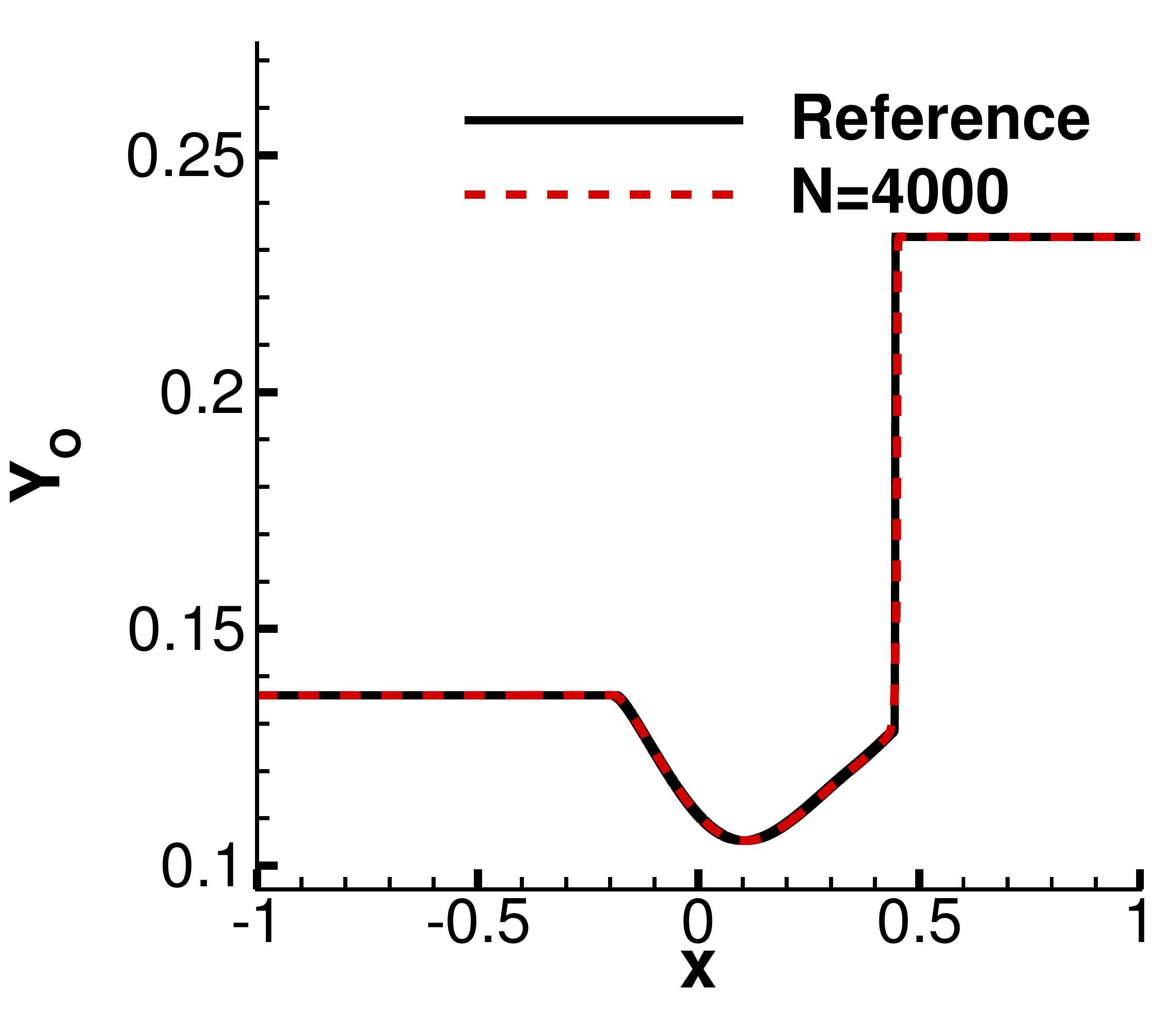}
    \end{subfigure}
    \begin{subfigure}[b]{0.4\textwidth}
        \includegraphics[width=\textwidth]{./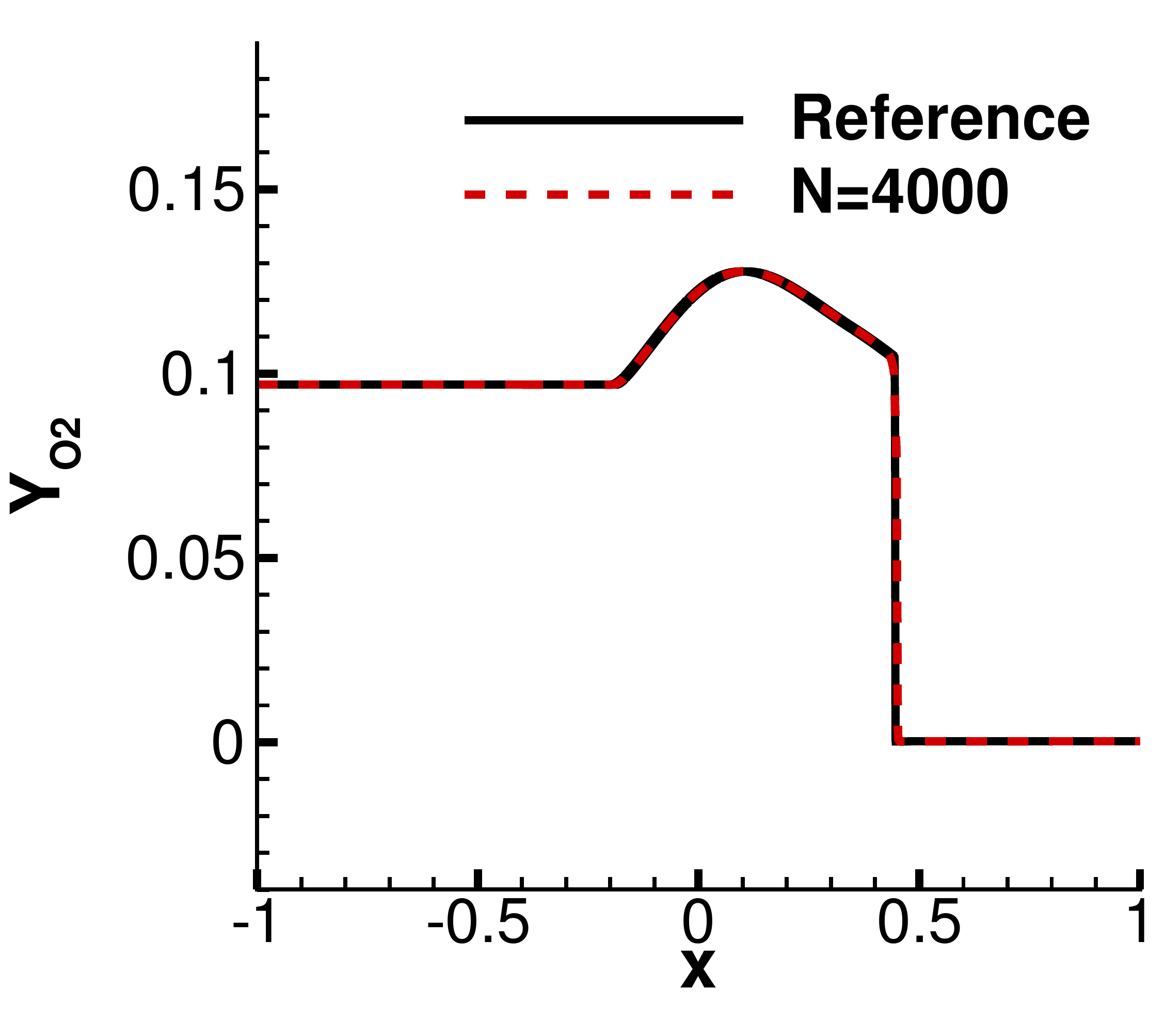}
    \end{subfigure}
    \begin{subfigure}[b]{0.4\textwidth}
        \includegraphics[width=\textwidth]{./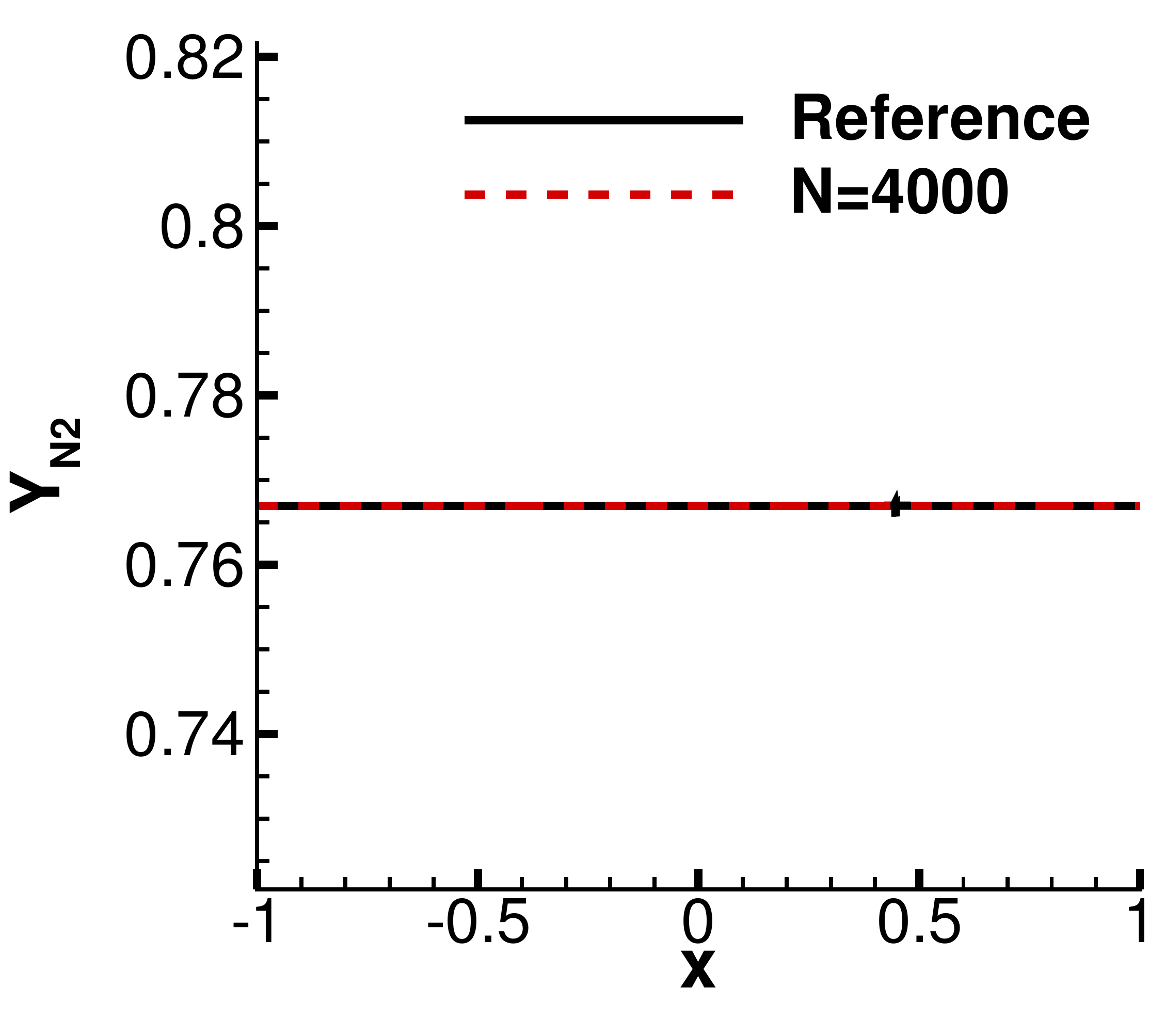}
    \end{subfigure}
\caption{Solutions of 1D reversible reactions of Oxygen at time $t = 0.0001$. CFL = 0.2 and cell number is $N = 4000$. 
The reference values are obtained with grid number of $N=40000$.}
\label{fig:1dOxygen}
\end{figure}
\section{Conclusions\label{sec:conclusion}}
In this study, the second order process based modified Patankar Runge-Kutta (PMPRK) schemes are proposed that maintain both the mass and mole in balance while retaining the positivity of density and pressure unconditionally for Euler equations with non-equilibrium reactive source terms. The sufficient and necessary conditions of the RK and Patankar coefficients are derived to fulfill the prior second order of accuracy. The positivity of the proposed scheme is supported by the rigorous proof under simplified conditions and numerical validations with large randomly generated samples for general conditions. Coupled with the finite volume method, the PMPRK is extended to and tested on the Euler equations with multi-reactions. Work on the schemes with third order of accuracy is ongoing and will be reported in the future.

\section*{Acknowledgment}
This work was supported by the computational resources allocated by the Ohio Supercomputer Center.

\appendix\label{sec:append}

\bibliography{MPRK_second_order}
\bibliographystyle{unsrt}

\end{document}